\documentclass[a4paper]{article}

\DeclareFontShape{OT1}{cmr}{m}{scit}{<->ssub*cmr/m/sc}{}
\usepackage{slantsc}

\usepackage[margin=1in]{geometry} %

\usepackage{amsmath}
\usepackage{amsthm}
\usepackage{amssymb}
\usepackage{graphicx}
\usepackage{xcolor}
\usepackage{mathrsfs} %
\usepackage[utf8]{inputenc}
\usepackage{adjustbox}
\usepackage{enumerate}
\usepackage{verbatim}

\usepackage{tikz}
\usetikzlibrary{shapes,calc,math,backgrounds,matrix}

\usepackage{hyperref}
\hypersetup{
	colorlinks,
	allcolors=blue,
	unicode,
	bookmarksopen,
	bookmarksdepth=2,
	pdftitle={The Complexity of Distance-$r$ Dominating Set Reconfiguration},
	pdfauthor={Niranka Banerjee, Duc A. Hoang},
	pdfkeywords={Distance-$r$ dominating set, Reconfiguration problem, Computational complexity, PSPACE-completeness, Polynomial-time algorithm}
}

\usepackage[capitalise,noabbrev]{cleveref} %

\usepackage[%
backend=biber, 
bibstyle=numeric, 
citestyle=numeric-comp,
maxcitenames=3,
maxbibnames=20,
sorting=ydnt
]{biblatex}
\addbibresource{refs.bib}

\theoremstyle{plain}
\newtheorem{theorem}{Theorem}[section]
\newtheorem{corollary}[theorem]{Corollary}
\newtheorem{lemma}[theorem]{Lemma}
\newtheorem{claim}{Claim}[theorem]

\newtheorem{proposition}[theorem]{Proposition}

\theoremstyle{definition}

\usepackage[linesnumbered,ruled,vlined,commentsnumbered]{algorithm2e} %

\usepackage[colorinlistoftodos]{todonotes}
\definecolor{done}{rgb}{0.55, 0.71, 0.0}
\definecolor{highpriority}{rgb}{0.82, 0.1, 0.26}
\definecolor{lowpriority}{rgb}{1.0, 0.75, 0.0}
\definecolor{mediumpriority}{rgb}{1.0, 0.49, 0.0}
\newcommand{\Mcomment}[3]{%
\ifsubmission%
	\else%
	{\color{#1}\bfseries\sffamily(#3)%
	}%
	\marginpar{\textcolor{#1}{\bfseries\sffamily #2}}%
	\fi%
}
\newcommand{\TaskNewPerson}[3]{%
	\expandafter\newcommand\csname #2\endcsname[1]{\Mcomment{#3}{#2}{##1}}%
	\expandafter\newcommand\csname Task#1\endcsname[2][mediumpriority]{\expandafter\TaskPerson[backgroundcolor=##1]{#1}{##2}}
	}

\newcommand*{\TaskPerson}[3][]{
\ifsubmission%
	\else%
	\todo[inline,#1]{\textbf{(#2)} #3}
	\fi%
}

\newcommand{\email}[1]{\href{mailto:#1}{\texttt{#1}}} %

\newcommand{\sfTS}{{\mathsf{TS}}} %
\newcommand{\sfTJ}{{\mathsf{TJ}}} %
\newcommand{\sfTAR}{{\mathsf{TAR}}} %
\newcommand{\sfR}{{\mathsf{R}}} %

\newcommand{\calG}{{\mathcal{G}}}
\newcommand{\calH}{{\mathcal{H}}}
\newcommand{\calS}{{\mathcal{S}}}

\newcommand{\ttPSPACE}{{\mathtt{PSPACE}}}
\newcommand{\ttNP}{{\mathtt{NP}}}
\newcommand{\ttP}{{\mathtt{P}}}

\newcommand{\dist}{\mathsf{dist}} %
\newcommand{\diam}{\mathsf{diam}} %
\newcommand{\opt}{\mathsf{OPT}} %
\newcommand{\reconf}[2][\sfR]{\overset{#2}{\longrightarrow}_{#1}} %

\newcommand{\ReviewRevise}[1]{{\color{red}#1}} %
\newcommand{\ExtraMod}[1]{{\color{blue}#1}} %
\renewcommand{\ReviewRevise}[1]{#1} %
\renewcommand{\ExtraMod}[1]{#1} %

\title{\textbf{The Complexity of Distance-$r$ Dominating Set Reconfiguration}\thanks{A preliminary version of this paper has appeared in the Proceedings of the 30th International Computing and Combinatorics Conference (COCOON 2024).}}
\author{Niranka Banerjee$^1$ \and Duc~A.~Hoang$^2$}
\date{
	$^1$Graduate School of Engineering, Mie University, Japan\\
              \email{banerjee@eng.mie-u.ac.jp}           \\%
	$^2$VNU University of Science,
	Vietnam National University,
	Hanoi, 
	Vietnam \\ \email{hoanganhduc@hus.edu.vn}
	\\[2ex]%
}

\begin{document}
\maketitle

\begin{abstract}
For a fixed integer $r \geq 1$, a \textit{distance-$r$ dominating set} of a graph $G = (V, E)$ is a subset $D \subseteq V$ such that every vertex in $V$ is within distance $r$ from some member of $D$. Given two distance-$r$ dominating sets $D_s$ and $D_t$ of $G$, the \textsc{Distance-$r$ Dominating Set Reconfiguration (D$r$DSR)} problem asks whether there exists a sequence of distance-$r$ dominating sets transforming $D_s$ into $D_t$, where each intermediate member is obtained from its predecessor by applying a given reconfiguration rule exactly once. The case when $r = 1$ has been extensively studied in the literature. We study \textsc{D$r$DSR} for $r \geq 2$ under two well-known reconfiguration rules: Token Jumping ($\mathsf{TJ}$), which involves replacing a member of the current D$r$DS with a non-member, and Token Sliding ($\mathsf{TS}$), which involves replacing a member of the current D$r$DS with an adjacent non-member.

For $r = 1$, it is known that under either $\mathsf{TS}$ or $\mathsf{TJ}$, the problem on split graphs is $\mathtt{PSPACE}$-complete. We show that for $r \geq 2$, the problem is in $\mathtt{P}$, resulting in an interesting complexity dichotomy. Along the way, we establish nontrivial bounds on the length of a shortest reconfiguration sequence on split graphs when $r = 2$, which may be of independent interest. Moreover, we provide observations that lead to polynomial-time algorithms for \textsc{D$r$DSR} for $r \geq 2$ on dually chordal graphs under $\mathsf{TJ}$ and on cographs under either $\mathsf{TS}$ or $\mathsf{TJ}$. We also design a linear-time algorithm for solving the problem under $\mathsf{TJ}$ on trees. On the negative side, we prove that \textsc{D$r$DSR} for $r \geq 1$ remains $\mathtt{PSPACE}$-complete on planar graphs of maximum degree three and bounded bandwidth, improving the degree bound of previously known results. We further demonstrate that the known $\mathtt{PSPACE}$-completeness results under $\mathsf{TS}$ and $\mathsf{TJ}$ on bipartite graphs and chordal graphs for $r = 1$ can be extended to $r \geq 2$.

\noindent\textbf{Keywords:} Distance-$r$ dominating set, Reconfiguration problem, Computational complexity, PSPACE-completeness, Polynomial-time algorithm

\noindent\textbf{MSC 2020:} 05C85
\end{abstract}

\section{Introduction}
\label{sec:intro}

\subsubsection*{Reconfiguration Problems.}
\textit{Reconfiguration problems} have independently emerged in various areas of computer science over the past few decades, including recreational mathematics (e.g., games and puzzles), computational geometry (e.g., flip graphs of triangulations), constraint satisfaction (e.g., solution space of Boolean formulas), and even quantum complexity theory (e.g., ground state connectivity), among others~\cite{Heuvel13,Nishimura18,MynhardtN19,BousquetMNS24}.	
In a \textit{reconfiguration variant} of a computational problem (e.g., \textsc{Satisfiability}, \textsc{Independent Set}, \textsc{Dominating Set}, \textsc{Vertex-Coloring}, etc.), two \textit{feasible solutions} (e.g., satisfying truth assignments, independent sets, dominating sets, proper vertex-colorings, etc.) $S$ and $T$ are given along with a \textit{reconfiguration rule} that describes how to slightly modify one feasible solution to obtain a new one. 
The question is whether one can transform/reconfigure $S$ into $T$ via a sequence of feasible solutions such that each intermediate member is obtained from its predecessor by applying the given rule exactly once. Such a sequence, if it exists, is called a \textit{reconfiguration sequence}.

\subsubsection*{Distance Domination.}
In 1975, Meir and Moon~\cite{MeirM75} introduced the concept of a \textit{distance-$r$ dominating set} (which they called an ``$r$-covering''), combining the notions of ``distance'' and ``domination'' in graphs, where $r \geq 1$ is a fixed integer.
Given a graph $G$ and a fixed positive integer $r$, a \textit{distance-$r$ dominating set} (D$r$DS)—also known as an \textit{$r$-hop dominating set} or \textit{$r$-basis}—is a vertex subset $D$ such that every vertex of $G$ is within distance $r$ of at least one member of $D$. In particular, when $r = 1$, a D$1$DS is simply a \textit{dominating set} of $G$.

Given a graph $G$ and a positive integer $k$, the \textsc{Distance-$r$ Dominating Set} problem seeks to determine whether there exists a D$r$DS of $G$ with size at most $k$. 
It has been shown that the \textsc{Distance-$r$ Dominating Set} problem remains $\ttNP$-complete even when restricted to bipartite graphs or chordal graphs with diameter $2r + 1$~\cite{ChangN84}. 
Since Meir and Moon's initial work, the concept of distance domination in graphs has been extensively explored from various perspectives. 
For more information, readers are referred to Henning's survey~\cite{Henning20}.

\subsubsection*{Dominating Set Reconfiguration.}
Dominating set reconfiguration has been widely investigated in the literature from both algorithmic and graph-theoretical points of view. 
Consider placing a token on every vertex in a dominating set of a graph~$G$, ensuring that no vertex holds more than one token.
In the context of reconfiguring dominating sets, the following rules have been studied:
\begin{itemize}
	\item \textbf{Token Sliding ($\sfTS$):} A token may be relocated to one of its unoccupied neighboring vertices, as long as the new token set remains a dominating set.
	\item \textbf{Token Jumping ($\sfTJ$):} A token may be moved to any unoccupied vertex, provided that the resulting token set still forms a dominating set.
	\item \textbf{Token Addition/Removal ($\sfTAR(k)$):} A token can be added to or removed from the current token set, as long as the resulting set is a dominating set of size not exceeding a given threshold $k \geq 0$.
\end{itemize}
Given a graph~$G$, a fixed integer~$r \geq 1$, two dominating sets~$D_s$ and~$D_t$ of~$G$, and a reconfiguration rule~$\sfR \in \{\sfTS, \sfTJ, \sfTAR\}$, the \textsc{Dominating Set Reconfiguration (DSR)} problem under~$\sfR$ asks whether there exists a sequence of dominating sets transforming~$D_s$ into~$D_t$, where each intermediate set is obtained from the previous one by applying~$\sfR$ exactly once.
\ReviewRevise{We remark that if two dominating sets (or in general, any two vertex subsets) $D_s$ and $D_t$ can be transformed into each other under $\sfTS$ or $\sfTJ$, then they must have the same size.}

We briefly summarize some well-known algorithmic results and direct readers to~\cite{MynhardtN19} for recent progress from a graph-theoretic perspective.
Haddadan et al.~\cite{HaddadanIMNOST16} initially explored the computational complexity of \textsc{DSR} under~$\sfTAR$, demonstrating that the problem is~$\ttPSPACE$-complete for planar graphs with maximum degree six, bounded bandwidth graphs, split graphs, and bipartite graphs. 
On the positive side, they developed polynomial-time algorithms for solving \textsc{DSR} under~$\sfTAR$ on cographs, forests, and interval graphs. 
Bonamy et al.~\cite{BonamyDO21} observed that under certain constraints, the~$\sfTJ$ and~$\sfTAR$ rules are equivalent, and hence the aforementioned results hold for \textsc{DSR} under~$\sfTJ$ as well. 
Bonamy et al.~\cite{BonamyDO21} were the first to examine the computational complexity of \textsc{DSR} under~$\sfTS$, showing that the hardness results from Haddadan et al.~\cite{HaddadanIMNOST16} also apply under~$\sfTS$.
Conversely, Bonamy et al.~\cite{BonamyDO21} devised polynomial-time algorithms for solving \textsc{DSR} under~$\sfTS$ on cographs and dually chordal graphs (which include trees and interval graphs). 
Subsequently, Bousquet and Joffard~\cite{BousquetJ21} demonstrated that \textsc{DSR} under~$\sfTS$ is~$\ttPSPACE$-complete on circle graphs but solvable in polynomial time on circular-arc graphs, thereby addressing some open questions posed by Bonamy et al.~\cite{BonamyDO21}. 
More recently, K\v{r}i\v{s}t'an and Svoboda~\cite{KristanS23} strengthened the positive results of Bonamy et al.~\cite{BonamyDO21} by providing polynomial-time algorithms to find a \textit{shortest} reconfiguration sequence, if one exists, between two given dominating sets under~$\sfTS$, when the input graph is either a tree or an interval graph. 
However, their methods do not generalize to dually chordal graphs.

Mouawad et al.~\cite{MouawadN0SS17} initiated a systematic study of the parameterized complexity of various reconfiguration problems, including \textsc{DSR}. 
Two natural parameters are the number of tokens~$k$ and the length of a reconfiguration sequence~$\ell$. In~\cite{MouawadN0SS17}, Mouawad~et~al.\ proved that \textsc{DSR} under~$\sfTAR$ on general graphs is~$\mathtt{W[1]}$-hard when parameterized by~$k$, and~$\mathtt{W[2]}$-hard when parameterized by~$k + \ell$. 
Lokshtanov et al.~\cite{LokshtanovMPRS18}, considering graphs that exclude~$K_{d,d}$ as a subgraph for any fixed constant~$d$ (which includes bounded degeneracy and nowhere dense graphs), designed an~$\mathtt{FPT}$ algorithm for solving the problem when parameterized by~$k$. 
When parameterized solely by~$\ell$, it was noted in~\cite{BousquetMNS24} that the problem is fixed-parameter tractable on any graph class where first-order model-checking is fixed-parameter tractable. 
For further details, we refer readers to~\cite{LokshtanovMPRS18,BousquetMNS24} and the references therein.

\subsubsection*{Our Problem.}
One can define \textsc{Distance-$r$ Dominating Set Reconfiguration (D$r$DSR)} under $\sfR$ for any fixed integer $r \geq 2$, similarly to \textsc{DSR}. 
As far as we know, the classical complexity of \textsc{D$r$DSR} for $r \geq 2$ has not been explored yet. 
From the parameterized complexity perspective, Siebertz~\cite{Siebertz18} examined \textsc{D$r$DSR} under $\sfTAR$, parameterized by $k$, and demonstrated that for some fixed constant $r$, the problem is $\mathtt{W[2]}$-hard on somewhere dense graphs which are also closed under taking subgraphs. 
Conversely, Siebertz showed that the problem lies in $\mathtt{FPT}$ when restricted to nowhere dense graphs. 
From a graph-theoretical standpoint, DeVos et al.~\cite{DeVosDJS20} introduced the \textit{$\gamma_r$-graph} of a graph $G$—a reconfiguration graph whose nodes represent \textit{minimum} distance-$r$ dominating sets of $G$—and established several results concerning its realizability. 
In this work, we investigate \textsc{D$r$DSR} ($r \geq 2$) under $\sfTS$ and $\sfTJ$ from the classical complexity perspective, aiming to better understand the boundary between ``hard'' and ``easy'' instances on various graph classes.

\subsubsection*{Our Results.}  

\begin{figure}[t]
	\centering
	\includegraphics[width=0.9\textwidth]{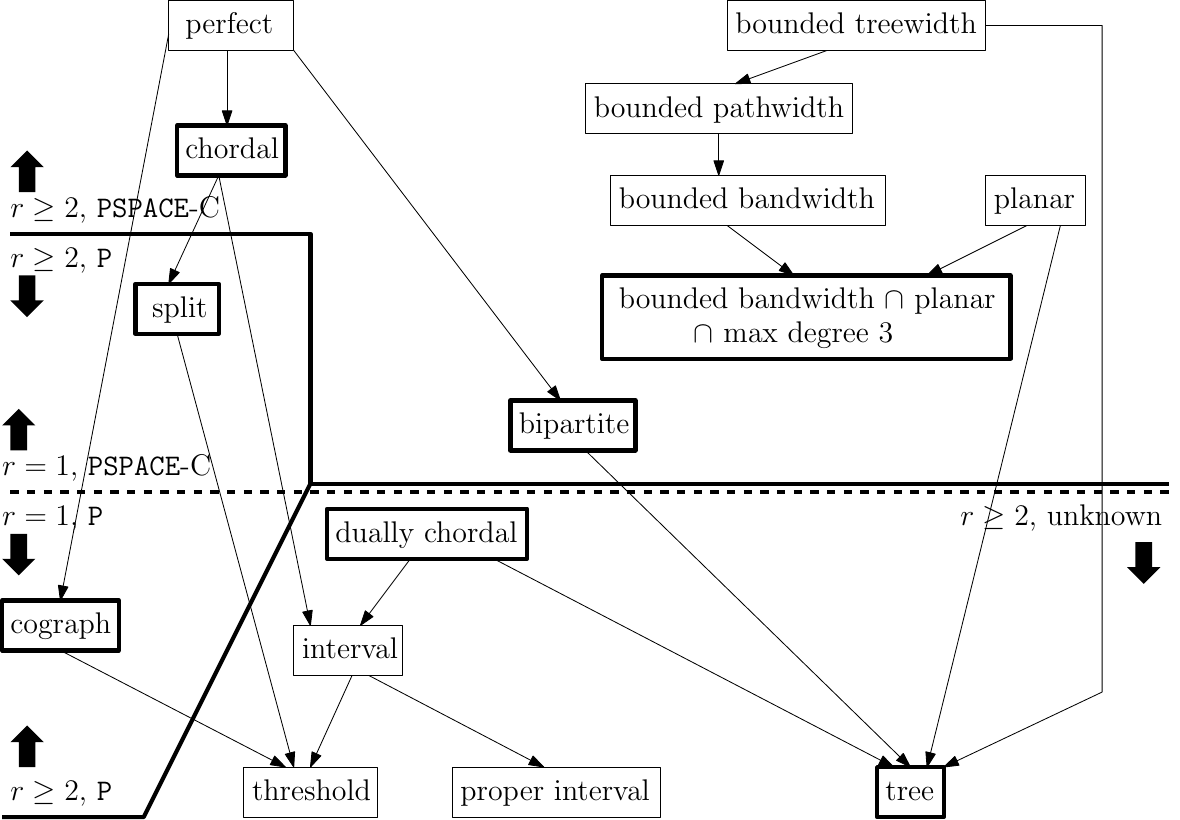}
	\caption{The complexity status of \textsc{D$r$DSR} for fixed $r \geq 1$ on different graph classes under $\sfTS$. Our results are for $r \geq 2$. Each arrow from graph class $A$ to graph class $B$ indicates that $B$ is properly included in $A$.}
	\label{fig:graph-classes-TS}
\end{figure}

\begin{figure}[t]
	\centering
	\includegraphics[width=0.9\textwidth]{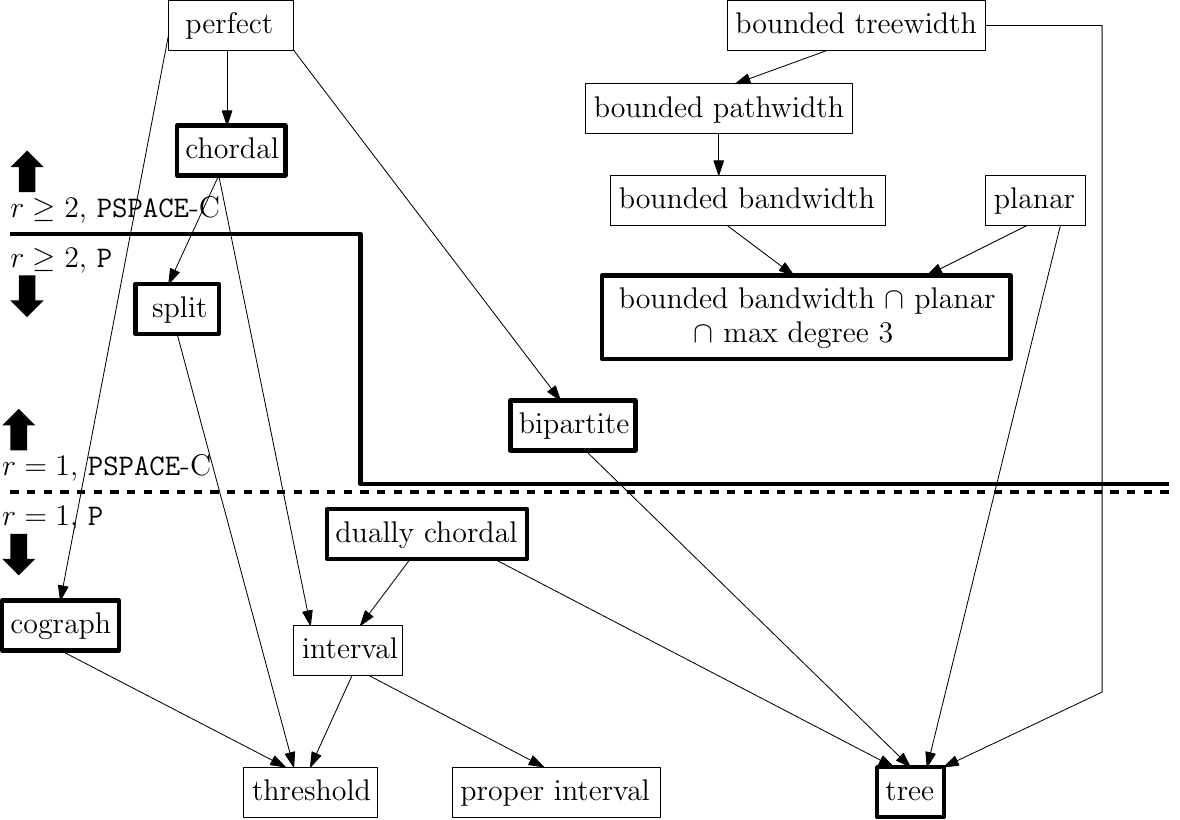}
	\caption{The complexity status of \textsc{D$r$DSR} for fixed $r \geq 1$ on different graph classes under $\sfTJ$. Our results are for $r \geq 2$. Each arrow from graph class $A$ to graph class $B$ indicates that $B$ is properly included in $A$.}
	\label{fig:graph-classes-TJ}
\end{figure}

In Section~\ref{sec:polytime}, we give polynomial-time algorithms for different graph classes. 
(See \figurename~\ref{fig:graph-classes-TS} and \figurename~\ref{fig:graph-classes-TJ}.)

\begin{itemize}
	\item In Section~\ref{sec:observe}, we give a few observations that, when combined with known results for \textsc{DSR}, lead to the polynomial-time solvability of \textsc{D$r$DSR} for $r \geq 2$ on dually chordal graphs (which include interval graphs and trees) under $\sfTJ$ ({\bf Proposition~\ref{prop:dually-chordal}}) and on cographs (i.e., $P_4$-free graphs) under both $\sfTS$ and $\sfTJ$ ({\bf Corollary~\ref{cor:TJ-cograph}}). 
	
	\item In Section~\ref{sec:split}, we show that \textsc{D$r$DSR} under either $\sfTS$ or $\sfTJ$ on split graphs can be solved in polynomial time for any fixed $r \geq 2$ ({\bf Theorem~\ref{thm:TS-split}} and {\bf Corollary~\ref{cor:TJ-split}}). 
	This result establishes a surprising dichotomy compared to the lower bound result on split graphs by Haddadan et al. \cite{HaddadanIMNOST16}, who show that the problem is $\ttPSPACE$-complete for $r=1$. 
	Additionally, we establish non-trivial bounds on the length of a \textit{shortest} reconfiguration sequence between two D$2$DSs of a split graph under both $\sfTS$ and $\sfTJ$, which may be of independent interest ({\bf Theorem~\ref{thm:split-shortest}}). 
	In particular, any $\sfTS$-sequence (resp., $\sfTJ$-sequence) has a natural lower bound on its length, 
	and we show that, on split graphs, one can design a $\sfTS$-sequence (resp., $\sfTJ$-sequence) between \textit{any} pair of D$2$DSs $D_s$ and $D_t$ that uses at most two (resp., one) more extra token-slides (resp., token-jump) than such lower bound. 
	Moreover, we also show that there exist instances $(G, D_s, D_t)$ on split graphs where one cannot use less than two extra moves under $\sfTS$ and less than one extra move under $\sfTJ$ to reconfigure $D_s$ into $D_t$ and vice versa.

	\item Our result on dually chordal graphs above implies that D$r$DSR ($r \geq 2$) under $\sfTJ$ on trees (and indeed on dually chordal graphs) can be solved in quadratic time. 
	In Section~\ref{sec:trees}, we further improve this running time by designing a linear-time algorithm for solving the problem on trees ({\bf Theorem~\ref{thm:TJ-trees}}).	
\end{itemize}

In Section~\ref{sec:hardness}, we give our hardness results. 

\begin{itemize}
	\item In Section~\ref{sec:planar}, we show that for any fixed $r \geq 1$, \textsc{D$r$DSR} under either $\sfTS$ or $\sfTJ$ remains $\ttPSPACE$-complete even on planar graphs of maximum degree three and bounded bandwidth ({\bf Theorem~\ref{thm:planar-maxdeg3}}). 
    Our result on planar graphs improves the previously known results of Haddadan et al.~\cite{HaddadanIMNOST16} and Bonamy et al.~\cite{BonamyDO21}, which are only for $r = 1$ and planar graphs of maximum degree six and bounded bandwidth.
    The key for our improvement is our non-trivial polynomial-time reduction from the well-known \textsc{NCL} problem~\cite{HearnD05}, while in the previously known results, the natural choice of problem for reduction is \textsc{Vertex Cover Reconfiguration}.
	
	\item In Sections~\ref{sec:chordal} and~\ref{sec:bipartite}, we also show that some known hardness results for \textsc{DSR} can be extended to \textsc{D$r$DSR} when $r \geq 2$. 
	More precisely, we show that under either $\sfTS$ or $\sfTJ$, \textsc{D$r$DSR} ($r \geq 2$) is $\ttPSPACE$-complete on chordal graphs ({\bf Theorem~\ref{thm:chordal}}) and on bipartite graphs ({\bf Theorem~\ref{thm:bipartite}}).
\end{itemize}

\section{Preliminaries}
\label{sec:preliminaries}

\subsubsection*{Graph Notation.}
For concepts and notations not defined here, we refer readers to~\cite{Diestel2017}. 
Unless otherwise mentioned, throughout this paper, we consider simple, connected\footnote{\ReviewRevise{It suffices to assume that the input graph is connected. Otherwise, each component can be handled separately.}}, undirected graphs $G$ with \ReviewRevise{vertex set} $V(G)$ and \ReviewRevise{edge set} $E(G)$.
For any pair of vertices $u, v$, the \textit{distance} between $u$ and $v$ in $G$, denoted by $\dist_G(u, v)$, is the length of the shortest path between them. 
The \textit{diameter} of $G$, denoted by $\diam(G)$, is the largest distance between any pair of vertices.
For two vertex subsets $X, Y$, we use $X - Y$ and $X + Y$ to indicate $X \setminus Y$ and $X \cup Y$, respectively. 
If $Y$ contains a single vertex $u$, we write $X - u$ and $X + u$ instead of $X - \{u\}$ and $X + \{u\}$, respectively. 
We denote by $X \Delta Y$ their \textit{symmetric difference}, i.e., $X \Delta Y = (X - Y) + (Y - X)$. 
For a subgraph $H$ of $G$, we denote $G - H$ as the graph obtained from $G$ by deleting all vertices of $H$ and their incident edges in $G$.

A \textit{dominating set (DS)} of $G$ is a vertex subset $D$ such that for every $u \in V(G)$, there exists $v \in D$ such that $\dist_G(u, v) \leq 1$. 
For a fixed positive integer $r$, a \textit{distance-$r$ dominating set (D$r$DS)} of $G$ is a vertex subset $D$ such that for every $v \in V(G)$, there exists $v \in D$ such that $\dist_G(u, v) \leq r$. 
In particular, any D$1$DS is also a DS, and vice versa. Let $N^r_G[u]$ be the set of all vertices of distance at most $r$ from $u$ in $G$. 
We say that a vertex $v$ is \textit{$r$-dominated} by $u$ (or $u$ \textit{$r$-dominates} $v$) if $v \in N^r_G[u]$. 
We say that a vertex subset $X$ is \textit{$r$-dominated} by some vertex subset $Y$ if each vertex in $X$ is $r$-dominated by some vertex in $Y$. 
A D$r$DS is nothing but a vertex subset $D$ that $r$-dominates $V(G)$. We denote by $\gamma_r(G)$ the size of a minimum D$r$DS of $G$.

\subsubsection*{Graph Classes.}
We mention some graph classes that will be considered in this paper. 
A graph $G = (V, E)$ is called a \textit{split graph} if its vertex set $V$ is partitioned into two subsets, $K$ and $S$, which respectively induce a clique and an independent set of $G$. 
For any split graph $G$, we assume that such a partition is always given and write $G = (K \uplus S, E)$. 
A graph $G$ is called a \textit{chordal graph} if every cycle of $G$ on four or more vertices has a chord—an edge joining two non-adjacent vertices in the cycle. 
It is known that any split graph is also a chordal graph. 
A \textit{bipartite graph} $G$ is a graph where $V(G)$ can be partitioned into two disjoint vertex subsets $X$ and $Y$ where no edge of $G$ joins two vertices in $X$ or two vertices in $Y$. 
$G$ is a \textit{planar graph} if it can be drawn in the plane in such a way that no edges cross each other. 
A \textit{tree} is a graph having no cycles. 
The \textit{bandwidth} of a graph $G$ is the minimum over all injective assignments $f: V(G) \to \mathbb{N}$ of $\max_{uv \in E(G)}|f(u) - f(v)|$. 
$G$ is called a \textit{bounded bandwidth graph} if there exists a constant $b$ such that the bandwidth of $G$ is at most $b$.

\subsubsection*{Reconfiguration Notation.}
Throughout this paper, we write ``$(G, D_s, D_t)$ under $\sfR$'' to indicate an instance of \textsc{D$r$DSR} 
where $D_s$ and $D_t$ are two given D$r$DSs of a graph $G$, and the reconfiguration rule is $\sfR \in \{\sfTS, \sfTJ\}$.
Imagine that a token is placed on each vertex in a D$r$DS of a graph $G$. 
A \textit{$\sfTS$-sequence} in $G$ between two D$r$DSs $D_s$ and $D_t$ is a sequence $\mathcal{S} = \langle D_s = D_0, D_1, \dots, D_q = D_t \rangle$ such that for $i \in \{0, \dots, q-1\}$, the set $D_i$ is a D$r$DS of $G$, and there exists a pair $x_i, y_i \in V(G)$ such that $D_i - D_{i+1} = \{x_i\}$, $D_{i+1} - D_i = \{y_i\}$, and $x_iy_i \in E(G)$.
A \textit{$\sfTJ$-sequence} in $G$ is defined similarly, but without the restriction $x_iy_i \in E(G)$.
For a given rule $\sfR \in \{\sfTS, \sfTJ\}$, we say that $D_{i+1}$ is obtained from $D_i$ by \textit{immediately sliding/jumping} a token from $x_i$ to $y_i$ under rule $\sfR$, and we write $x_i \reconf[\sfR]{G} y_i$.
This allows us to express $\mathcal{S}$ as a sequence of token-moves: $\mathcal{S} = \langle x_0 \reconf[\sfR]{G} y_0, \dots, x_{q-1} \reconf[\sfR]{G} y_{q-1} \rangle$.
In short, $\mathcal{S}$ can be viewed as an (ordered) sequence of either D$r$DSs or token-moves.
(Recall that we defined $\mathcal{S}$ as a sequence between $D_s$ and $D_t$. As a result, when regarding $\mathcal{S}$ as a sequence of token-moves, we implicitly assume that the initial D$r$DS is $D_s$.) 
With respect to the latter viewpoint, we say that $\mathcal{S}$ \textit{slides/jumps a token $t$ from $u$ to $v$ in $G$} if $t$ is initially placed on $u \in D_0$ and finally on $v \in D_q$ after performing $\mathcal{S}$. 
The \textit{length} of a $\sfR$-sequence is simply the number of times the rule $\sfR$ is applied.

\section{Polynomial-Time Algorithms}
\label{sec:polytime}

\subsection{Observations}
\label{sec:observe}

\subsubsection*{Graphs and Their Powers.}
For a graph $G$ and an integer $s \geq 1$, the \textit{$s^{th}$ power of $G$} is the graph $G^s$ whose vertices are $V(G)$ and two vertices $u, v$ are adjacent in $G^s$ if $\dist_G(u, v) \leq s$. 
Observe that $D$ is a D$r$DS of $G$ if and only if $D$ is a DS of $G^{r}$.
The following proposition is straightforward.
\begin{proposition}\label{prop:power-graph}
	Let $\calG$ and $\calH$ be two graph classes and suppose that for every $G \in \calG$ we have $G^r \in \calH$ for some fixed integer $r \geq 1$.
	If \textsc{DSR} under $\sfTJ$ on $\calH$ can be solved in polynomial time, so does \textsc{D$r$DSR} under $\sfTJ$ on $\calG$. %
\end{proposition}
\begin{proof}
	Since $D$ is a D$r$DS of $G$ if and only if $D$ is a DS of $G^{r}$, any $\sfTJ$-sequence in $G$ between two D$r$DSs can be converted to a $\sfTJ$-sequence in $G^r$ between two corresponding DSs and vice versa.
\end{proof}

\subsubsection*{Dually Chordal Graphs.}
We refer readers to \cite{BrandstadtDCV98} for a precise definition of this graph class.
It is known that \textsc{DSR} can be solved in quadratic time under $\sfTS$ on connected dually chordal graphs \cite[Theorem 13]{BonamyDO21}.
Specifically, given an instance $(G, D_s, D_t)$ of \textsc{DSR} under $\sfTS$, where $D_s$ and $D_t$ are two dominating sets of a connected dually chordal graph $G$, one can construct in quadratic time a $\sfTS$-sequence between $D_s$ and $D_t$.
Since any $\sfTS$-sequence is also a $\sfTJ$-sequence, the result also applies for $\sfTJ$.
In other words, \textsc{DSR} under $\sfTJ$ on connected dually chordal graphs can be solved in quadratic time.

One can construct in quadratic time a $\sfTJ$-sequence between $D_s$ and $D_t$ on disconnected dually chordal graphs $G$ as follows.
Bonamy et al. also proved that one can construct in linear time a canonical minimum dominating set $C$ and respectively reconfigure $D_s$ and $D_t$ under $\sfTS$ into $D_s^\prime$ and $D_t^\prime$ such that $C \subseteq D_s^\prime \cap D_t^\prime$ \cite[Lemma 12]{BonamyDO21}.
Moreover, in a component, note that any token not in a canonical minimum dominating set can be arbitrarily moved/jumped to any other component.
As a result, one can indeed reconfigure $D_s$ and $D_t$ under $\sfTJ$ such that, finally, their intersections with each component are of equal size.
Now, it remains to solve for each component independently, which can be done as described in \cite[Theorem 13]{BonamyDO21}.
In summary, we have the following proposition.

\begin{proposition}\label{prop:dually-chordal}
\textsc{DSR} under $\sfTJ$ is quadratic-time solvable on dually chordal graphs.
\end{proposition}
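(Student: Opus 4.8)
The plan is to reduce the problem to the connected case, for which a quadratic-time algorithm is already available, and then to stitch the components together by means of a canonical minimum dominating set.

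First I would handle a connected dually chordal graph $G$. Here I would simply invoke the known fact that \textsc{DSR} under $\sfTS$ is solvable in quadratic time on connected dually chordal graphs \cite[Theorem 13]{BonamyDO21}: given any two dominating sets $D_s, D_t$ of $G$, that algorithm constructs an explicit $\sfTS$-sequence between them in quadratic time. Since every token-slide is in particular a token-jump, the very same sequence witnesses a $\sfTJ$-reconfiguration, so the connected case is immediate. The only thing worth recording is that on a connected dually chordal graph every instance is a yes-instance.

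Next I would treat a (possibly disconnected) dually chordal graph $G$ with connected components $G_1, \dots, G_p$. The obstacle here is that $\sfTJ$ permits a token to jump between components, so the restrictions of $D_s$ and $D_t$ to a fixed $V(G_i)$ need not have the same size, and we cannot simply solve each component in isolation. To overcome this I would use the canonical minimum dominating set $C$ of Bonamy et al. \cite[Lemma 12]{BonamyDO21}: in linear time one can reconfigure each of $D_s, D_t$ under $\sfTS$ into $D_s^\prime, D_t^\prime$ with $C \subseteq D_s^\prime \cap D_t^\prime$. Because $C$ already dominates $V(G)$, every token lying outside $C$ is \emph{free}, in the sense that it may be jumped to an arbitrary vertex---in the same component or a different one---without ever destroying the domination property. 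I would use these free jumps to redistribute tokens so that $D_s^\prime$ and $D_t^\prime$ have equal-sized intersections with each $V(G_i)$; this is possible precisely because $\sfTJ$ preserves the total number of tokens, so that $|D_s^\prime| = |D_t^\prime|$ and the per-component surpluses and deficits cancel out.

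Finally, once $D_s^\prime$ and $D_t^\prime$ agree in size on every component, I would apply the connected-case algorithm from the first step to each $G_i$ independently and concatenate the resulting $\sfTJ$-sequences. Summing the per-component quadratic bounds, together with the linear-time construction of $C$ and the redistribution of the at most $|V(G)|$ free tokens, yields an overall quadratic running time. The step I expect to be the crux is the disconnected case, and more specifically the observation that the presence of $C$ \emph{decouples} the components: it is this that legitimises moving free tokens across components while keeping every intermediate set dominating, thereby reducing the whole problem to independently solvable connected pieces.
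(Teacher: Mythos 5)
Your proposal is correct and follows essentially the same route as the paper: the connected case via the quadratic-time $\sfTS$-algorithm of Bonamy et al.\ (every token-slide being a token-jump), and the disconnected case via their canonical minimum dominating set $C$, using the freedom of tokens outside $C$ to equalise the per-component token counts before solving each component independently. The only cosmetic difference is that you make explicit the conservation argument (surpluses and deficits cancel because $\sfTJ$ preserves the total number of tokens), which the paper leaves implicit.
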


It is also known that the power of any dually chordal graph is also a dually chordal graph \cite{BrandstadtDCV98}.
Additionally, any tree or interval graph is also dually chordal \cite{BonamyDO21}.
Along with Proposition \ref{prop:power-graph}, we obtain the following corollary.

\begin{corollary}\label{cor:dually-chordal}
\textsc{D$r$DSR} under $\sfTJ$ on dually chordal graphs is quadratic-time solvable for any $r \geq 2$.
Consequently, the result also applies to trees and interval graphs.
\end{corollary}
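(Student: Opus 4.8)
The plan is to instantiate \cref{prop:power-graph} with both graph classes taken to be the class of dually chordal graphs, i.e. $\calG = \calH = \{\text{dually chordal graphs}\}$. First I would collect the two ingredients already available: by \cref{prop:dually-chordal}, \textsc{DSR} under $\sfTJ$ on dually chordal graphs is solvable in quadratic time, and by the cited structural fact the $r$th power $G^r$ of any dually chordal graph $G$ is again dually chordal. For any fixed $r \geq 2$, the latter guarantees $G^r \in \calH$ whenever $G \in \calG$, so the hypothesis of \cref{prop:power-graph} is satisfied, and the former supplies the polynomial-time (indeed quadratic-time) algorithm on $\calH$ that the proposition requires. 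Applying \cref{prop:power-graph} then transfers solvability back to $\calG$: a $\sfTJ$-sequence between two D$r$DSs of $G$ corresponds verbatim to a $\sfTJ$-sequence between the associated dominating sets of $G^r$, and conversely. This yields the main claim that \textsc{D$r$DSR} under $\sfTJ$ on dually chordal graphs is solvable in polynomial time.

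For the ``consequently'' part concerning trees and interval graphs, I would simply invoke the cited fact that every tree and every interval graph is itself dually chordal; both classes therefore fall under the statement just established, with no additional argument needed.

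The one point that deserves care is the exact running-time bound, since \cref{prop:power-graph} as stated only preserves \emph{polynomial}-time solvability, whereas the corollary asserts \emph{quadratic} time. The natural bottleneck is the construction of $G^r$ from $G$, as a naive all-pairs distance computation would overshoot a quadratic budget, and $G^r$ may be considerably denser than $G$. To justify the quadratic bound I would argue that the reduction of \cref{prop:power-graph} does not modify the input sets at all (a D$r$DS of $G$ is literally a DS of $G^r$), so the only costs beyond running the quadratic-time algorithm of \cref{prop:dually-chordal} on $G^r$ are building $G^r$ and verifying that the given sets are D$r$DSs. I expect the main obstacle to be confirming that this preprocessing, together with executing the dually chordal algorithm on the possibly denser graph $G^r$, remains quadratic in the size of the original input $G$ rather than in the size of $G^r$.
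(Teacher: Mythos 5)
Your proposal matches the paper's own proof: the corollary is obtained exactly by instantiating \cref{prop:power-graph} with $\calG = \calH$ being the class of dually chordal graphs, combining \cref{prop:dually-chordal} with the cited fact that powers of dually chordal graphs are dually chordal, and then noting that trees and interval graphs are themselves dually chordal. The running-time subtlety you flag (the cost of constructing $G^r$ and of running the algorithm on the possibly denser graph $G^r$) is not treated by the paper either---it asserts the quadratic bound directly from this combination---so your argument is, if anything, more careful on that point.
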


\subsubsection*{Graphs With Bounded Diameter Components.}
We prove the following observation for graphs satisfying that each component's diameter is upper bounded by the same constant.
\begin{proposition}\label{prop:bounded-diameter}
	Let $G$ be any graph such that there is some constant $c > 0$ satisfying $\diam(C_G) \leq c$ for any component $C_G$ of $G$.
	Then, \textsc{D$r$DSR} on $G$ under $\sfR \in \{\sfTS, \sfTJ\}$ is in $\ttP$ for every $r \geq c$.
\end{proposition}
\begin{proof}
	When $r \geq c$, any size-$1$ vertex subset of $G$ is a D$r$DS.
	In this case, observe that any token-jump (and therefore token-slide) from one vertex to any unoccupied vertex always results a new D$r$DS.
	Thus, under $\sfTJ$, any instance of \textsc{D$r$DSR} is a yes-instance.  
	On the other hand, under $\sfTS$, the answer depends on the number of tokens in each component.
	More precisely, for any instance $(G, D_s, D_t)$ of \textsc{D$r$DSR} under $\sfTS$, if $|D_s \cap V(C_G)| = |D_t \cap V(C_G)|$ for any component $C_G$ of $G$ then it is a yes-instance; otherwise it is a no-instance.
\end{proof}

Since any connected cograph ($P_4$-free graph) has diameter at most $2$, the following corollary is straightforward.
\begin{corollary}\label{cor:TJ-cograph}
	\textsc{D$r$DSR} under $\sfR \in \{\sfTS, \sfTJ\}$ on cographs is in $\ttP$ for any $r \geq 2$.
\end{corollary}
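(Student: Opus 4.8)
The plan is to derive this corollary directly from \cref{prop:bounded-diameter} by verifying that cographs satisfy its hypothesis with the constant $c = 2$. Since \cref{prop:bounded-diameter} already settles membership in $\ttP$ for both $\sfR = \sfTS$ and $\sfR = \sfTJ$ whenever every component has diameter at most $c$ and $r \geq c$, the only thing left to establish is the diameter bound: every component of a cograph has diameter at most $2$.

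First I would reduce to the connected case. Being $P_4$-free is a hereditary property, and each component of $G$ is an induced subgraph of $G$, so each component of a cograph is itself a connected cograph; it therefore suffices to bound the diameter of a connected cograph.

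The key step is to show that a connected $P_4$-free graph has diameter at most $2$. I would argue by contraposition: if a connected graph $H$ has $\diam(H) \geq 3$, then I pick vertices $u, v$ with $\dist_H(u, v) = 3$ and take a shortest $u$--$v$ path $u = x_0, x_1, x_2, x_3 = v$. Because the path is shortest it has no chords (a chord would yield a strictly shorter $u$--$v$ walk), so $x_0 x_1 x_2 x_3$ is an induced $P_4$ in $H$, whence $H$ is not $P_4$-free. Equivalently, a connected cograph has diameter at most $2$.

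Combining these two observations, every component of a cograph has diameter at most $2$, so \cref{prop:bounded-diameter} applies with $c = 2$ for every $r \geq 2$, yielding the claim. I do not expect any genuine obstacle here: the entire argument rests on the standard diameter-$2$ characterization of connected cographs, and all of the algorithmic content has already been packaged into \cref{prop:bounded-diameter}.
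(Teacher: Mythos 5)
Your proposal is correct and takes essentially the same approach as the paper: the paper also derives the corollary directly from \cref{prop:bounded-diameter}, invoking the fact that any connected cograph ($P_4$-free graph) has diameter at most $2$, so that $c = 2$ works for every $r \geq 2$. The only difference is that you spell out the (standard) proof of the diameter bound---hereditarity plus the induced-$P_4$ argument from a chordless shortest path of length $3$---which the paper treats as known and states without proof.
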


\subsection{Split Graphs}
\label{sec:split}

In this section, we consider split graphs. 
We begin by proving some useful properties when moving tokens in a split graph.
\begin{lemma}\label{lem:TS-split}
	Let $D$ be a D$2$DS of a split graph $G = (K \uplus S, E)$.
	\begin{enumerate}[(a)]
		\item For every pair $u \in D \cap K$ and $v \in K - D$, the set $D - u + v$ is a D$2$DS of $G$.
		\item For every pair $u \in D \cap S$ and $v \in K - D$, the set $D - u + v$ is a D$2$DS of $G$.
		\item For every pair $u \in D \cap K$ and $v \in S - D$, the set $D - u + v$ is a D$2$DS of $G$ if $(D \cap K) - u \neq \emptyset$.
	\end{enumerate}
\end{lemma}
\begin{proof}
	By definition, $\dist_G(v, w) \leq 2$ for any $w \in V(G)$.
	Consequently, in both (a) and (b), $\{v\}$ is a D$2$DS of $G$, and therefore so is $D - u + v \supseteq \{v\}$.
	In (c), since $(D \cap K) - u \neq \emptyset$, there must be a vertex $x \in D \cap K$ such that $x \neq u$.
	Again, since $\{x\}$ is a D$2$DS of $G$, so is $D - u + v \supseteq \{x\}$.
\end{proof}

A direct consequence of Lemma~\ref{lem:TS-split} is as follows.
\begin{corollary} \label{cor:TSfree}
    Let $G = (K \uplus S, E)$ be a split graph and $D$ be any D$2$DS of $G$.
    For any $u \in D$ and $v \in N_G(u) - D$, if $(D - u) \cap K \neq \emptyset$ then $D - u + v$ is always a D$2$DS of $G$.
\end{corollary}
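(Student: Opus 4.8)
The plan is to derive the corollary directly from \cref{lem:TS-split} by a case analysis on where the target vertex $v$ lies. Since $v \in N_G(u) - D$, in particular $v \notin D$, so exactly one of $v \in K - D$ or $v \in S - D$ holds, and the two parts partition the work cleanly.

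First I would handle the case $v \in K - D$. Here the position of $u$ within $D$ splits into two subcases: if $u \in D \cap K$, then part (a) of \cref{lem:TS-split} immediately gives that $D - u + v$ is a D$2$DS; if instead $u \in D \cap S$, then part (b) applies. Note that in this case the hypothesis $(D - u) \cap K \neq \emptyset$ is not even needed, since $v \in K$ already guarantees that $\{v\}$, and hence $D - u + v \supseteq \{v\}$, $2$-dominates $V(G)$.

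The remaining case is $v \in S - D$. The key observation here is that, because $S$ induces an independent set of $G$ and $v \in N_G(u)$, the vertex $u$ cannot itself lie in $S$; hence $u \in D \cap K$. I would then invoke part (c) of \cref{lem:TS-split}, whose hypothesis $(D \cap K) - u \neq \emptyset$ coincides exactly with the assumed condition $(D - u) \cap K \neq \emptyset$. This yields that $D - u + v$ is a D$2$DS and completes the argument.

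The only point requiring care — and the step I expect to be the main (though minor) obstacle — is the case $v \in S - D$: one must use the independence of $S$ to rule out $u \in S$, thereby pinning down $u \in K$ so that part (c) is the applicable case, and then verify that the corollary's hypothesis $(D - u) \cap K \neq \emptyset$ is literally the condition $(D \cap K) - u \neq \emptyset$ demanded by part (c). Everything else is a routine substitution into the three parts of \cref{lem:TS-split}.
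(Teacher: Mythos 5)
Your proposal is correct and matches what the paper intends: the paper states \cref{cor:TSfree} as a direct consequence of \cref{lem:TS-split} without writing out a proof, and your case analysis (parts (a)/(b) when $v \in K$, part (c) when $v \in S$ after using the independence of $S$ to force $u \in K$, together with the identity $(D-u) \cap K = (D \cap K) - u$) is exactly the spelled-out version of that derivation.
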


In the following theorem, we prove that \textsc{D$r$DSR} under $\sfTS$ on split graphs is polynomial-time solvable.
\begin{theorem}\label{thm:TS-split}
	\textsc{D$r$DSR} under $\sfTS$ on split graphs is in $\ttP$ for any $r \geq 2$.
	In particular, when $r = 2$, for any pair of size-$k$ D$2$DSs $D_s, D_t$ of a split graph $G = (K \uplus S, E)$, there is a $\mathsf{TS}$-sequence in $G$ between $D_s$ and $D_t$.
\end{theorem}

  \begin{proof}
	Proposition~\ref{prop:bounded-diameter} settles the case $r \geq 3$. 
	It remains to consider the case $r = 2$. 
	We claim that for any pair of size-$k$ D$2$DSs $D_s$ and $D_t$ of a split graph $G = (K \uplus S, E)$, there exists a $\mathsf{TS}$-sequence in $G$ between $D_s$ and $D_t$.

	Suppose that $p = |K| \geq 1$ and $q = |S| \geq 1$. 
	If $p = 1$, then $G$ is a star graph $K_{1, q}$, and any size-$1$ vertex subset of $G$ is also a D$2$DS. 
	In this case, any token can move freely, and a $\sfTS$-sequence between $D_s$ and $D_t$ can be easily designed.

	Now, suppose that $p \geq 2$. 
	In this case, observe that as long as there is a token in $K$, Corollary~\ref{cor:TSfree} ensures that any other token can move freely. 
	Using this observation, we design a $\sfTS$-sequence between $D_s$ and $D_t$ as follows.

	\begin{itemize}
	\item If there is no token in $K$, move an arbitrary token from some vertex $u \in D_s \cap S$ to one of its neighbors $v \in K$. (As $G$ is connected, such a vertex $v$ exists.)
	
	\item While keeping a token in $K$, move tokens to vertices in $S \cap D_t$ as follows: Pick a $v \in S \cap D_t$, choose a closest token to $v$ on some vertex $v_{\text{close}}$, and move it to $v$.
	If $v_{\text{close}}$ is the only vertex with a token in $K$ at that time and $v$ is not the last vertex in $D_t$ that needs a token, first move another token from $S$ to $K$. 
	This can be done either by shifting the token on $v_{\text{close}}$ to another vertex in $K$ and then shifting a token to $v_{\text{close}}$, or by shifting a token to a vertex in $K \setminus v_{\text{close}}$. 
	If $v$ is the last vertex in $D_t$ that needs a token, simply move the token on $v_{\text{close}}$ to $v$ along one of their shortest paths $P$. 
	Note that all vertices of $P$ except $v$ are in $K$, and $D_t$ is a D$2$DS. Thus, every move along $P$ results in a new D$2$DS.

	\item Finally, move any remaining tokens (if they exist) to $K \cap D_t$.
	\end{itemize}
 
        \end{proof}

Since any $\sfTS$-sequence in $G$ is also a $\sfTJ$-sequence, a direct consequence of Theorem~\ref{thm:TS-split} and Proposition~\ref{prop:bounded-diameter} is as follows.
\begin{corollary}\label{cor:TJ-split}
	\textsc{D$r$DSR} under $\sfTJ$ on split graphs is in $\ttP$ for any $r \geq 2$.
\end{corollary}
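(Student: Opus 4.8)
The plan is to assemble the statement directly from the two results already established, treating the regimes $r = 2$ and $r \geq 3$ separately, since a different tool applies to each.

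For $r \geq 3$ I would simply invoke \cref{prop:bounded-diameter}. The only thing to verify is that split graphs meet its hypothesis: since we work throughout with connected graphs, it suffices to observe that every connected split graph $G = (K \uplus S, E)$ has diameter at most $3$. Indeed, any two vertices of $K$ are adjacent; every vertex of $S$ has a neighbor in $K$ by connectedness; hence a vertex of $S$ reaches any vertex of $K$ in at most two steps and any other vertex of $S$ in at most three. Taking $c = 3$, \cref{prop:bounded-diameter} yields that \textsc{D$r$DSR} on split graphs under any $\sfR \in \{\sfTS, \sfTJ\}$, in particular under $\sfTJ$, lies in $\ttP$ for all $r \geq 3$.

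It remains to treat $r = 2$, which is exactly where the effort of \cref{thm:TS-split} pays off. First note that any application of $\sfTJ$ preserves the number of tokens, so a $\sfTJ$-sequence between $D_s$ and $D_t$ can exist only when $|D_s| = |D_t|$; this is a trivially checkable necessary condition. Conversely, if $|D_s| = |D_t| = k$, then the ``in particular'' clause of \cref{thm:TS-split} guarantees a $\sfTS$-sequence between these two size-$k$ D$2$DSs, and every $\sfTS$-sequence is by definition also a $\sfTJ$-sequence. Thus for $r = 2$ the instance $(G, D_s, D_t)$ under $\sfTJ$ is a yes-instance if and only if $|D_s| = |D_t|$, which is decidable in polynomial time.

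Combining the two regimes settles the corollary. I do not expect any genuine obstacle, since the claim is a corollary of machinery proved earlier; the only points needing care are confirming the diameter bound for split graphs so that \cref{prop:bounded-diameter} is applicable, and explicitly recording the token-count condition under $\sfTJ$ that is implicit (and vacuous) in the $\sfTS$ formulation of \cref{thm:TS-split}.
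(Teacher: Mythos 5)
Your proposal is correct and follows essentially the same route as the paper: the paper's proof of this corollary likewise combines \cref{thm:TS-split} (for $r=2$, via the fact that every $\sfTS$-sequence is a $\sfTJ$-sequence) with \cref{prop:bounded-diameter} (for $r \geq 3$). The only additions you make—explicitly verifying that connected split graphs have diameter at most $3$, and recording the token-count condition $|D_s| = |D_t|$—are details the paper leaves implicit, and both are handled correctly.
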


We now consider \emph{shortest} reconfiguration sequences in split graphs.
Observe that each $\sfR$-sequence ($\sfR \in \{\sfTS, \sfTJ\}$) between two D$r$DSs $D_s$ and $D_t$ induces a bijection $f$ between them: the token on $u \in D_s$ must ultimately be placed on $f(u) \in D_t$, and vice versa.
We denote the length of a \emph{shortest} $\sfR$-sequence in $G$ between two D$r$DSs $D_s$ and $D_t$ by $\opt_{\sfR}(G, D_s, D_t)$. 
If there is no $\sfR$-sequence between $D_s$ and $D_t$, we define $\opt_{\sfR}(G, D_s, D_t) = \infty$.

Let $M^\star_{\sfTS}(G, D_s, D_t) = \min_{f} \sum_{i=1}^{k} \dist_G(s_i,f(s_i))$, where $f$ is a bijection between vertices of $D_s$ and $D_t$, and let $\displaystyle M^\star_{\sfTJ}(G, D_s, D_t) = \frac{|D_s \Delta D_t|}{2}$. 
Observe that $\opt_{\sfTS}(G, D_s, D_t) \geq M^\star_{\sfTS}(G, D_s, D_t)$; in order to slide a token from $s_i \in D_s$ to $f(s_i) \in D_t$ for some $i \in \{1, \dots, k\}$, one cannot use fewer than $\dist_G(s_i, f(s_i))$ token-slides.
Additionally, we also have $\opt_{\sfTJ}(G, D_s, D_t) \geq M^\star_{\sfTJ}(G, D_s, D_t)$; each token in $D_s \setminus D_t$ may jump directly to some vertex in $D_t \setminus D_s$.

Furthermore, observe that for any instance where both $D_s$ and $D_t$ are subsets of $K$, one can indeed construct $\sfTS$- and $\sfTJ$-sequences that are, respectively, of lengths $M^\star_{\sfTS}(G, D_s, D_t)$ and $M^\star_{\sfTJ}(G, D_s, D_t)$. (Recall that any vertex in $K$ forms a size-$1$ D$2$DS of $G$.)

In~\cite[Theorem 4.3]{Heuvel13}, van~den~Heuvel proved the following theorem.

\begin{theorem}[\cite{Heuvel13}]\label{thm:Heuvel13}
For any graph $G$ and two token-sets (i.e., vertex subsets of $V(G)$) $U$ and $V$, there exists a shortest $\sfTS$-sequence between $U$ and $V$ of length exactly $M^\star_{\sfTS}(G, U, V)$.
\end{theorem}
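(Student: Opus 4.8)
The plan is to establish the matching upper bound $\opt_{\sfTS}(G,U,V) \le M^\star_{\sfTS}(G,U,V)$, since the reverse inequality $\opt_{\sfTS}(G,U,V) \ge M^\star_{\sfTS}(G,U,V)$ has already been observed. I will argue by induction on $M^\star_{\sfTS}(G,U,V)$, constructing a sequence that decreases this quantity by exactly $1$ at every slide. The base case $M^\star_{\sfTS}(G,U,V)=0$ forces $U=V$, handled by the empty sequence. For the inductive step, fix an optimal bijection $f$ realizing $M^\star_{\sfTS}(G,U,V)$ and view it as a potential $\Phi = \sum_{u\in U}\dist_G(u,f(u))$. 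Call a token \emph{active} if it is not yet at its target, and say an active token at $u$ has a \emph{forward neighbour} $w$ if $uw\in E(G)$ and $\dist_G(w,f(u))=\dist_G(u,f(u))-1$ (the next vertex on a shortest path). The heart of the step is the claim that, whenever $\Phi>0$, some active token has an \emph{empty} forward neighbour.

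Granting that claim, I slide that token from $u$ to its empty forward neighbour $w$, obtaining $U'=U-u+w$. The bijection agreeing with $f$ except that the moved token now sits at $w$ witnesses $M^\star_{\sfTS}(G,U',V)\le \Phi-1$. Conversely, any bijection for $(U',V)$ can be turned into one for $(U,V)$ by rerouting the token at $w$ through $u$, which by the triangle inequality costs at most one extra unit; hence $M^\star_{\sfTS}(G,U',V)\ge \Phi-1$. Thus the slide decreases the optimal matching value by exactly $1$, and applying the induction hypothesis to $(U',V)$ completes the construction with a total of $\Phi$ slides.

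The crux is proving the claim, i.e.\ ruling out a \emph{deadlock} in which every active token is blocked. Suppose for contradiction that no active token has an empty forward neighbour. Starting from any active token, repeatedly follow a forward neighbour to the token occupying it; since positions are distinct this traces a chain through distinct tokens. If the chain closes into a cycle $t_0,\dots,t_{m-1}$ of active tokens (where the position of $t_{j+1}$ is a forward neighbour of $t_j$), I cyclically reassign targets by giving $t_{j+1}$ the target $f$ assigned to $t_j$; each such token's cost drops by exactly $1$, so the total cost drops by $m$, contradicting optimality of $f$. If instead the chain reaches an \emph{inactive} token $t_\ell$ sitting on its own target, I swap the targets of $t_\ell$ and its predecessor $t_{\ell-1}$: a short computation shows the pair's total cost is unchanged (so $f$ stays optimal), while $t_\ell$ becomes active (its new cost is $\dist_G(t_{\ell-1}\text{'s position},f(t_{\ell-1}))-1\ge 1$, the lower bound coming from the distinctness of targets). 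This strictly increases the number of active tokens, which is bounded by $|U|$, so only finitely many such swaps occur; after each I continue extending the chain forward.

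The main obstacle I expect is exactly this deadlock-freeness argument: one must combine the cyclic exchange (which kills blocking cycles) with the target swaps (which convert chain-ending ``already done'' tokens into active ones) and verify that the combined process terminates, necessarily at a token with an empty forward neighbour. The cycle case is clean, but care is needed so that the swaps preserve optimality while provably making progress via the monotone quantity ``number of active tokens.'' Once this is in place, the induction and the two inequalities fit together to give $\opt_{\sfTS}(G,U,V)=M^\star_{\sfTS}(G,U,V)$, which is the assertion of the theorem.
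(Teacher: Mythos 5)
This theorem is not proved in the paper at all: it is imported from van den Heuvel's survey (\cite[Theorem~4.3]{Heuvel13}), so there is no internal proof to compare against; the paper's citation buys brevity, while your argument would make the result self-contained. Your proposal is, as far as I can check, a correct proof of the unconstrained token-sliding statement. The lower bound is immediate, and your upper bound by induction on $\Phi = M^\star_{\sfTS}(G,U,V)$ rests on the right two exchange moves: the cyclic reassignment of targets along a blocking cycle of active tokens (which would decrease $\sum \dist_G$ by the cycle length, contradicting optimality of $f$, so such cycles cannot occur), and the cost-neutral target swap when the chain reaches a token already on its target (your computation $1 + (d-1) = d$ is right, and injectivity of $f$ correctly forces $d \ge 2$, so the swapped token becomes genuinely active). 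Since swaps preserve optimality and strictly increase the number of active tokens, the process must terminate at an active token with an unoccupied forward neighbour, and sliding it drops $M^\star_{\sfTS}$ by exactly one (your triangle-inequality argument for the reverse inequality is also correct, though only the direction $M^\star_{\sfTS}(G,U',V)\le\Phi-1$ is actually needed for the induction). One expositional caveat: the claim as you literally state it --- ``whenever $\Phi>0$, some active token has an empty forward neighbour'' --- is false for a \emph{fixed} optimal bijection; e.g., on the path $v_1v_2v_3v_4$ with tokens on $\{v_1,v_2\}$ and targets $\{v_2,v_3\}$, the optimal bijection $v_1\mapsto v_3$, $v_2\mapsto v_2$ is deadlocked. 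The claim should be phrased as ``for \emph{some} optimal bijection,'' which is exactly what your swap mechanism establishes and is all the induction requires.
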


Indeed, we are not interested in the case $r \geq 3$. 
In this case, any token-jump or token-slide results in a new D$r$DS of $G$. 
As a result, it is not hard to construct a shortest reconfiguration sequence: 
Under $\sfTS$, one can apply Theorem~\ref{thm:Heuvel13}. 
Under $\sfTJ$, just jump tokens one by one from $D_s \setminus D_t$ to $D_t \setminus D_s$.

The rest of this section is devoted to proving the following theorem when $r = 2$.

\begin{theorem}\label{thm:split-shortest}
	For any instance $(G, D_s, D_t)$ of \textsc{D$2$DSR} on split graphs,
	\begin{enumerate}[(a)]
		\item $M^\star_{\sfTS}(G, D_s, D_t) \leq \opt_{\sfTS}(G, D_s, D_t) \leq M^\star_{\sfTS}(G, D_s, D_t) + 2$. 
		\item $M^\star_{\sfTJ}(G, D_s, D_t) \leq \opt_{\sfTJ}(G, D_s, D_t) \leq M^\star_{\sfTJ}(G, D_s, D_t) + 1$.  
	\end{enumerate}
	Additionally, there are instances where $\opt_{\sfTS}(G, D_s, D_t) = M^\star_{\sfTS}(G, D_s, D_t) + 2$ and instances where $\opt_{\sfTJ}(G, D_s, D_t) = M^\star_{\sfTJ}(G, D_s, D_t) + 1$.
\end{theorem}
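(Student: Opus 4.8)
The plan is to isolate the feasibility constraint first. By \cref{lem:TS-split} and \cref{cor:TSfree}, any vertex subset of $G$ that contains a vertex of $K$ is automatically a D$2$DS; hence the only configurations that can fail to be D$2$DSs are those contained in $S$. So throughout I would treat ``keep at least one token in $K$'' as a sufficient guarantee of feasibility, reducing the whole problem to a purely combinatorial scheduling question: realize the relevant bijection while never letting the token-set slip entirely into $S$ except possibly at the two endpoints $D_s, D_t$, which are D$2$DSs by hypothesis. The lower bounds $M^\star_{\sfTS} \le \opt_{\sfTS}$ and $M^\star_{\sfTJ} \le \opt_{\sfTJ}$ are already recorded before the statement, so only the upper bounds and the two extremal instances remain.

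For part (b) ($\sfTJ$) I would argue as follows. If $D_s \cap K \neq \emptyset$ or $D_t \cap K \neq \emptyset$, pick a token destined for (or already sitting on) a vertex $w \in K$: jump it to $w$ first, after which a guard permanently occupies $K$ and, by the reduction above, every remaining token may jump directly from $D_s - D_t$ to $D_t - D_s$; this uses exactly $M^\star_{\sfTJ}$ jumps. The only obstructed case is $D_s, D_t \subseteq S$: here I would spend one transit jump moving a misplaced token onto some $w \in K$, then jump every other misplaced token straight to its target (each valid because $w$ guards $K$), and finally jump the guard off $w$ onto the last unfilled target of $D_t$ — the resulting set is exactly $D_t$, hence a D$2$DS. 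This uses $M^\star_{\sfTJ} + 1$ jumps, matching the claimed bound.

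For part (a) ($\sfTS$) the approach is the same in spirit but must respect adjacency and the no-two-tokens-on-one-vertex rule. First I would dispose of the degenerate case $|K| = 1$ (a star, where every nonempty set is a D$2$DS) by invoking \cref{thm:Heuvel13} to get a sequence of length exactly $M^\star_{\sfTS}$. For $|K| \ge 2$, I would fix a minimum-cost bijection $f$ realizing $M^\star_{\sfTS}$ and use \cref{thm:Heuvel13} as a routing backbone along shortest paths; since every shortest path between two vertices of $S$ passes through $K$, a moving token automatically guards $K$ during its transit, and the first slide out of a configuration inside $S$ always enters $K$. The remaining danger is a gap in this guard coverage, which I would bridge by keeping one designated token loitering in $K$ (free of charge, since a token may stay put while others slide) and letting it step aside within the clique $K$ whenever another token needs its vertex. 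The hard part will be bounding the total overhead by exactly two: loitering and dodging are the only sources of extra slides, and I would show that a single dodge at the start (to install the guard off the path of the first mover) and a single dodge at the end (to vacate the guard into the last $S$-target of $D_t$) suffice — essentially because, with two or more clique vertices available, at most one token at a time can contend for the guard's vertex, so $K$-coverage can be maintained with at most one preparatory and one concluding extra move.

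Finally, for the extremal instances I would realize a prescribed rigid structure inside $S$ through private shared neighbours in $K$. For $\sfTJ$, taking the shared-neighbour graph on $S$ to be a $6$-cycle (each consecutive pair of $S$-vertices sharing its own vertex of $K$) and letting $D_s, D_t$ be two antipodal pairs forces $M^\star_{\sfTJ} = 2$, yet no length-$2$ $\sfTJ$-sequence exists because no set obtained from $D_s$ by a single direct swap toward $D_t$ dominates the $6$-cycle; hence $\opt_{\sfTJ} = M^\star_{\sfTJ} + 1$. For $\sfTS$ I would engineer a similar cyclic/witness gadget in which (i) the unique optimal bijection is forced and its shortest-path realization must, at some moment, leave all tokens in a non-dominating subset of $S$, and (ii) repairing this requires routing a token through $K$ out of turn and returning it, which for sliding costs two extra moves and cannot be achieved with a single extra slide. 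Verifying that no length-$(M^\star_{\sfTS}+1)$ sequence exists for this instance — that both extra slides are genuinely necessary — is the most delicate step, and I expect it to be the main obstacle of the whole proof, paralleling but strictly strengthening the $\sfTJ$ obstruction.
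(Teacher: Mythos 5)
Your lower bounds and your entire treatment of the $\sfTJ$ rule are sound: the guard argument (park a token on a clique vertex, jump every other misplaced token directly, release the guard last) is essentially the paper's proof of \cref{lem:TJ-split-shortest-bound}, and your $6$-cycle witness for $\opt_{\sfTJ} = M^\star_{\sfTJ} + 1$ is correct and genuinely different from the paper's construction in \cref{lem:TJ-split-shortest} --- one can check that each of the four direct single jumps from $\{s_0,s_3\}$ toward $\{s_1,s_4\}$ leaves some $S$-vertex at distance $3$ from both tokens, while three jumps through $K$ suffice. The $\sfTS$ half, however, has two genuine gaps. First, the upper bound $\opt_{\sfTS} \leq M^\star_{\sfTS} + 2$ is asserted rather than proved. \cref{thm:Heuvel13} is only valid when \emph{every} token arrangement is feasible, so it cannot serve as a ``routing backbone'' here without an argument about the order of moves; and the difficulty is not contention for the guard's vertex, but (i) tokens all of whose shortest paths to their targets pass through the guard's vertex, which forces either a target swap (and one must verify that swapping never pushes the total cost above $M^\star_{\sfTS}$) or a detour through the guard's vertex that must be shown to cost exactly the shortest-path distance, and (ii) the relocation of the guard itself when its own target lies in $S$ or is occupied. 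The paper's proof of \cref{lem:TS-split-shortest-bound} spends its entire case analysis ($\dist_G(s_i, f(s_i)) \in \{2,1,0\}$, with the partitions $\mathcal{A}_1, \mathcal{A}_2$ and $\mathcal{A}_{11}, \mathcal{A}_{12}$) on precisely these points; your ``one dodge at the start, one dodge at the end'' names the desired conclusion but supplies no mechanism achieving it.

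Second, you exhibit no instance with $\opt_{\sfTS} = M^\star_{\sfTS} + 2$, only desiderata, and you concede that verifying the lower bound is the main obstacle --- so the final assertion of the theorem is simply not established. Note moreover that the natural $\sfTS$-analogue of your cyclic gadget fails: in the $6$-cycle instance, $M^\star_{\sfTS} = 4$, and the four slides $s_0 \to k_0$, $s_3 \to k_3$, $k_0 \to s_1$, $k_3 \to s_4$, in this order, are all feasible because a token sliding between $S$-vertices guards $K$ at the midpoint of its transit; hence $\opt_{\sfTS} = M^\star_{\sfTS}$ there. Any valid witness must defeat exactly this free-guarding effect. The paper's \cref{lem:TS-split-shortest} does so by making the guard's start and target coincide: with $K = \{s_1 = t_1, a\}$, a vertex $b$ adjacent only to $a$, and leaves $s_2, \dots, s_k, t_2, \dots, t_k$ adjacent only to $s_1$, the forced first slide $s_1 \to a$ and the concluding slide $a \to t_1$ contribute nothing to any distance sum, while every other token still needs its full two slides; this yields the matching lower bound $M^\star_{\sfTS} + 2$. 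Supplying such a construction, together with its lower-bound argument, is what your proposal is missing.
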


We first consider the $\sfTS$ rule.
The following lemma is also a direct consequence of Theorem~\ref{thm:Heuvel13}.

\begin{lemma}\label{lem:alwaysTSsequence}
Let $G = (K \uplus S, E)$ be a split graph. 
Suppose that \textit{any} sequence of token-slides in $G$ is a $\sfTS$-sequence, meaning that every time a token is slid along an edge, the resulting token-set is always a D$2$DS. 
Then, for any pair of D$2$DSs $D_s$ and $D_t$ of $G$, there exists a $\sfTS$-sequence of length exactly $M^\star_{\sfTS}(G, D_s, D_t)$.
\end{lemma}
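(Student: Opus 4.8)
The plan is to observe that the hypothesis strips away the D$2$DS feasibility requirement entirely, so that reconfiguring $D_s$ into $D_t$ becomes a pure token-sliding problem on the bare graph $G$, and then to invoke \cref{thm:Heuvel13} verbatim. Concretely, a $\sfTS$-sequence for \textsc{D$2$DSR} is a sequence of token-slides in which (i) each move takes a token along an edge to an unoccupied vertex, and (ii) every intermediate token-set is a D$2$DS. By assumption, condition (ii) is automatically satisfied whenever (i) holds: sliding a token along any edge always yields a D$2$DS. Hence the class of valid $\sfTS$-sequences here coincides exactly with the class of sequences of token-slides that merely keep the tokens on distinct vertices---which is precisely the setting of \cref{thm:Heuvel13}.

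First I would note that both $D_s$ and $D_t$ are D$2$DSs, hence legitimate endpoints, and apply \cref{thm:Heuvel13} with $U = D_s$ and $V = D_t$ to obtain a shortest token-sliding sequence $\mathcal{S}$ of length exactly $M^\star_{\sfTS}(G, D_s, D_t)$. Next I would check that $\mathcal{S}$ is in fact a $\sfTS$-sequence in our sense: each of its moves slides a token along an edge to an unoccupied vertex, and by the hypothesis the resulting token-set is a D$2$DS, so every intermediate set is feasible. Thus $\mathcal{S}$ witnesses $\opt_{\sfTS}(G, D_s, D_t) \leq M^\star_{\sfTS}(G, D_s, D_t)$. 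Combined with the general lower bound $\opt_{\sfTS}(G, D_s, D_t) \geq M^\star_{\sfTS}(G, D_s, D_t)$ recorded earlier (the token on each $s_i$ needs at least $\dist_G(s_i, f(s_i))$ slides), this forces $\opt_{\sfTS}(G, D_s, D_t) = M^\star_{\sfTS}(G, D_s, D_t)$, so the sequence $\mathcal{S}$ has exactly the claimed length.

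The only point requiring care---and the nearest thing to an obstacle---is reconciling the two notions of ``$\sfTS$-sequence'': the one built into \textsc{D$2$DSR}, which demands D$2$DS-feasibility of every intermediate set, and the constraint-free one underlying \cref{thm:Heuvel13}, which demands only that tokens occupy distinct vertices and move along edges. The hypothesis of the lemma is exactly what collapses these two notions into one, so once this equivalence is made explicit the argument is immediate and no further analysis of the min-cost matching defining $M^\star_{\sfTS}(G, D_s, D_t)$ is needed.
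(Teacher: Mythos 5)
Your proof is correct and matches the paper's approach exactly: the paper also derives \cref{lem:alwaysTSsequence} as a direct consequence of \cref{thm:Heuvel13}, and your write-up simply makes explicit the key observation (which the paper leaves implicit) that the hypothesis collapses the D$2$DS-constrained notion of $\sfTS$-sequence into the unconstrained token-sliding setting of that theorem.
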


\begin{lemma}\label{lem:TS-split-shortest}
	There exists a split graph $G = (K \uplus S, E)$ and two size-$k$ D$2$DSs $D_s, D_t$ of $G$ such that $\displaystyle\opt_{\sfTS}(G, D_s, D_t) = M^\star_{\sfTS}(G, D_s, D_t) + 2$, for $k \geq 2$.
\end{lemma}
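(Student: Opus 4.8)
The plan is to construct an explicit split graph together with two D$2$DSs $D_s, D_t$ that force two ``wasted'' slides beyond the matching lower bound $M^\star_{\sfTS}$. The intuition from \cref{cor:TSfree} is that a token can slide freely only when \emph{another} token remains in $K$; the obstruction to meeting the lower bound arises precisely when the natural matching wants to empty $K$ at some intermediate step, which would break domination. So I would engineer an instance where $D_s$ and $D_t$ each place exactly one token in $K$ on the \emph{same} clique vertex, while all remaining tokens sit on independent-set vertices whose optimal matching targets are other independent-set vertices. Concretely, take $|K| = p$ with $p \geq 2$, designate one clique vertex $c$, and choose $D_s = \{c, s_1, \dots, s_{k-1}\}$ and $D_t = \{c, t_1, \dots, t_{k-1}\}$ with all $s_i, t_j \in S$ chosen so that the cheapest bijection $f$ matches $c \mapsto c$ (cost $0$) and matches the $S$-tokens to $S$-targets, giving a clean value of $M^\star_{\sfTS}$.

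The heart of the argument is the lower bound $\opt_{\sfTS} \geq M^\star_{\sfTS} + 2$. First I would observe that the matching lower bound $M^\star_{\sfTS}$ is achieved only by a bijection that keeps $c$ fixed and moves each $S$-token directly to its $S$-target; any other bijection costs strictly more, so it suffices to rule out a length-$M^\star_{\sfTS}$ \emph{and} a length-$(M^\star_{\sfTS}+1)$ sequence. The key structural fact is that moving a token from an $S$-vertex to another $S$-vertex requires passing through $K$ (since $S$ is independent, consecutive positions on a sliding path alternate into the clique). Thus at some moment every ``productive'' slide from $S$ toward $S$ must route a token into $K$; but at the start the \emph{only} token in $K$ is on $c$, and if we ever move that token off $c$ while it is the sole clique token, \cref{cor:TSfree} no longer guarantees legality — in fact I would arrange the $S$-vertices so that when no token occupies $K$, some vertex of $S$ fails to be $2$-dominated, making such a configuration non-dominating. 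Consequently, to free up the $S$-tokens, the sequence is forced to first deposit an \emph{extra} token into $K$ (one wasted slide, since no token's target lies there) and, after all $S$-to-$S$ transfers are complete, withdraw that token back toward its true $S$-target — but I would design the targets so that this withdrawal cannot be amortized into the matching distance, costing a \emph{second} wasted slide. Counting the two non-matching slides against the matching lower bound yields $\opt_{\sfTS} \geq M^\star_{\sfTS} + 2$; the matching upper bound is immediate from \cref{thm:TS-split}, whose constructive sequence uses at most two extra slides.

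For the $\sfTJ$ bound I would follow the analogous but simpler scheme, exploiting that a single token can jump anywhere but that emptying $K$ is again the obstruction. I would pick $D_s$ and $D_t$ each with exactly one token in $K$ (on distinct vertices $c_s \neq c_t$) and with $D_s \cap S = D_t \cap S$, so that $M^\star_{\sfTJ} = |D_s \Delta D_t|/2 = 1$: the honest cost is a single jump $c_s \mapsto c_t$. The instance is arranged so that this direct jump is illegal because at the intermediate configuration (after removing the $c_s$ token, before it lands) no clique vertex is occupied and some $S$-vertex becomes undominated. Forcing the solver to instead first jump some $S$-token into $K$ and only then relocate $c_s$ costs one extra jump, giving $\opt_{\sfTJ} = M^\star_{\sfTJ} + 1$; the matching upper bound again comes from the $+1$ guarantee implicit in our positive results.

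The main obstacle I anticipate is not the upper bounds (those are handed to us by \cref{thm:TS-split} and \cref{thm:Heuvel13}) but rigorously certifying that \emph{no} shorter sequence exists — i.e.\ ruling out the length-$(M^\star_{\sfTS}+1)$ case. The danger is that a clever sequence might reroute a token's path through $K$ so that the clique-entry serves double duty as both a legality-preserving deposit and progress toward a target, thereby hiding one of the two overhead slides. To close this gap cleanly I would define a potential/charging argument: assign to each slide a charge of $+1$, show that the sum over any valid sequence decomposes as $M^\star_{\sfTS}$ (the matched displacement, which every valid sequence must pay by the triangle inequality along token trajectories) plus the number of ``non-advancing'' slides, and then prove via the domination constraint at the empty-clique configuration that at least two slides are non-advancing. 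Getting this charging to be airtight — in particular ensuring the displacement term is a genuine lower bound even when tokens swap roles mid-sequence — is where the real care is needed, and I would structure the instance (e.g.\ making all $S$-targets pairwise far apart in $G$ and making $c$ the unique vertex whose removal from $K$ disconnects domination) precisely to make that charging transparent.
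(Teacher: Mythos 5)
There is a genuine gap: the instance you construct does not have the claimed property, and the structural claim your lower bound rests on cannot be arranged. Your plan hinges on the min-cost bijection being \emph{uniquely} the one fixing $c$, ``any other bijection costing strictly more.'' This is incompatible with the other half of your plan, namely that productive slides must route through $c$. Indeed, if for some pair $(s_i,f(s_i))$ of a min-cost bijection $f$ every shortest $s_i$--$f(s_i)$ path passes through $c$, then $\dist_G(s_i,c)+\dist_G(c,f(s_i))=\dist_G(s_i,f(s_i))$, so the ``cyclic'' bijection sending $s_i\mapsto c$ and $c\mapsto f(s_i)$ (agreeing with $f$ elsewhere) has exactly the same cost --- uniqueness fails. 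Conversely, if some min-cost bijection admits, for every pair, a shortest path avoiding $c$, then the sequence that never moves the token on $c$ and slides the other tokens inside $G-c$ is legal at every step by \cref{cor:TSfree}, and by \cref{thm:Heuvel13} it can be realized with exactly $M^\star_{\sfTS}$ slides, so $\opt_{\sfTS}=M^\star_{\sfTS}$. Worse, in the natural realization of your instance (all $s_i,t_j$ pendant to $c$, which is forced if you want to block routing around $c$), every vertex of the graph is in $K$ or adjacent to $c$, so \emph{every} nonempty token placement is a D$2$DS; then \cref{lem:alwaysTSsequence} applies, and concretely the rotation $c\to t_1$, $s_1\to c$, $c\to t_2$, \dots, $s_{k-1}\to c$ transforms $D_s$ into $D_t$ in $1+2(k-2)+1=2k-2=M^\star_{\sfTS}$ slides. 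Note this rotation passes through configurations with $K$ empty that are nevertheless dominating: the obstruction you intend (``empty clique $\Rightarrow$ some $S$-vertex undominated'') is simply absent from your instance, and your concrete hardening suggestions do not supply it --- making targets pairwise far apart forces them to have clique neighbors other than $c$, which re-enables the keep-$c$-fixed solution above.

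The missing idea is a \emph{sentinel}: a vertex carrying no token whose distance to every possible token position in $S$ is at least $3$. The paper's construction is exactly your instance plus this ingredient: $K=\{s_1{=}t_1,a\}$ (your $c$ is $s_1=t_1$), the vertices $s_2,\dots,s_k,t_2,\dots,t_k$ pendant to $s_1$, and one extra vertex $b\in S$ pendant to the \emph{second} clique vertex $a$. Since $b$ is $2$-dominated only from $\{b,a,s_1\}$, any configuration with all tokens on pendants of $s_1$ is invalid; this kills the rotation, makes $s_1\to a$ the only legal first move, and forces a final move returning a token to $t_1=s_1$, which together with the $2(k-1)$ slides needed by the remaining tokens gives $\opt_{\sfTS}=M^\star_{\sfTS}+2$. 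The matching upper bound is then shown by exhibiting that very sequence, not by invoking \cref{thm:TS-split}, which carries no length guarantee (the general $+2$ bound is the separate, later \cref{lem:TS-split-shortest-bound}). Finally, though it belongs to a different lemma, your $\sfTJ$ instance fails for a related reason: a jump is atomic, so there is no ``intermediate configuration'' with $K$ empty; the set $D_s-c_s+c_t$ contains the clique vertex $c_t$ and is automatically a D$2$DS, so the direct jump is legal and $\opt_{\sfTJ}=M^\star_{\sfTJ}$ there. This is precisely why the paper's $\sfTJ$ example (\cref{lem:TJ-split-shortest}) places all of $D_s\,\Delta\,D_t$ inside $S$.
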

\begin{proof}

We construct a split graph $G = (K \uplus S, E)$ as follows (see Figure \ref{fig:TS-split}):

\begin{itemize}
\item The set $K$ contains 2 vertices, labeled $s_1 = t_1$ and $a$. Vertices in $K$ form a clique in $G$.
\item The set $S$ contains $2k - 1$ vertices, labeled $s_2, \dots, s_k, t_2, \dots, t_k, b$. Vertices in $S$ form an independent set in $G$.
\item We join $s_1 = t_1$ to every vertex in $\{s_2, \dots, s_k, t_2, \dots, t_k\}$ and join $a$ to $b$.
\end{itemize}

Let $D_s = \{s_1, \dots, s_k\}$ and $D_t = \{t_1, \dots, t_k\}$. Since $s_1 = t_1 \in K$, both $D_s$ and $D_t$ are D$2$DSs of $G$.

	\begin{figure}[ht]
		\centering
		\includegraphics[width=0.5\textwidth]{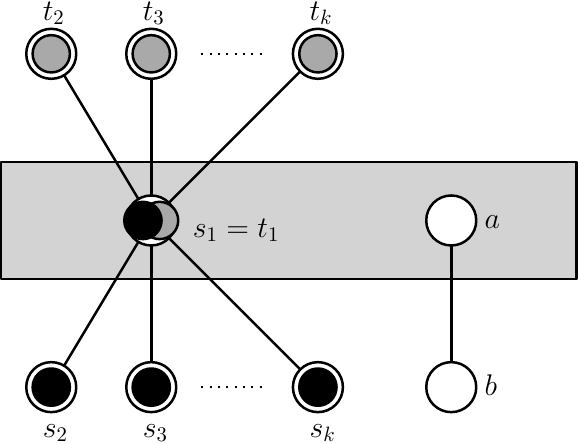}
		\caption{Construction of a split graph $G = (K \uplus S, E)$ satisfying Lemma~\ref{lem:TS-split-shortest}. Vertices in the light gray box are in $K$. Tokens in $D_s$ and $D_t$ are respectively marked by black and gray circles.}
		\label{fig:TS-split}
	\end{figure}

Let $M^\star = M^\star_{\sfTS}(G, D_s, D_t)$. 
One can verify that $M^\star = \sum_{i=1}^{k} \dist_G(s_i, t_i) = 2k - 2$. 
Additionally, any $\sfTS$-sequence $\calS$ in $G$ between $D_s$ and $D_t$ must begin with sliding the token on $s_1$ to one of its unoccupied neighbors. 
A $\sfTS$-sequence of length exactly $M^\star$ would require sliding the token on $s_1$ to one of $t_2, \dots, t_k$. 
However, this is not possible, as the vertex $b$ would not be $2$-dominated by any token in the resulting token-set.

Thus, the only viable initial move is to slide the token on $s_1$ to $a$. 
Next, one can slide each token on $s_i$ ($2 \leq i \leq k$) to $t_i$ along the $s_it_i$-path (of length $2$) in $G$. 
Finally, the token on $a$ is slid back to $t_1 = s_1$. 
This completes our construction of a $\sfTS$-sequence of length $2(k-1) + 2 = M^\star + 2$ between $D_s$ and $D_t$.

Observe that the moves from $s_1$ to $a$ and then later from $a$ to $s_1$ are essential and cannot be omitted. 
Furthermore, it requires at least two token-slides to transform a token on any $s_i$ to some $t_j$ for $2 \leq i, j \leq k$. 
Hence, in total, any $\sfTS$-sequence requires at least $M^\star + 2$ token-slides. 
Therefore, our constructed $\sfTS$-sequence is the shortest possible.

\end{proof}

Let $(G, D_s, D_t)$ be any instance of \textsc{D$2$DSR} under $\sfTS$, and let $f: D_s \to D_t$ be any bijection between the vertices of $D_s$ and $D_t$. For each $i \in \{1, \dots, k\}$, we call $f(s_i)$ the \textit{target} of the token on $s_i$. 
By \textit{handling the token on $s_i$}, we mean constructing a $\sfTS$-sequence that moves the token on $s_i$ to its target $f(s_i)$.
By \textit{swapping the targets} of $s_i$ and $s_j$, we mean applying the following procedure: Define a new bijection $g: D_s \to D_t$ such that $g(s_\ell) = f(s_\ell)$ for every $\ell \in \{1, \dots, k\} \setminus \{i, j\}$, $g(s_i) = f(s_j)$, and $g(s_j) = f(s_i)$. Then, reassign $f \gets g$.

For any instance $(G, D_s, D_t)$ of \textsc{D$2$DSR} under $\sfTS$ on split graphs, we now prove an upper bound of $\opt_{\sfTS}(G, D_s, D_t)$.
\begin{lemma}\label{lem:TS-split-shortest-bound}
	For any instance $(G, D_s, D_t)$ of \textsc{D$2$DSR} under $\sfTS$ on split graphs, $\opt_{\sfTS}(G, D_s, D_t) \leq M^\star_{\sfTS}(G, D_s, D_t) + 2$.
\end{lemma}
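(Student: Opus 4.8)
The plan is to start from an optimal bijection $f \colon D_s \to D_t$ realizing $M^\star_{\sfTS}(G, D_s, D_t) = \sum_{i} \dist_G(s_i, f(s_i))$ and to route each token along a shortest $s_i$--$f(s_i)$ path, arguing that the whole transformation can be carried out with at most two slides beyond $M^\star_{\sfTS}$. The structural engine is the observation that in a split graph every shortest path has all of its internal vertices in the clique $K$ (there are no edges inside the independent set $S$): a slide landing in $K$, from $S$ or within $K$, always yields a D$2$DS by \cref{lem:TS-split}(a,b); a slide from $K$ into $S$ yields a D$2$DS as soon as one more token remains in $K$, by \cref{lem:TS-split}(c) (equivalently \cref{cor:TSfree}). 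Consequently, when a single token is pushed along a shortest path, every step except possibly the last (an exit into $S$) is automatically valid, and that last step is validated whenever some other token currently sits in $K$.

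First I would fix a \emph{helper} token $h$ and dedicate it to maintaining a presence in $K$. If $D_s \cap K \neq \emptyset$ I take $h$ to be a token already in $K$; otherwise (assuming $D_s \neq D_t$, as the case $D_s = D_t$ is trivial) I take $h$ to be any non-stationary token and slide it into $K$, which is merely the first step of its own shortest path and hence costs nothing extra. With $h$ parked on a clique vertex, I process the remaining $k-1$ tokens one at a time along shortest paths: their into-$K$ and within-$K$ steps are always valid, and their final exits into $S$ are validated by $h$. I then move $h$ to its own target last; if $f(h) \in K$ this is a valid into-$K$ move reaching exactly $D_t$, and if $f(h) \in S$ the final slide produces $D_t$ itself, which is a D$2$DS by hypothesis, so it is valid as well. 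Run in this order and absent collisions, the sequence has length exactly $M^\star_{\sfTS}(G, D_s, D_t)$.

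The one genuine difficulty is a collision in $K$: a processed token may be forced through a clique vertex $w$ that is currently occupied, either by the parked helper or by an already-settled token, and when $w$ is the only admissible internal vertex for that step (as can happen when $|K|$ is small) there is no free reroute. This is exactly the phenomenon exhibited by the tight instance of \cref{lem:TS-split-shortest}, where the helper sits on the unique clique vertex through which every other token must pass. The remedy is to \emph{vacate} $w$ by sliding the helper to another clique vertex before routing the blocked tokens through $w$, and then, if the helper's target requires it, to slide the helper back; since vacating and returning the helper are the only deviations from shortest paths, this episode contributes at most two extra slides. Here I would also invoke target-swapping: whenever two tokens destined for $S$ can reach each other's targets through a common clique gateway, reassigning their targets is distance-preserving, so $f$ stays optimal while their routes uncross, which lets me funnel the residual obstruction into a single bottleneck.

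The main obstacle I anticipate is precisely this collision bookkeeping: proving that, after uncrossing via distance-preserving target swaps and choosing the helper well, a single vacate-and-return ever suffices, so that the total deviation from shortest paths never exceeds two. This calls for a careful case analysis according to whether $D_s$ and $D_t$ meet $K$ and how the tokens destined for $S$ share their admissible clique gateways; the degenerate cases where $|K|$ is tiny (in particular $|K| = 1$, a star, and $|K| = 2$, the source of the tight example) should be checked separately, but in each the budget of two extra slides is met by the vacate-and-return maneuver described above.
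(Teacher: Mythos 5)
Your strategy coincides with the paper's: park a guard token in the clique $K$ so that \cref{cor:TSfree} validates every slide of the other tokens, route tokens along shortest paths (whose internal vertices all lie in $K$), repair crossings by distance-preserving target swaps, and absorb the residual obstruction with one vacate-and-return of the guard, for a surcharge of $2$. The problem is that the proposal stops exactly where the proof has to begin, and you say so yourself: the claim that ``after uncrossing via distance-preserving target swaps and choosing the helper well, a single vacate-and-return ever suffices'' \emph{is} the lemma, and you defer it to a case analysis you never carry out. As written, nothing rules out several independent bottlenecks, each demanding its own deviation from shortest paths --- one collision at the helper's own vertex, another at the helper's target, another at a clique gateway held by an already-settled token --- which would drive the cost past $M^\star_{\sfTS}(G, D_s, D_t) + 2$. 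There is also an internal inconsistency in your remedy: when the blocking vertex $w$ is occupied by a settled token rather than by the helper, sliding the helper elsewhere does not vacate $w$; this case needs a target swap with a token whose target lies in $K$, which your swap criterion (two tokens destined for $S$ sharing a gateway) does not cover.

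The paper closes precisely this hole with two ingredients missing from your sketch. First, it isolates a \emph{single} possible bottleneck: with the guard fixed at $s_i \in K$, every token having some shortest path avoiding $s_i$ is dispatched by \cref{lem:alwaysTSsequence} (i.e., \cref{thm:Heuvel13}) at zero extra cost, collisions among those tokens included, so only the set $\mathcal{A}$ of tokens forced through $s_i$ remains to be handled by hand. Second, for $\mathcal{A}$ it performs an explicit case analysis on $\dist_G(s_i, f(s_i)) \in \{2, 1, 0\}$, and the engine there is a cost-neutral three-slide rotation (the blocked token slides onto $s_i$, the guard slides from the gateway $u$ onto $f(s_j)$, the token now on $s_i$ slides onto $u$) that clears congestion at $u$ for free by exchanging the physical roles of the tokens; the only net deviation ever incurred is one vacate-and-return of the guard, in the subcases where it is genuinely needed, giving $M^\star_{\sfTS} + 2$, while the remaining cases close at exactly $M^\star_{\sfTS}$. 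Your one-token-at-a-time processing with a passive helper cannot reproduce this rotation, and without it (or the reduction to a single bottleneck) the bound you claim remains an assertion rather than a proof.
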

\begin{proof}
	Let $M^\star = M^\star_{\sfTS}(G, D_s, D_t)$.
	We describe an algorithm to construct a $\sfTS$-sequence of length at most $M^\star + 2$ between any pair of D$2$DSs, $D_s$ and $D_t$ of $G = (K \uplus S, E)$.
	Take any bijection $f: D_s \to D_t$ such that $M^\star = \sum_{i=1}^{k}\dist_G(s_i, f(s_i))$, where $D_s = \{s_1, \dots, s_k\}$.

Intuitively, we plan to move each token on $s_i$ to $f(s_i)$ along some shortest $s_i f(s_i)$-path. When there is an ``obstacle token'' along the path, we try the \textit{swapping targets} procedure. 
If the \textit{swapping targets} procedure does not work, we attempt to slide the ``obstacle'' token one step away, handling every other token that passes through the ``obstacle,'' and then move the ``obstacle token'' to its target. 
We claim that this entire procedure can be completed with at most $M^\star + 2$ token-slides.

Observe that Lemma~\ref{lem:alwaysTSsequence} addresses the case when $G$ has a diameter of at most $2$. 
Thus, we only need to consider the case when $G$ has a diameter of $3$. 
Since $G$ has a diameter of $3$, we have $|K| \geq 2$. We can further assume that $D_s \cap K \neq \emptyset$; otherwise, pick any $s_i \in D_s \cap S$ and slide the token on $s_i$ one step to one of its neighbors $u \in K$ in a shortest $s_i f(s_i)$-path in $G$. 
(As $G$ is connected, such a vertex $u$ exists.) 
Recall that Corollary~\ref{cor:TSfree} implies that the resulting token-set after each such move is a D$2$DS of $G$. 
Let $s_i$ be an arbitrary vertex in $D_s \cap K$. 
As $s_i \in K$, note that $s_i$ by itself is a valid D$2$DS of the graph.

We emphasize that from the proof of~\cite[Theorem 4.3]{Heuvel13}, it follows that every time Lemma~\ref{lem:alwaysTSsequence} is applied, moving a token\footnote{This token may or may not be from $s_j$, as sometimes you need to ``swap targets'' when having an ``obstacle token''.} to $f(s_j)$ always takes \textit{exactly} $\dist_G(s_j, f(s_j))$ token-slides, regardless of how the token is moved. This is the main reason why we can consider the cases separately as below and finally combine them to have a $\sfTS$-sequence with at most $M^\star + 2$ token-slides.

Observe that we can \textit{handle any token} on $s_j$, $j \neq i$, such that there exists a shortest $s_jf(s_j)$-path in $G$ that does not contain $s_i$. The reason is that, in this case, the token on $s_i$ is never moved when \textit{handling these tokens}. Thus, from the observation of Corollary~\ref{cor:TSfree}, we can apply Lemma~\ref{lem:alwaysTSsequence}.

It remains to handle every token $s_j$, $j \neq i$, where all shortest $s_jf(s_j)$-paths in $G$ pass through $s_i$. Let $\mathcal{A}$ be the set of all such tokens. 
We partition $\mathcal{A}$ into two disjoint subsets $\mathcal{A}_1$ and $\mathcal{A}_2$. $\mathcal{A}_1$ consists of all $s_j \in \mathcal{A}$ such that there exists a shortest $s_jf(s_j)$-path in $G$ that does not contain $f(s_i)$.
$\mathcal{A}_2$ consists of all $s_j \in \mathcal{A}$ such that every shortest $s_jf(s_j)$-path in $G$ contains $f(s_i)$. 
Intuitively, $\mathcal{A}_1$ consists of every $s_j$ where one can move the token on $s_j$ along some shortest $s_jf(s_j)$-path without possibly having a token on $f(s_i)$ as an ``obstacle token'', while $\mathcal{A}_2$ consists of every $s_j$ where one cannot. 
Observe that since $s_i \in K$, we have $0 \leq \dist_G(s_i, f(s_i)) \leq 2$. 
Moreover, note that if $f(s_i) \in S$, the vertex $f(s_i)$ has no token; 
otherwise, there must be some $s_j \in \mathcal{A}$ such that $s_j = f(s_i)$, 
and swapping the targets of $s_i$ and $s_j$ would result in a new bijection $f$ where $\sum_{i=1}^k\dist_G(s_i, f(s_i))$ is smaller than $M^\star$, which contradicts our assumption. 
(Note that any shortest $s_jf(s_j)$-path where $s_j \in \mathcal{A}$ contains $s_i$.)

We consider the following cases:

\begin{itemize}
        
 \item \textbf{Case~A: $\dist_G(s_i, f(s_i)) = 2$.} In this case, observe that $\mathcal{A}_2 = \emptyset$, and therefore, it suffices to show how to handle tokens on vertices in $\mathcal{A}_1$.
    Let $P_i$ be a shortest $s_if(s_i)$-path in $G$, and let $u$ be the common neighbor of $s_i$ and $f(s_i)$ in $P_i$.
    Note that $u \in K$ and $f(s_i) \in S$.
    
    \begin{itemize}
        \item Let us first consider the case where $u$ has no token.
        Again, $\mathcal{A}_1$ can be partitioned into two disjoint subsets $\mathcal{A}_{11}$ and $\mathcal{A}_{12}$, where $\mathcal{A}_{11}$ consists of all $s_j \in \mathcal{A}_1$ such that there exists a shortest $s_jf(s_j)$-path in $G$ that does not contain $u$, and $\mathcal{A}_{12}$ consists of all $s_j \in \mathcal{A}_1$ such that every shortest $s_jf(s_j)$-path in $G$ contains $u$.
        
        To handle tokens on vertices in both $\mathcal{A}_{11}$ and $\mathcal{A}_{12}$, the first step is to directly slide the token on $s_i$ to $u$, which can be done since $u \in K$.
        At this point, all tokens on $s_j \in \mathcal{A}_{11}$ can be \textit{handled} via shortest $s_jf(s_j)$-paths that do not contain $u$ using Lemma~\ref{lem:alwaysTSsequence}, as $s_i$ has no token, and this \textit{handling} process does not involve $u \in K$. 
        
        Observe that for all $s_j \in \mathcal{A}_{12}$, $s_js_iuf(s_j)$ must form a shortest $s_jf(s_j)$-path in $G$; otherwise (i.e., $s_jus_if(s_j)$ forms a shortest $s_jf(s_j)$-path in $G$), by swapping the targets of $s_i$ and $s_j$, we obtain a bijection with a total sum $\sum_{i = 1}^k\dist_G(s_i, f(s_i))$ smaller than $M^\star$, which contradicts our assumption.
        (Note that in the latter case, $\dist_G(s_i, f(s_j)) + \dist_G(s_j, f(s_i)) = 1 + 2 = 3 < \dist_G(s_i, f(s_i)) + \dist_G(s_j, f(s_j)) = 2 + 3 = 5$.)
        It remains to handle all tokens on $s_j \in \mathcal{A}_{12}$, one token at a time, as follows:
        \begin{itemize}
            \item Pick an $s_j \in \mathcal{A}_{12}$ and slide its token one step directly to $s_i$ (which is currently unoccupied). This can be done because $s_i \in K$.
            \item Slide the token currently placed on $u$ to $f(s_j)$. This can be done because $s_i$ now has a token.
            \item Slide the token currently on $s_i$ to $u$. This can be done because $u \in K$.
            \item Remove $s_j$ from $\mathcal{A}_{12}$.
            \item Repeat the process until $\mathcal{A}_{12} = \emptyset$.
        \end{itemize}
        After every token on vertices in $\mathcal{A}_{12}$ is handled, we can simply move the last token on $u$ directly to $f(s_i)$.
        Intuitively, what we are doing in the above process is to slide the token that is originally on $s_i$ to $f(s_j)$ and the token that is originally on $s_j$ to $u$ via $s_i$, all of which can be done using exactly three token-slides along the shortest $s_jf(s_j)$-path $s_js_iuf(s_j)$ (of length exactly $3$)\footnote{Intuitively, one may think of this as in order to achieve $M^\star$ token-slides, here we have three token-slides to use, and as long as we do not use more, we are fine.}.
        Thus, in this case, our constructed $\sfTS$-sequence has length exactly $M^\star$.
        
        \item To complete our construction, we consider the case where $u$ has a token.
        In this case, we simply slide that token to $f(s_i)$ (which can be done because $s_i \in K$ currently has a token) and move to the case where $u$ has no token. 
        After all tokens on vertices in $\mathcal{A}_1$ are \textit{handled}, we simply skip the final step of moving the token on $u$ to $f(s_i)$ as we have already done that from the beginning.
        Again, our constructed $\sfTS$-sequence has length exactly $M^\star$.
    \end{itemize}

\item \textbf{Case~B: $\dist_G(s_i, f(s_i)) = 1$.}
	In this case, $f(s_i)$ may be either in $S$ or in $K$.

	\begin{itemize}
		\item We first consider the case $f(s_i) \in S$.
		In this situation, $\mathcal{A}_2 = \emptyset$, and therefore, we need to \textit{handle} only tokens on vertices in $\mathcal{A}_1$.
		If every vertex in $K$ has a token, and since $|K| \geq 2$, there must be at least one vertex $u \neq s_i \in K$ having a token. 
		Thus, we first directly slide the token on $s_i$ to $f(s_i)$.
		Again, we can partition $\mathcal{A}_1$ into two disjoint subsets $\mathcal{A}_{11}$ and $\mathcal{A}_{12}$, where $\mathcal{A}_{11}$ consists of all $s_j \in \mathcal{A}_1$ such that there exists a shortest $s_jf(s_j)$-path in $G$ that does not contain $u$, and $\mathcal{A}_{12}$ consists of all $s_j \in \mathcal{A}_1$ such that every shortest $s_jf(s_j)$-path in $G$ contains $u$.

		The tokens on vertices $s_j$ in $\mathcal{A}_{11}$ can be \textit{handled} via the $s_jf(s_j)$-shortest paths that do not contain $u$ using Lemma~\ref{lem:alwaysTSsequence}, as at this point $s_i$ has no token, and the \textit{handling process} does not involve $u \in K$. 
		We \textit{handle} the tokens on vertices in $\mathcal{A}_{12}$ similarly as \textbf{Case~A}.
		(At this point, note that $s_j, u$ both have tokens while $s_i$ and $f(s_j)$ do not. We use the same strategy as in \textbf{Case~A} along the $s_jf(s_j)$-shortest path $s_js_iuf(s_j)$.)
		Again, our constructed $\sfTS$-sequence has length exactly $M^\star$ in this case.

		We now consider the case that there exists a vertex $u \in K$ that has no token.
		As before, we partition $\mathcal{A}_1$ into two disjoint subsets $\mathcal{A}_{11}$ and $\mathcal{A}_{12}$ and handle each of them similarly to \textbf{Case~A} except the final step.
		In the final step, as the vertex $u$ is not in any $s_if(s_i)$-shortest path, we do not move the token on $u$ directly to $f(s_i)$ (which may not even be possible if they are not adjacent) but we move it back to $s_i$ (which can be done because $s_i \in K$) and then finally from $s_i$ to $f(s_i)$ (which can be done because $f(s_i)$ is at this point the unique vertex in $D_t$ not having a token).
		As we use two more extra token-slides (sliding a token from $s_i$ to $u$ and later slide the token on $u$ back to $s_i$), our constructed $\sfTS$-sequence has length at most $M^\star + 2$.

		\item We now consider the case $f(s_i) \in K$.
		First, we show that at this point $f(s_i)$ does not have a token.
		Suppose to the contrary that it does, i.e., there exists $\ell$ such that $s_\ell = f(s_i)$.
		In this case, no shortest $s_\ell f(s_\ell)$-path contains $s_i$; otherwise, swapping the targets of $s_i$ and $s_\ell$ would lead to a total sum $\sum_{i=1}^k\dist_G(s_i, f(s_i))$ smaller than $M^\star$.
		Thus, $s_\ell$ must already have been handled before, which means $f(s_i)$ cannot have a token at this point, a contradiction.

		As $f(s_i)$ does not have a token, we directly slide the token on $s_i$ to $f(s_i)$.
		At this point, each $s_j \in \mathcal{A}_1$ (where there exists an $s_jf(s_j)$-shortest path that does not contain $f(s_i)$) can be \textit{handled} using Lemma~\ref{lem:alwaysTSsequence}, as currently there is an $s_jf(s_j)$-shortest path that does not contain $s_i$.
		For the tokens on $s_j \in \mathcal{A}_2$ (where both $s_i$ and $f(s_i)$ are in every $s_jf(s_j)$-path), we can then set $s_i^\prime = f(s_i) = f(s_i^\prime)$ and instead of working with the token-set $s_1, \dots, s_i, \dots, s_k$ and the value $M^\star$, we work with the token-set $s_1, \dots, s_i^\prime, \dots, s_k$ and the value $M^\star - 1$.
		As $s_i^\prime = f(s_i^\prime)$, we can reduce to \textbf{Case~C} below.
	\end{itemize}

	\item \textbf{Case~C: $\dist_G(s_i, f(s_i)) = 0$.}
	In this case, note that $\mathcal{A}_1 = \emptyset$. So we need to only deal with the set $\mathcal{A}_2$. In this case, both $s_i$ and $f(s_i)$ are the same vertex and both belong to $K$.
	If every vertex in $K$ has a token, 
	there must be some $s_j \in \mathcal{A}_2$ such that $f(s_j) \in S$ 
	has no token on it; otherwise, there is no token left to consider. 
	Let $u$ be one of $f(s_j)$'s neighbors in $K$ along a $s_jf(s_j)$-shortest path. 
	(As $G$ is connected, such a vertex $u$ exists.)
	As $u$ has a token, there exists $\ell$ such that $u = s_{\ell}$.
	We then swap the targets of $s_j$ and $s_{\ell}$ and slide the token on $u = s_{\ell}$ to its new target (which was originally $f(s_j)$).
	One can verify by considering all relative positions of $s_j, f(s_j), u = s_{\ell}, f(s_\ell)$ (note that $u$ and $f(s_j)$ are always adjacent in a $s_jf(s_j)$-shortest path) that this ``swapping targets'' procedure does not lead to any total sum $\sum_{i=1}^k\dist_G(s_i, f(s_i))$ smaller than $M^\star$.
	Now, as $u$ has no token, we reduce to the case where not every vertex in $K$ has a token.

	We now consider the case where not every vertex in $K$ has a token.
	In this case, since $|K| \geq 2$, there must be at least one vertex $u \neq s_i (f(s_i))$ in $K$ having no token.
	As before, we partition $\mathcal{A}_2$ into two disjoint subsets $\mathcal{A}_{21}$ and $\mathcal{A}_{22}$ where $\mathcal{A}_{21}$ consists of all $s_j \in \mathcal{A}_2$ such that there exists a shortest $s_jf(s_j)$-path in $G$ which does not contain $u$ (but contains $s_i = f(s_i)$) and $\mathcal{A}_{22}$ consists of all $s_j \in \mathcal{A}_2$ such that every shortest $s_jf(s_j)$-path in $G$ contains $u$ (and $s_i = f(s_i)$).
	We then \textit{handle} $\mathcal{A}_{21}$ (resp., $\mathcal{A}_{22}$) similarly to what we did before for $\mathcal{A}_{11}$ (resp., $\mathcal{A}_{12}$) in \textbf{Case~A} and \textbf{Case~B}.
	Again, since we use two extra token-slides (sliding token from $s_i$ to $u$ and then later sliding token from $u$ back to $s_i$), our constructed $\sfTS$-sequence has length $M^\star + 2$.
	(For example, see the constructed example in the proof of Lemma~\ref{lem:TS-split-shortest}. In that example, $\mathcal{A}_{22} = \emptyset$.)
	
  \end{itemize}
  \end{proof}

It remains to consider the $\sfTJ$ rule.
\begin{lemma}\label{lem:TJ-split-shortest}
	There exists a split graph $G = (K \uplus S, E)$ and two size-$k$ D$2$DSs $D_s, D_t$ of $G$ such that $\displaystyle\opt_{\sfTJ}(G, D_s, D_t) = M^\star_{\sfTJ}(G, D_s, D_t) + 1$, for $k \geq 2$.
\end{lemma}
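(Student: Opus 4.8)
The plan is to exhibit, for each $k \geq 2$, a split graph $G = (K \uplus S, E)$ together with two size-$k$ D$2$DSs $D_s, D_t$ for which the natural lower bound $M^\star_{\sfTJ}(G, D_s, D_t) = |D_s \Delta D_t|/2$ cannot be attained but can be exceeded by exactly one jump. I would first build a core gadget handling the case $|D_s \setminus D_t| = |D_t \setminus D_s| = 2$ and then pad it up to arbitrary $k$. The guiding principle, taken from the analysis preceding \cref{lem:TS-split}, is that an $S$-vertex $v$ exactly $2$-dominates $\{v\} \cup K$ together with those $S$-vertices sharing a common $K$-neighbour with $v$, whereas any single vertex of $K$ is already a D$2$DS. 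Hence a token-set contained in $S$ can fail to be a D$2$DS only by leaving some $S$-vertex $2$-undominated, and parking one token on $K$ instantly repairs any such deficiency.

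For the core gadget I would introduce four clique vertices $k_1, k_2, k_3, k_4 \in K$ and four ``critical'' independent vertices $1,2,3,4 \in S$, where vertex $j$ is adjacent only to $k_j$ (so $j$ is $2$-dominated by an $S$-token $v$ iff $v \sim k_j$). I then place four token vertices $p,q,r,s \in S$ with $p \sim k_3,k_4$; $q \sim k_1,k_2$; $r \sim k_2,k_4$; $s \sim k_1,k_3$, so that among $\{1,2,3,4\}$ the vertices $p,q,r,s$ $2$-dominate $\{3,4\}, \{1,2\}, \{2,4\}, \{1,3\}$ respectively. Setting $D_s = \{p,q\}$ and $D_t = \{r,s\}$, one checks that $\{p,q\}$ and $\{r,s\}$ each $2$-dominate all of $G$, while every ``mixed'' pair misses exactly one critical vertex ($\{p,r\}$ misses $1$, $\{p,s\}$ misses $2$, $\{q,r\}$ misses $3$, $\{q,s\}$ misses $4$). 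Finding this covering pattern---two complementary pairs that dominate while all four cross pairs leave a gap---is the crux of the argument; the combinatorial reason it needs four critical directions (two disjoint ``miss-sets'', each meeting both of the other two) is exactly what rules out smaller gadgets such as $|K| = 2$.

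With this gadget, $M^\star_{\sfTJ}(G,D_s,D_t) = |D_s \Delta D_t|/2 = 2$. The lower bound follows because any $\sfTJ$-sequence of length $2$ must be a direct swap that removes $\{p,q\}$ and adds $\{r,s\}$, so its single intermediate D$2$DS is one of the four mixed pairs; since none of these is a D$2$DS, no length-$2$ sequence exists and $\opt_{\sfTJ}(G,D_s,D_t) \geq 3$. For the matching upper bound I would give the explicit length-$3$ sequence: jump a token onto $k_1 \in K$ (after which \emph{every} set is a D$2$DS regardless of the other tokens), jump the remaining misplaced token to its target, and finally jump the parked token off $k_1$ onto the last empty target; hence $\opt_{\sfTJ}(G,D_s,D_t) = 3 = M^\star_{\sfTJ}(G,D_s,D_t) + 1$.

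Finally, to reach arbitrary $k > 2$ I would attach $k-2$ private pendant gadgets, each a fresh clique vertex $k_0^{(i)} \in K$ with a single $S$-vertex $c_i \sim k_0^{(i)}$, and set $D_s = \{p,q,c_1,\dots,c_{k-2}\}$, $D_t = \{r,s,c_1,\dots,c_{k-2}\}$. Each $c_i$ is forced to carry a token and is $2$-dominated only by itself or by a token on $K$, so the common tokens never cover any of $1,2,3,4$; thus $M^\star_{\sfTJ}$ stays $2$, every length-$2$ attempt still stalls on a mixed pair, and the same parking trick yields a length-$3$ sequence, giving $\opt_{\sfTJ} = M^\star_{\sfTJ} + 1$ for all $k \geq 2$. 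The only routine points to verify throughout are that $D_s$, $D_t$, and the single parked intermediate configuration are genuine D$2$DSs, all of which follow immediately from the ``any $K$-token dominates everything'' observation.
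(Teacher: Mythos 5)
Your proposal is correct, and it follows the same two-part skeleton as the paper's proof: a lower bound obtained by checking that every possible first direct jump from $D_s \setminus D_t$ to $D_t \setminus D_s$ leaves some vertex un-$2$-dominated, and a matching upper bound obtained by parking one token on a clique vertex (exploiting that any single vertex of $K$ is a D$2$DS), moving the remaining tokens directly, and unparking last. The difference is in the gadget. The paper scales the entire construction with $k$: it uses $3k$ clique vertices and $3k$ independent vertices arranged cyclically, takes $D_s = \{u_{11},\dots,u_{k1}\}$ and $D_t = \{w_{11},\dots,w_{k1}\}$ completely disjoint, so $M^\star_{\sfTJ} = k$ and all $k$ tokens must move; the first-jump analysis then splits into the cases $u_{i1} \to w_{i1}$ (which un-dominates $u_{i2}$) and $u_{i1} \to w_{j1}$, $j \neq i$ (which un-dominates $u_{i1}$). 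You instead build a constant-size core with $M^\star_{\sfTJ} = 2$ whose four mixed pairs each miss one of four critical $S$-vertices, and pad to arbitrary $k$ with pendant tokens common to $D_s$ and $D_t$. Your padding is sound but needs the two small observations you do supply: that in a length-$M^\star_{\sfTJ}$ sequence the shared tokens $c_i$ can never move (each jump must decrease $|D \Delta D_t|$ by exactly two), and that each $c_i$ $2$-dominates only $\{c_i\} \cup K$, so the padding never repairs a mixed pair. Each design buys something minor: the paper's shows the $+1$ gap persists even when $D_s$ and $D_t$ are disjoint (all tokens forced to move), while yours localizes the obstruction in a fixed gadget, making the case analysis finite and independent of $k$.
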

\begin{proof}

We construct a split graph $G = (K \uplus S, E)$ as follows:

\begin{itemize}
\item The set $K$ contains $3k$ vertices labeled $v_{i1}, v_{i2}$, and $v_{i3}$, for $1 \leq i \leq k$. Vertices in $K$ form a clique in $G$.

\item The set $S$ contains $3k$ vertices labeled $u_{i1}, u_{i2}, w_{i1}$ for $1 \leq i \leq k$. Vertices in $S$ form an independent set in $G$.

\item For each $i \in \{1, \dots, k\}$, we join $u_{i1}$ to both $v_{i1}$ and $v_{i2}$, join $u_{i2}$ to both $v_{i1}$ and $v_{i3}$, and join $w_{i1}$ to $v_{j3}$, where $j = i + 1$ if $i \leq k - 1$ and $j = 1$ if $i = k$.
\end{itemize}

Let $D_s = \bigcup_{i=1}^k \{u_{i1}\}$ and $D_t = \bigcup_{i=1}^k \{w_{i1}\}$.
	
		\begin{figure}[ht]
			\centering
			\includegraphics[width=0.8\textwidth]{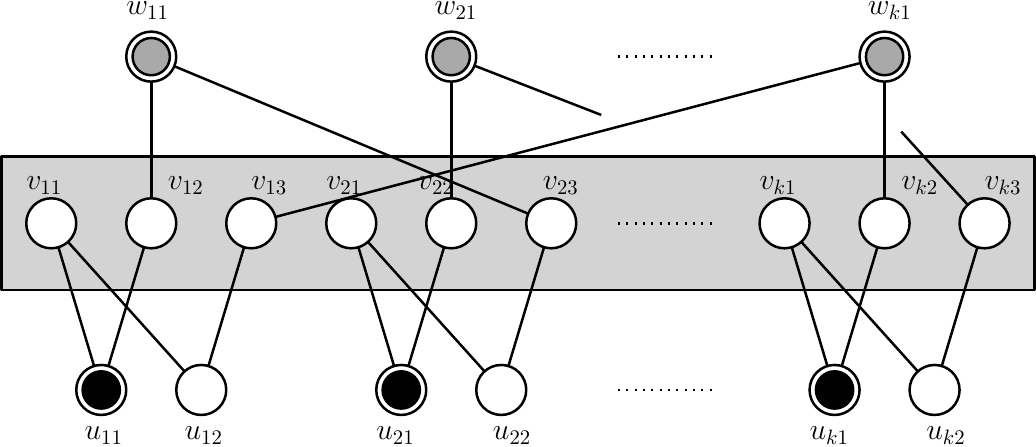}
			\caption{Construction of a split graph $G = (K \uplus S, E)$ satisfying Lemma~\ref{lem:TJ-split-shortest}. Vertices in the light gray box are in $K$. Tokens in $D_s$ and $D_t$ are respectively marked by black and gray circles.}
			\label{fig:TJ-split}
		\end{figure}

  To see that $D_s$ is a D$2$DS of $G$, note that each $u_{i1}$ 2-dominates every vertex in $K \cup \{u_{i1}\} \cup \{u_{i2}\} \cup \{w_{i1}\}$. To see that $D_t$ is a D$2$DS of $G$, note that each $w_{i1}$ 2-dominates every vertex in $K \cup \{u_{i1}\} \cup \{u_{j2}\} \cup \{w_{i1}\}$, where $j = i + 1$ if $i \leq k - 1$ and $j = 1$ if $i = k$.

Let $M^\star = M^\star_{\sfTJ}(G, D_s, D_t)$. Observe that any $\sfTJ$-sequence of length exactly $M^\star$ must begin with a direct token-jump from some $u_{i1}$ to some $w_{j1}$ for some $1 \leq i, j \leq k$. We claim that such a move cannot be performed:

\begin{itemize}
	\item If the token on $u_{i1}$ is moved directly to any $w_{j1}$ where $i \neq j$, the vertex $u_{i1}$ is not $2$-dominated by any token in the resulting token-set.
	\item If the token on $u_{i1}$ is moved directly to $w_{i1}$, the vertex $u_{i2}$ is not $2$-dominated by any token in the resulting token-set.
\end{itemize}

On the other hand, a $\sfTJ$-sequence of length exactly $M^\star + 1$ can be constructed: first jumping the token on $u_{11}$ to $v_{11}$, then for $2 \leq i \leq k$, directly jumping the token on $u_{i1}$ to $w_{i1}$, and finally jumping the token on $v_{11}$ to $w_{11}$. As before, since a token is always placed at $v_{11} \in K$ after the first token-jump and before the final one, the above sequence of token-jumps is indeed a $\sfTJ$-sequence in $G$.
\end{proof}

\begin{lemma}\label{lem:TJ-split-shortest-bound}
	For any instance $(G, D_s, D_t)$ of \textsc{D$2$DSR} under $\sfTJ$ on split graphs, $\opt_{\sfTJ}(G, D_s, D_t) \leq M^\star_{\sfTJ}(G, D_s, D_t) + 1$.
\end{lemma}
\begin{proof}
We use the same strategy as in the proof of Lemma~\ref{lem:TJ-split-shortest}. Let $M^\star = M^\star_{\sfTJ}(G, D_s, D_t)$. To construct a $\sfTJ$-sequence of length exactly $M^\star + 1$ between any pair of $D_s$ and $D_t$ of $G = (K \uplus S, E)$, we proceed as follows:

\begin{enumerate}[(1)]
	\item  If $D_s \cap K = \emptyset$, we first move a token $t$ in $D_s$ to some vertex in $K$ (which takes one token-jump).
	\item We then move every other token in $D_s \setminus D_t$ to some vertex in $D_t \setminus D_s$ (which takes $M^\star - 1$ token-jumps).
	\item Finally, we move $t$ to the unique remaining unoccupied vertex in $D_t$ (which takes one token-jump).
\end{enumerate}

Since $t$ is always located in $K$ except possibly before the initial jump and after the final jump, our constructed sequence is a $\sfTJ$-sequence, and it has length $M^\star + 1$.
\qed\end{proof}

\subsection{Trees}
\label{sec:trees}

By Corollary~\ref{cor:dually-chordal}, it follows that \textsc{D$r$DSR} under $\sfTJ$ on trees can be solved in quadratic time. 
In this section, we improve the running time as follows.

\begin{theorem}\label{thm:TJ-trees}
	\textsc{D$r$DSR} under $\sfTJ$ on trees can be solved in linear time for any $r \geq 2$.
\end{theorem}

To prove this theorem, we extend the idea of Haddadan et al. \cite{HaddadanIMNOST16} for $r = 1$ under $\sfTAR$ and the linear-time algorithm of Kundu and Majumder~\cite{KunduM16} for finding a minimum D$r$DS on trees. 
In particular, we employ a simpler implementation of Kundu and Majumder's algorithm presented by Abu-Affash, Carmi, and Krasin~\cite{Abu-AffashCK22}. 
Specifically, based on the minimum D$r$DS $D^\star$ obtained from the implementation of Abu-Affash, Carmi, and Krasin, we construct a partition $\mathbb{P}(T)$ of $T$ consisting of $\gamma_r(T)$ vertex-disjoint subtrees, each containing exactly one vertex of $D^\star$. 
(Haddadan~et~al. called such a set $D^\star$ a \textit{canonical} dominating set.) 
For convenience, we denote by $C_x$ the member of $\mathbb{P}(T)$ whose intersection with $D^\star$ is the vertex $x$. 
We claim that $\mathbb{P}(T)$ satisfies the following property: for any D$r$DS $D$ of $G$, each member of $\mathbb{P}(T)$ contains at least one vertex in $D$. 
Using this property, one can design a linear-time algorithm for constructing a $\sfTJ$-sequence between any pair of size-$k$ D$r$DSs $D_s, D_t$ of $G$. 
The key idea is to transform both $D_s$ and $D_t$ into some D$r$DS $D$ that contains $D^\star$. 
For instance, to transform $D_s$ into $D$, for each subtree $C_x \in \mathbb{P}(T)$ for $x \in D^\star$, we move any token in $D_s \cap V(C_x)$ to $x$. 
If we handle each subtree $C_x$ based on the order of subtrees added to $\mathbb{P}(T)$ in our modified implementation, such a transformation will form a $\sfTJ$-sequence in $T$. 
After this procedure, we obtain a set of tokens $D^\prime$ that contains $D^\star$, and since $D^\star$ is a minimum D$r$DS of $G$, transforming $D^\prime$ into $D$ under $\sfTJ$ can now be done easily: until there are no tokens to move, repeatedly take a token in $D^\prime - D$, move it to some vertex in $D - D^\prime$, and update both $D$ and $D^\prime$.

We now define some notations and, for the sake of completeness, describe the algorithm of Abu-Affash, Carmi, and Krasin~\cite{Abu-AffashCK22}.
In a graph $G$, for a vertex subset $D \subseteq V(G)$ and a vertex $u \in V(G)$, we define $\delta_D(u) = \min_{v \in D}\dist_G(u, v)$ and call it the \textit{distance} between $u$ and $D$.
Observe that a vertex $u$ is $r$-dominated by $D$ if $\delta_D(u) \leq r$ and therefore $D$ is a D$r$DS of $G$ if for every $u \in V(G)$ we have $\delta_D(u) \leq r$.
For a $n$-vertex tree $T$, let $T_u$ be the \textit{rooted form} of $T$ when regarding the vertex $u \in V(T)$ as the root.
For each $v \in V(T_u)$, we denote by $T_v$ the subtree of $T_u$ rooted at $v$.
In other words, $T_v$ is the subtree of $T_u$ induced by $v$ and its descendants.
We also define $h(T_v) = \max_{w \in V(T_v)}\dist_{T_u}(v, w)$ and call it the \textit{height} of $T_v$.
In other words, $h(T_v)$ is the largest distance from $v$ to a vertex in $T_v$.
The set of children of $v$ in $T_u$ is denoted by $\textit{child}(v)$.

The algorithm is described in Algorithm~\ref{algo:minDrDS}.
In short, in each iteration, it finds a subtree $T_v$ of height exactly $r$, adds $v$ to $D^\star$, and removes all the vertices of $T_u$ that are in $N^r_{T_u}[v]$.
To implement the algorithm in $O(n)$ time, a modified version of the depth-first search (DFS) algorithm was used in~\cite{Abu-AffashCK22}
(Function $\mathtt{ModifiedDFS}$ in Algorithm~\ref{algo:minDrDS}).
The procedure $\mathtt{ModifiedDFS}$ visits the vertices of $T_u$ starting from the root $u$ and recursively visits each of its children, which means vertices in $D^\star$ would be added in a ``bottom-up'' fashion.
In each recursive call $\mathtt{ModifiedDFS}(v)$, if $h(T_v) = r$ then $v$ is added to $D^\star$ and since all vertices in $T_v$ is $r$-dominated by $v$, we remove them from $T_u$ and return $\delta_{D^\star}(v) = 0$.
Otherwise ($h(T_v) \neq r$), we call $\mathtt{ModifiedDFS}(w)$ for each child $w$ of $v$, and we update $\delta_{D^\star}(v)$ and $h(T_v)$ according to these calls.
When these calls return, we have $h(T_v) \leq r$.
Then we check whether $\delta_{D^\star}(v) + h(T_v) \leq r$.
If so (which means the current $D^\star$ $r$-dominates $v$ and all its descendants in the original rooted tree $T_u$), we remove all the vertices of $T_v$ from $T_r$ and return $\delta_{D^\star}(v)$.
Otherwise ($\delta_{D^\star}(v) + h(T_v) > r$), we check again whether $h(T_v) = r$ (in case the descendants reduced the height of $T_v$ to $r$).
If so, we add $v$ to $D^\star$, remove all the vertices of $T_v$ from $T_u$, and return $\delta_{D^\star}(v) = 0$.
Otherwise ($h(T_v) < r$), we return $\infty$.
Finally, when $\delta_{D^\star}(u) = \infty$, we add $u$ to $D^\star$.

\begin{algorithm}[ht]
	\KwIn{A tree $T_u$ rooted at $u$.}
	\KwOut{A minimum distance-$r$ dominating set $D^\star$ of $T_u$.}
	\SetArgSty{textbb} 
	\DontPrintSemicolon
	
	$D^\star \gets \emptyset$\;
	\For{each $v \in V(T_u)$}{
		compute $h(T_v)$\;
		$\delta_{D^\star}(v) \gets \infty$\;
	}
	$\delta_{D^\star}(u) \gets \mathtt{ModifiedDFS}(u)$\;
	\If{$\delta_{D^\star}(u) = \infty$}{
		$D^\star \gets D^\star + u$\;
	}
	\Return{$D^\star$}
	
	\BlankLine
	
	\SetKwFunction{MDFS}{ModifiedDFS}
	\SetKwProg{Fn}{Function}{:}{}
	
	\Fn{\MDFS{$v$}}{
		\If{$h(T_v) = r$}{
			$D^\star \gets D^\star + v$\;
			$T_u \gets T_u - T_v$\;
			$h(T_v) \gets -1$\;
			$\delta_{D^\star}(v) \gets 0$\;
		}
		\Else(\tcp*[f]{$h(T_v) > r$}){
			\For{each $w \in \textit{child}(v)$}{
				\If{$h(T_w) \geq r$}{
					$\delta_{D^\star}(v) \gets \min\{\delta_{D^\star}(v), \mathtt{ModifiedDFS}(w) + 1\}$
				}
			}
			$h(T_v) \gets \max\{h(T_w) + 1: w \in \textit{child}(v)\}$\tcp*{updating $h(T_v)$}
			\If{$h(T_v) + \delta_{D^\star}(v) \leq r$}{
				$T_u \gets T_u - T_v$\;
				$h(T_v) \gets -1$\;
			}
			\If{$h(T_v) = r$}{
				$D^\star \gets D^\star + v$\;
				$T_u \gets T_u - T_v$\;
				$h(T_v) \gets -1$\;
				$\delta_{D^\star}(v) \gets 0$\;
			}
			\Else{
				$\delta_{D^\star}(v) \gets \infty$\;
			}
		}
		
		\Return{$\delta_{D^\star}(v)$}
	}
	
	\caption{$\mathtt{MinDrDSTree}(T_u)$}
	\label{algo:minDrDS}
\end{algorithm}

To illustrate Algorithm~\ref{algo:minDrDS}, we consider the example from~\cite{Abu-AffashCK22} for $r = 2$ with the tree $T_u$ rooted at $u = 1$ as described in \figurename~\ref{fig:TJ-trees}.
The first vertex added to $D^\star$ is $7$ in $\mathtt{ModifiedDFS}(7)$, since $h(T_7) = 2$.
In this call, we remove $T_7$ from $T_u$, update $h(T_7) = -1$ and return $\delta_{D^\star}(7) = 0$ to $\mathtt{ModifiedDFS}(4)$.
In $\mathtt{ModifiedDFS}(4)$, we update $\delta_{D^\star}(4) = 1$ and, after traversing vertices $6$ and $8$, $h(T_4) = 1$, and since $h(T_4) + \delta_{D^\star}(4) = 2 = r$ and $7$ is the latest vertex added to $D^\star$, we remove $T_4$ from $T_u$ and return $\delta_{D^\star}(4) = 1$ to $\mathtt{ModifiedDFS}(2)$.
In $\mathtt{ModifiedDFS}(2)$, since $h(T_2) = 3 > r$, we call $\mathtt{ModifiedDFS}(5)$ which adds $5$ to $D^\star$, removes $T_5$ from $T_u$, and returns $\delta_{D^\star}(5) = 0$.
Then, we update $\delta_{D^\star}(2) = 1$ and $h(T_2) = 0$, and since $h(T_2) + \delta_{D^\star}(2) = 1 < r$ and $5$ is the latest vertex added to $D^\star$, we remove $T_2$ from $T_u$, and return $\delta_{D^\star}(2) = 1$ to $\mathtt{ModifiedDFS}(1)$.
In $\mathtt{ModifiedDFS}(1)$, since $\delta_{D^\star}(1) = 2$ and, after traversing $3$, $h(T_1) = 1$, we return $\delta_{D^\star}(1) = \infty$ to Algorithm~\ref{algo:minDrDS}, and therefore, we add $1$ to $D^\star$.

\begin{figure}[ht]
	\centering
	\includegraphics[width=0.35\textwidth]{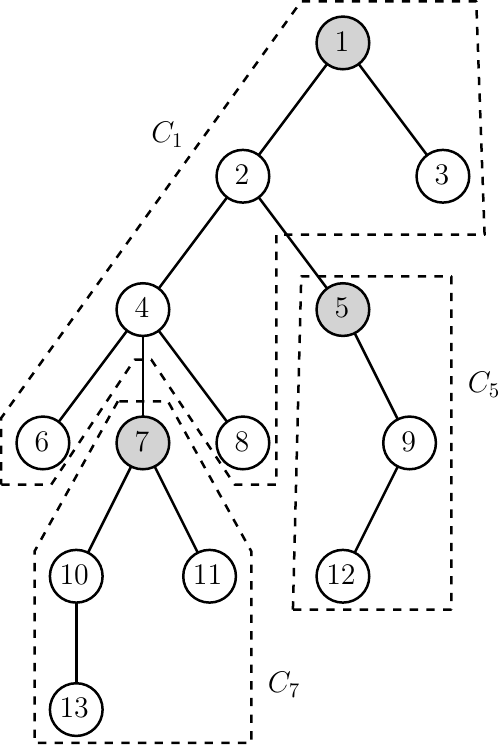}
	\caption{A tree $T_u$ rooted at $u = 1$. For $r = 2$, Algorithm~\ref{algo:minDrDS} returns $D^\star = \{7, 5, 1\}$. A partition $\mathbb{P}(T_u) = \{C_7, C_5, C_1\}$ of $T_u$ is also constructed.}
	\label{fig:TJ-trees}
\end{figure}

We now describe how to construct our desired partition $\mathbb{P}(T_u)$.
Recall that $\mathbb{P}(T_u)$ is nothing but a collection of vertex-disjoint subtrees whose union is the original tree $T_u$.
Suppose that $D^\star$ is the minimum D$r$DS of $T_u$ obtained from Algorithm~\ref{algo:minDrDS} and furthermore assume that vertices of $D^\star$ are ordered by the time they were added to $D^\star$.
For each $v \in D^\star$, we define $C_v$ (the unique member of $\mathbb{P}(T_u)$ containing $v$) as $T_v$ (the subtree of $T_u$ rooted at $v$) and then delete $T_v$ from $T_u$.
\figurename~\ref{fig:TJ-trees} illustrates how to construct $\mathbb{P}(T_u)$ in the above example.
From the construction, it is clear that each member of $\mathbb{P}(T_u)$ contains exactly one vertex from $D^\star$.
We say that two subtrees $C_x, C_y$ in $\mathbb{P}(T_u)$ are \textit{adjacent} if there exists $v \in V(C_x)$ and $w \in V(C_y)$ such that $vw \in E(T_u)$.
If a subtree contains the root $u$ then we call it the \textit{root subtree}.
Otherwise, if a subtree has exactly one adjacent subtree then we call it a \textit{leaf subtree} and otherwise an \textit{internal subtree}.

We now claim that the constructed partition $\mathbb{P}(T_u)$ satisfies the following property.
\begin{lemma}\label{lem:partition-tree}
	Let $D$ be any D$r$DS of $T_u$.
	Then, $D \cap V(C_v) \neq \emptyset$ holds for every $v \in D^\star$. 
\end{lemma}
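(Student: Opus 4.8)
The plan is to establish the lemma through a local \emph{private witness} argument. For each $v \in D^\star$ I will exhibit a vertex $p_v \in V(C_v)$ such that $N^r_{T_u}[p_v] \subseteq V(C_v)$, i.e.\ every vertex of $T_u$ within distance $r$ of $p_v$ already belongs to $C_v$. Granting this, the lemma is immediate: since $D$ is a D$r$DS of $T_u$, the witness $p_v$ is $r$-dominated by some $d \in D$, and then $d \in N^r_{T_u}[p_v] \subseteq V(C_v)$, so $d \in D \cap V(C_v)$, which is therefore nonempty. Thus the whole task reduces to producing the witnesses $p_v$.

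For the witness I will take $p_v := \ell_v$, the leaf of the reduced subtree rooted at $v$ that realises $h(T_v) = r$ at the moment \cref{algo:minDrDS} inserts $v$ into $D^\star$; by definition $\dist_{T_u}(v, \ell_v) = r$. Two preliminary facts will be recorded. \textbf{(i)} When \texttt{ModifiedDFS} processed $\ell_v$ (which, by the post-order traversal, happens strictly before $v$), the vertex $\ell_v$ was neither added to $D^\star$ nor absorbed by the rule $h(T_{\ell_v}) + \delta_{D^\star}(\ell_v) \le r$; since $\ell_v$ is a leaf of the reduced tree at that time, $h(T_{\ell_v}) = 0$, so survival forces $\delta_{D^\star}(\ell_v) > r$, meaning no vertex inserted into $D^\star$ before $\ell_v$ lies within distance $r$ of it. \textbf{(ii)} $\ell_v \in V(C_v)$: no $D^\star$-vertex can lie strictly between $\ell_v$ and $v$ on their connecting path, for such a vertex $m$ would be processed before $v$, and its insertion would delete the reduced subtree containing the still-present $\ell_v$, contradicting $\ell_v \in T_v$ at the time $v$ is inserted.

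It remains to verify $N^r_{T_u}[\ell_v] \subseteq V(C_v)$, which I will do by splitting $N^r_{T_u}[\ell_v]$ according to position relative to the $\ell_v$--$v$ path. Any vertex whose path to $\ell_v$ leaves the subtree rooted at $v$ (i.e.\ passes above $v$) is at distance at least $\dist_{T_u}(\ell_v, v) + 1 = r + 1$ and is hence excluded. Every remaining vertex $y \in N^r_{T_u}[\ell_v]$ is a descendant of $v$, and I will argue that $v$ is the earliest-inserted $D^\star$-ancestor of $y$, so that the partition construction assigns $y$ to $C_v$. Descendants of $\ell_v$ are handled by fact (ii): their nearest $D^\star$-ancestor is $v$. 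For a vertex $y$ hanging off a vertex $a$ that is a \emph{proper} descendant of $v$ on the $\ell_v$--$v$ path, in a side branch, the only danger is a $D^\star$-vertex $w'$ that is an ancestor of $y$ in that branch and was inserted before $v$; such a $w'$ would satisfy $\dist_{T_u}(\ell_v, w') \le \dist_{T_u}(\ell_v, y) \le r$.

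The heart of the proof---and the step I expect to be the main obstacle---is ruling out such a nearby $w'$. Here I will combine the absorption rule with the fact that $\ell_v$ realises the height of $T_v$. If $w'$ were inserted before $\ell_v$ was processed, fact (i) already forbids $\dist_{T_u}(\ell_v, w') \le r$. Otherwise $w'$ is inserted after $\ell_v$ but before $v$; then, because $\ell_v$ is a deepest leaf of $T_v$, its branch is a tallest branch at the common ancestor $a$, giving $h(T_a) = \dist_{T_u}(a, \ell_v)$ (any taller branch at $a$ would force $h(T_v) > r$). Since $w'$ is a $D^\star$-vertex below $a$, we have $\delta_{D^\star}(a) \le \dist_{T_u}(a, w')$, whence $h(T_a) + \delta_{D^\star}(a) \le \dist_{T_u}(a,\ell_v) + \dist_{T_u}(a, w') = \dist_{T_u}(\ell_v, w') \le r$. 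Thus \texttt{ModifiedDFS} would absorb the subtree at $a$ (which is processed before $v$), deleting $\ell_v$ before $v$ is inserted---contradicting $\ell_v \in T_v$. Hence no such $w'$ exists, $v$ is the earliest $D^\star$-ancestor of every such $y$, and $N^r_{T_u}[\ell_v] \subseteq V(C_v)$, which completes the plan.
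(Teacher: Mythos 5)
Your overall strategy---exhibit a private witness $p_v$ with $N^r_{T_u}[p_v] \subseteq V(C_v)$ and then let the domination of $p_v$ force a token into $C_v$---is exactly the paper's, but your choice of witness creates two genuine gaps. First, a leaf $\ell_v$ with $\dist_{T_u}(v,\ell_v)=r$ simply does not exist for \emph{every} $v \in D^\star$: when the root $u$ is added in the last step of \cref{algo:minDrDS} (because $\mathtt{ModifiedDFS}(u)$ returned $\infty$), the surviving subtree rooted at $u$ has height strictly \emph{less} than $r$ (had its height reached $r$, $u$ would already have been inserted inside $\mathtt{ModifiedDFS}$ and the final addition would never run). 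Your exclusion of all vertices whose path to $\ell_v$ passes above $v$ rests entirely on $\dist_{T_u}(\ell_v,v)=r$, so this case cannot be repaired by just taking a deepest surviving leaf; the paper treats the root subtree as a separate case with a different witness (a vertex of $C_v$ not $r$-dominated by the earlier-inserted part of $D^\star$), exploiting the fact that for the root subtree there are no later-inserted subtrees to exclude.

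Second, both your fact (i) and the ``heart'' misread what the algorithm's variable $\delta_{D^\star}(\cdot)$ is. $\mathtt{ModifiedDFS}(\ell_v)$ is typically never called at all: the algorithm only recurses into children $w$ with $h(T_w)\geq r$, and $\ell_v$ will usually be an original leaf of height $0$, so there is no moment at which ``$\ell_v$ survives its own absorption check.'' Moreover, the algorithm's $\delta_{D^\star}(x)$ is computed solely from return values of calls on \emph{descendants} of $x$; it records nothing about $D^\star$-vertices in sibling branches, and it can equal $\infty$ even though a $D^\star$-vertex sits within distance $r$ below $x$, because the $\infty$-return (the case $h(T_x)+\delta_{D^\star}(x)>r$ with $h(T_x)<r$) discards the finite value instead of passing it upward. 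Consequently your key inequality $\delta_{D^\star}(a)\leq\dist_{T_u}(a,w')$, and with it the claim that the subtree at $a$ ``would be absorbed,'' does not follow as stated: the branch of $a$ containing $w'$ may have returned $\infty$, in which case the absorption rule you invoke is never triggered. The conclusion you are after (no $D^\star$-vertex inserted before $v$ lies within distance $r$ of $\ell_v$) is plausible, but proving it this way requires an induction along the chain from $w'$ up to $a$ that tracks how $\infty$-returns force insertions higher up the chain. The paper sidesteps all of this by \emph{defining} the witness as a vertex of $C_v$ that is not $r$-dominated by the earlier-inserted vertices---such a vertex exists precisely because the algorithm inserted $v$ rather than absorbing its subtree---which makes the ``earlier'' direction true by construction and leaves only the easy distance argument (via the exact height $r$, plus the observation that any vertex of an earlier $C_w$ is separated from the witness by $w$) for everything else.
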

\begin{proof}
	We claim that for each $v \in D^\star$, one can find a vertex $v^\prime \in V(C_v)$ such that $N^r_T[v^\prime] \subseteq V(C_v)$.
	For each $v \in D^\star$, let $D^\star_v$ be the set of all vertices added to $D^\star$ before $v$.
	
	If $C_v$ is a leaf subtree, we take any leaf in $C_v$ of distance exactly $r$ from $v$ and regard it as $v^\prime$.
	Clearly, $v^\prime$ is also a leaf of $T_u$ and is not $r$-dominated by any vertex outside $C_v$, i.e., $N^r_T[v^\prime] \subseteq V(C_v)$.
	
	If $C_v$ is an internal subtree, we describe how to find our desired $v^\prime$.
	From Algorithm~\ref{algo:minDrDS}, since $v$ is the next vertex added to $D^\star_v$, it follows that there must be some vertex in $V(C_v)$ not $r$-dominated by any member of $D^\star_v$; we take $v^\prime$ to be the one having maximum distance from $v$ among all those vertices.
	By definition, $v^\prime$ is clearly not $r$-dominated by any vertex in a $C_w$ where $w \in D^\star_v$.
	Since $C_v$ is an internal subtree, by Algorithm~\ref{algo:minDrDS}, the distance between $v$ and $v^\prime$ must be exactly $r$ and therefore no vertex in a $C_w$, where $w \in D^\star - D^\star_v - v$, $r$-dominates $v$.
	(Recall that by Algorithm~\ref{algo:minDrDS}, since $v$ is added to $D^\star_v$, the current subtree $T_v$ must have height exactly $r$.)
	Thus, $N_T^r[v^\prime] \subseteq V(C_v)$.
	
	If $C_v$ is the root subtree, again we can choose $v^\prime$ using exactly the same strategy as in the case for internal subtrees.
	The main difference here is that, by Algorithm~\ref{algo:minDrDS}, the distance between $v^\prime$ and $v$ may not be exactly $r$.
	However, since $C_v$ contains the root $u$, $v$ is the last vertex added to $D^\star$.
	(Intuitively, this means $C_v$ has no ``parent subtree'' above it.)
	Therefore, in order to show that $N_T^r[v^\prime] \subseteq V(C_v)$, it suffices to show that no vertex in a $C_w$, where $w \in D^\star_v$, $r$-dominates $v^\prime$.
	Indeed, this clearly holds by definition of $v^\prime$.
	
	We are now ready to prove the lemma.
	Suppose to the contrary that there exists $v \in D^\star$ such that $D \cap V(C_v) = \emptyset$.
	Then, $D$ does not $r$-dominate $v^\prime$---a vertex in $C_v$ with $N_T^r[v^\prime] \subseteq V(C_v)$.
	This contradicts the assumption that $D$ is a D$r$DS.
	Our proof is complete.
\end{proof}

The following lemma is crucial in proving Theorem~\ref{thm:TJ-trees}.
\begin{lemma}\label{lem:TJ-trees}
	Let $D$ be an arbitrary D$r$DS of $T_u$.
	Let $D^\prime$ be any D$r$DS of $T_u$ that contains $D^\star$, i.e., $D^\star \subseteq D^\prime$.
	Then, in $O(n)$ time, one can construct a $\sfTJ$-sequence $\calS$ in $T_u$ between $D$ and $D^\prime$.
\end{lemma}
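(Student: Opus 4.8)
The plan is to route both endpoints through an intermediate D$r$DS that already contains the canonical set $D^\star$, so that the construction splits into two phases. Since $\sfTJ$ preserves the number of tokens, a sequence can exist only when $|D| = |D'|$, which I assume. First I would transform $D$ into some D$r$DS $D''$ with $D^\star \subseteq D''$; then I transform $D''$ into $D'$, exploiting that $D'$ already contains $D^\star$. Concatenating the two subsequences yields $\calS$.

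For the first phase I process the members of $\mathbb{P}(T_u)$ in the order $v_1, v_2, \dots$ in which \cref{algo:minDrDS} adds the vertices of $D^\star$ (a bottom-up order). Writing $D_i$ for the current token-set and maintaining the invariant $\{v_1,\dots,v_i\} \subseteq D_i$, upon reaching $C_{v_{i+1}}$ (put $v = v_{i+1}$) I use \cref{lem:partition-tree} to locate a token on some $z \in D_i \cap V(C_v)$ and, if $v$ is unoccupied, jump it from $z$ to $v$; this costs at most $\gamma_r(T_u)$ jumps and produces $D''$ with $D^\star \subseteq D''$ and $|D''| = |D|$. The second phase is then immediate: because $D^\star$ is itself a D$r$DS, \emph{every} superset of $D^\star$ is a D$r$DS, so I may repeatedly take a token on a vertex of $D'' \setminus D'$ (none of which lies in $D^\star \subseteq D'$) and jump it onto an arbitrary vacant vertex of $D' \setminus D''$, with every intermediate set containing $D^\star$ and hence remaining a D$r$DS.

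The hard part will be \textbf{verifying that each first-phase jump $z \to v$ yields a D$r$DS}. I would check an arbitrary vertex $x$ by cases. If $x \notin C_v$ was previously $r$-dominated only by $z$, then—since $C_v$ is the subtree rooted at $v$ and its only edge to the rest of $T_u$ is the parent edge at $v$—every path from $z$ to $x$ runs through $v$, whence $\dist(v,x) \le \dist(z,x) \le r$, so the new token on $v$ still $r$-dominates $x$. If $x \in C_v$ with $\dist(v,x) \le r$, then $v$ covers it. The crux is $x \in C_v$ with $\dist(v,x) > r$: such an $x$ cannot belong to the height-$r$ subtree that \cref{algo:minDrDS} saw rooted at $v$ when adding $v$, so it must be a vertex the algorithm had earlier removed as \emph{already $r$-dominated} (rather than as a member of $D^\star$); that removal certifies a $D^\star$-vertex $d$ with $\dist(x,d) \le r$, and a short argument shows $d$ is a proper descendant of $v$, hence equals some $v_j$ with $j \le i$, so its token survives in $D_{i+1}$ and still $r$-dominates $x$. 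Tying the partition $\mathbb{P}(T_u)$ back to the order and the removal rule of \cref{algo:minDrDS} is exactly the delicate step.

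Finally, for the running time, computing $D^\star$ and $\mathbb{P}(T_u)$ takes $O(n)$; recording for each vertex the subtree $C_v$ containing it lets me bucket the tokens of $D$ by subtree and emit each first-phase jump in $O(1)$, while the second phase is a single scan of $D'' \Delta D'$. As the total number of jumps is $O(n)$, the whole construction runs in $O(n)$ time.
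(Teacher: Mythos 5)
Your two-phase plan is exactly the paper's proof: Phase 1 is the paper's Step 1, which moves, for each $v \in D^\star$ in the order produced by \cref{algo:minDrDS}, a token of $D \cap V(C_v)$ (guaranteed to exist by \cref{lem:partition-tree}) onto $v$; Phase 2 is the paper's Step 2, where tokens of the symmetric difference are jumped arbitrarily, which is safe because every intermediate set contains the D$r$DS $D^\star$. The equal-cardinality remark and the $O(n)$ accounting also match the paper.

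There is, however, an unsound step in your Case 1, caused by conflating two meanings of $C_v$. As a member of the partition $\mathbb{P}(T_u)$ --- which is what it must be for \cref{lem:partition-tree} to hand you the token $z$ --- $C_v$ is the subtree rooted at $v$ \emph{minus} the earlier-created subtrees $C_{v_j}$ ($j \le i$) that hang below it. Hence it is false that ``its only edge to the rest of $T_u$ is the parent edge at $v$'': $C_v$ also has boundary edges descending into those earlier subtrees, so the unique $z$--$x$ path for $x \notin C_v$ need not pass through $v$, and the inference $\dist_{T_u}(v,x) \le \dist_{T_u}(z,x)$ collapses. (For instance, with $r=2$, let $v_j$ be a grandchild of $v$ that roots an earlier subtree, $z$ the parent of $v_j$, and $x$ a child of $v_j$: then $\dist_{T_u}(z,x)=2$ but $\dist_{T_u}(v,x)=3$.) The conclusion of Case 1 is still true, but proving it requires ruling out the downward exit: if the path leaves $C_v$ through a down-edge, then $x$ lies in some earlier class $C_{v_j}$, and your own Case 3 machinery (every vertex of a class $C_w$ is within distance $r$ of $w$ or of a $D^\star$-vertex added before $w$), applied to $C_{v_j}$ together with your invariant that $v_1,\dots,v_i$ already carry tokens, shows $x$ is $r$-dominated by a token other than $z$, contradicting the Case 1 hypothesis. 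Alternatively, you can run the whole case analysis with $C_v$ replaced by the full subtree of $T_u$ rooted at $v$ (whose only boundary edge really is the parent edge at $v$), but then Case 3 must also allow $x$ to have been removed as part of an earlier $D^\star$-addition $T_{v_j}$, in which case $\dist_{T_u}(x,v_j)\le r$ holds directly. Either repair is short, so the gap is local, but as written the justification of the ``hard part'' you singled out does not stand.
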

\begin{proof}
	We construct $\calS$ as follows. 
	Initially, $\calS = \emptyset$.
	\begin{enumerate}[{\bf {Step} 1:}]
		\item For each $v \in D^\star$, let $x$ be any vertex in $D \cap V(C_v)$.
		From Lemma~\ref{lem:partition-tree}, such a vertex $x$ exists.
		We append $x \reconf[\sfTJ]{T_u} v$ to $\calS$ and assign $D \gets D - x + v$.
		(After this step, clearly $D^\star \subseteq D \cap D^\prime$.)
		
		\item Let $x \in D - D^\prime$ and $y \in D^\prime - D$. 
		We append $x \reconf[\sfTJ]{T_u} y$ to $\calS$ and assign $D \gets D - x + y$.
		Repeat this step until $D = D^\prime$.
	\end{enumerate}
	
	For each $v \in D^\star$, let $D^\star_v$ be the set of all vertices added to $D^\star$ before $v$.
	Since any vertex $r$-dominated by $x$ and not in $C_v$ is $r$-dominated by either $v$ or a member of $D^\star_v$, any move performed in \textbf{Step~1} results a new D$r$DS of $T_u$.
	Note that after {\bf Step~1}, $D^\star \subseteq D \cap D^\prime$.
	Thus, any move performed in {\bf Step~2} results a new D$r$DS of $T_u$.
	In short, $\calS$ is indeed a $\sfTJ$-sequence in $T_u$.
	In the above construction, as we ``touch'' each vertex in $D$ at most once, the running time is indeed $O(n)$.
\end{proof}

Using Lemma~\ref{lem:TJ-trees}, it is not hard to prove Theorem~\ref{thm:TJ-trees}.
More precisely, let $(T, D_s, D_t)$ be an instance of \textsc{D$r$DSR} under $\sfTJ$ where $D_s$ and $D_t$ are two D$r$DSs of a tree $T$.
By Lemma~\ref{lem:TJ-trees}, one can immediately decide if $(T, D_s, D_t)$ is a yes-instance by comparing the sizes of $D_s$ and $D_t$: if they are of the same size then the answer is ``yes'' and otherwise it is ``no''.
Moreover, in a yes-instance, Lemma~\ref{lem:TJ-trees} allows us to construct in linear time a $\sfTJ$-sequence (which is not necessarily a shortest one) between $D_s$ and $D_t$.

\section{Hardness Results}
\label{sec:hardness}

We begin this section with the following remark.
It is well-known that for any computational problem in $\ttNP$, any of its reconfiguration variants is in $\ttPSPACE$ (see the proof of \cite[Theorem~1]{ItoDHPSUU11}). 
Hence, as \textsc{D$r$DS} is in  $\ttNP$~\cite{ChangN84}, \textsc{D$r$DSR} is in $\ttPSPACE$.
As a result, when proving the $\ttPSPACE$-completeness of \textsc{D$r$DSR} on a certain graph class, it suffices to show a polynomial-time reduction.

\subsection{Base Problems for Hardness Reduction}
\label{sec:pspacec-base-probs}

\subsubsection*{(Minimum) Vertex Cover Reconfiguration.}
Recall that a \textit{vertex cover} of a graph $G$ is a vertex subset $C$ such that for every edge $e$ of $G$, at least one endpoint of $e$ is in $C$.
An \textit{independent set} of a graph $G$ is a vertex subset $I$ such that no two vertices in $I$ are joined by an edge.
We use $\tau(G)$ and $\alpha(G)$ to respectively denote the size of a minimum vertex cover and a maximum independent set of $G$.
In a \textsc{Vertex Cover Reconfiguration (VCR)}'s instance $(G, C_s, C_t)$ under $\sfR \in \{\sfTS, \sfTJ, \sfTAR\}$, two vertex covers $C_s$ and $C_t$ a graph $G$ are given, and the question is to decide whether there is a $\sfR$-sequence between $C_s$ and $C_t$.
The \textsc{Independent Set Reconfiguration (ISR)} problem under $\sfR \in \{\sfTS, \sfTJ, \sfTAR\}$ is defined similarly.
In the \textsc{Minimum Vertex Cover Reconfiguration (Min-VCR)} (resp., \textsc{Maximum Independent Set Reconfiguration (Max-ISR)}) problem, all inputs are minimum vertex covers (resp., maximum independent sets). 
Since $I$ is an independent set of $G$ if and only if $V(G) - I$ is a vertex cover, from the classic complexity viewpoint, \textsc{VCR} (resp., \textsc{Min-VCR}) and \textsc{ISR} (resp., \textsc{Max-ISR}) share the same complexity status under the same reconfiguration rule.
In this paper, we are interested in using some variants of \textsc{Min-VCR} as the bases for our hardness reduction.

Ito et al. \cite{ItoDHPSUU11} proved that \textsc{ISR} is $\ttPSPACE$-complete on general graphs under $\sfTAR$.
Since $\sfTJ$ and $\sfTAR$ are somewhat equivalent~\cite[Theorem~1]{KaminskiMM12}---any reconfiguration sequence under $\sfTJ$ between two size-$k$ independent sets can be converted to a reconfiguration sequence under $\sfTAR$ whose members are independent sets of size at least $k - 1$ and vice versa---the result of Ito et al. can be applied for $\sfTJ$.
Indeed, Ito et al. constructed reconfiguration sequences under $\sfTAR$ which contain only independent sets of maximum size $k$ and those of size $k-1$. 
When converting to reconfiguration sequences under $\sfTJ$ as described in~\cite[Theorem~1]{KaminskiMM12}, all independent sets have \textit{maximum} size.
Additionally, note that any $\sfTJ$-move applied to a maximum independent set is indeed also a $\sfTS$-move\footnote{Another way to see this is as follows. From Ito et al.'s construction, observe that under $\sfTJ$ no token can ever leaves its corresponding gadget, which is either an edge or a triangle, and therefore any $\sfTJ$-move is indeed also a $\sfTS$-move.}, because any maximum independent set is also a dominating set.
Thus, the result of Ito et al. can be applied for $\sfTS$ even when all independent sets are of maximum size.
In short, \textsc{Max-ISR} is $\ttPSPACE$-complete on general graphs under $\sfR \in \{\sfTS, \sfTJ, \sfTAR\}$ and therefore so is \textsc{Min-VCR}.

Hearn and Demaine~\cite{HearnD05} proved that \textsc{ISR} is $\ttPSPACE$-complete on planar graphs of maximum degree three and later van der Zanden~\cite{Zanden15} extended their results for those graphs with the additional ``bounded bandwidth'' restriction. 
Hearn and Demain expilicitly considered only the $\sfTS$ rule and later Bonsma et al.~\cite{BonsmaKW14} observed that their result also holds under $\sfTJ$ and $\sfTAR$ (using a similar argument as what we mentioned for general graphs).
Again, since their proofs only involve \textit{maximum} independent sets, \textsc{Max-ISR} is $\ttPSPACE$-complete on planar graphs of maximum degree three and bounded bandwidth under $\sfR \in \{\sfTS, \sfTJ, \sfTAR\}$ and therefore so is \textsc{Min-VCR}.

\subsubsection*{Nondeterministic Constraint Logic (NCL).}
In an \textit{NCL constraint graph} (or \textit{NCL graph} for short), first introduced by Hearn and Demaine~\cite{HearnD05}, each edge has weight either $1$ (red, thick) or $2$ (blue, thin).
A \textit{configuration} of an NCL graph is an orientation of its edges satisfying that the \textit{in-weight} (i.e., the sum of all weights of edges that are directed inward) at any vertex is at least two.
Two NCL configurations of an NCL graph are \textit{adjacent} if one can be obtained from the other by reversing the direction of exactly one edge.
Given an NCL graph and two configurations $C_s, C_t$, the NCL \textit{configuration-to-configuration (C2C)} variant asks whether there is a sequence of adjacent configurations between $C_s$ and $C_t$, i.e., a sequence $\langle C_s = C_1, C_2, \dots, C_p = C_t \rangle$ where all $C_i$ ($1 \leq i \leq p$) are NCL configurations and $C_i$ and $C_{i+1}$ ($1 \leq i \leq p-1$) are adjacent.

An \textit{NCL \textsc{And/Or} constraint graph} is an NCL graph where only two types of vertices, namely \textsc{And} vertices and \textsc{Or} vertices, are allowed. 
(See \figurename~\ref{fig:NCL-def}.)

\begin{itemize}
	\item In an NCL \textsc{And} vertex $u$ (\figurename~\ref{fig:NCL-def}(b)), there are three incident edges: two of which have weight $1$ (red) and one has weight $2$ (blue).
	To direct the weight-$2$ (blue) edge outward, \textit{both} two weight-$1$ (red) edges must be directed inward.
	It is not required that when two weight-$1$ (red) edges are directed inward, the weight-$2$ (blue) edge must be directed outward.
 This mimics the behavior of a logical \textsc{And} gate.	
	\item In an NCL \textsc{Or} vertex $v$ (\figurename~\ref{fig:NCL-def}(c)), there are three incident edges, each of which have weight $2$ (blue).
	To direct any weight-$2$ (blue) edge outward, \textit{at least one} of the other two weight-$2$ (blue) edges must be directed inward.
	This mimics the behavior of a logical \textsc{Or} gate.
\end{itemize}

\begin{figure}[ht]
	\centering
	\includegraphics[width=0.7\textwidth]{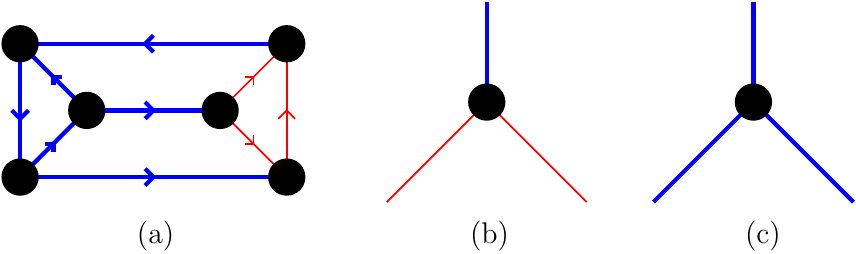}
	\caption{(a) A configuration of an NCL \textsc{And/Or} constraint graph, (b) NCL \textsc{And} vertex, (c) NCL \textsc{Or} vertex.}
	\label{fig:NCL-def}
\end{figure}

For simplicity, we call the NCL C2C variant as the \textit{NCL problem} and an NCL \textsc{And/Or} constraint graph as an \textit{NCL graph}.
It is well-known that the NCL problem remains $\ttPSPACE$-complete even on NCL graphs that are both planar and bounded bandwidth~\cite{Zanden15}.

\subsection{Planar Graphs}
\label{sec:planar}

In this section, we consider planar graphs.
We remark that one can extend the known results for \textsc{DSR} \cite{BonamyDO21,HaddadanIMNOST16} on planar graphs of maximum degree six and bounded bandwidth to show the $\ttPSPACE$-completeness of D$r$DSR ($r \geq 2$) under both $\sfTS$ and $\sfTJ$ on the same graph class.

\begin{theorem}\label{thm:planar-maxdeg6}
        \textsc{D$r$DSR} under $\sfR \in \{\sfTS, \sfTJ\}$ on planar graphs of maximum degree six and bounded bandwidth is $\ttPSPACE$-complete for any $r \geq 2$.
\end{theorem}

\begin{proof}
	We give a polynomial-time reduction from \textsc{Min-VCR} on planar graphs of maximum degree three and bounded bandwidth.
	Our reduction extends the classic reduction from \textsc{Vertex Cover} to \textsc{Dominating Set}~\cite{GareyJohson1979}.
	This reduction has also been modified for showing the hardness of the problem for $r = 1$ (i.e., \textsc{Dominating Set Reconfiguration}) by Haddadan et al. \cite{HaddadanIMNOST16} under $\sfTAR$ and later by Bonamy et al. \cite{BonamyDO21} under $\sfTS$.
	Let $(G, C_s, C_t)$ be an instance of \textsc{Min-VCR} under $\sfR$ where $C_s, C_t$ are two minimum vertex covers of a planar graph $G$ of maximum degree three and bounded bandwidth.
	We will construct an instance $(G^\prime, D_s, D_t)$ of \textsc{D$r$DSR} under $\sfR$ where $D_s$ and $D_t$ are two D$r$DSs of a planar graph $G^\prime$ of maximum degree six and bounded bandwidth.
	
	Suppose that $V(G) = \{v_1, \dots, v_n\}$.
	We construct $G^\prime$ from $G$ as follows.
	For each edge $v_iv_j \in E(G)$, add a new path $P_{ij} = x_{ij}^0x_{ij}^1\dots, x_{ij}^{2r}$ of length $2r$ ($1 \leq i, j \leq n$) with $x_{ij}^0 = v_i$ and $x_{ij}^{2r} = v_j$.
	Observe that $x_{ij}^p = x_{ji}^{2r-p}$ for $0 \leq p \leq 2r$.
	Intuitively, $G^\prime$ is obtained from $G$ by replacing each edge of $G$ by a cycle $\mathcal{C}_{ij}$ of length $2r+1$ formed by the path $P_{ij}$ and the edge $uv = v_iv_j$.
	We define $D_s = C_s$ and $D_t = C_t$.
	Clearly, this construction can be done in polynomial time.
	(See \figurename~\ref{fig:planar}.)
	
	\begin{figure}[ht]
		\centering
		\includegraphics[width=0.7\textwidth]{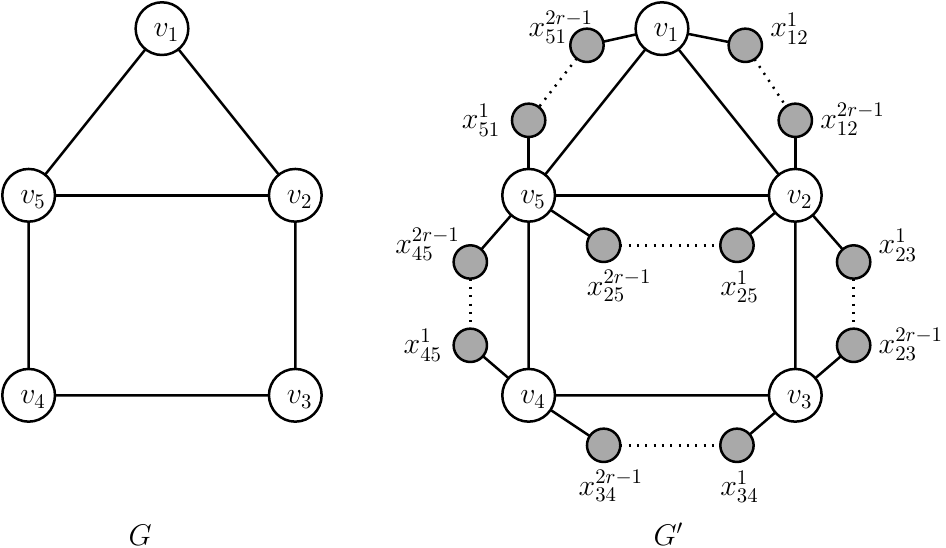}
		\caption{An example of constructing $G^\prime$ from a planar, subcubic, and bounded bandwidth graph $G$ in the proof of Theorem~\ref{thm:planar-maxdeg6}. Vertices in $V(G^\prime) - V(G)$ are marked with the gray color. Each dotted path is of length $2r-2$.}
		\label{fig:planar}
	\end{figure}
	
	We now in the next two lemmas that $(G^\prime, D_s, D_t)$ is indeed our desired \textsc{D$r$DSR}'s instance.
	\begin{claim}
		$G^\prime$ is planar, of maximum degree six, and bounded bandwidth.
	\end{claim}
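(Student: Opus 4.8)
The plan is to verify three separate properties of $G'$ by direct inspection of the construction, since each replacement of an edge $v_iv_j$ by a cycle $\mathcal{C}_{ij}$ of length $2r+1$ is a local, structure-preserving operation. First I would argue \emph{planarity}: $G$ is planar, and $G'$ is obtained from $G$ purely by subdividing each edge $v_iv_j$ into a path $P_{ij}$ of length $2r$ and keeping the original edge $v_iv_j$ as well, so each old edge becomes a $(2r+1)$-cycle. Edge subdivision preserves planarity, and adding the parallel connecting edge $v_iv_j$ alongside the subdivided path simply closes a face into a small cycle, which can be drawn inside the region formerly occupied by the single edge. Hence a planar embedding of $G$ extends to one of $G'$ by routing each cycle $\mathcal{C}_{ij}$ in a thin tube around the old edge.

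Next I would bound the \emph{maximum degree}. The only vertices whose degree changes are the original vertices $v_i \in V(G)$: each incident edge $v_iv_j$ contributes, in $G'$, both the first internal path-vertex $x_{ij}^1$ and the preserved endpoint-edge to $v_j$, so each original edge at $v_i$ now contributes exactly two edges at $v_i$ (one into $P_{ij}$, one along the old edge $v_iv_j$). Since $G$ has maximum degree three, each $v_i$ has at most three incident edges, giving degree at most $2\cdot 3 = 6$ in $G'$. Every newly introduced internal path-vertex $x_{ij}^p$ ($1 \le p \le 2r-1$) has degree exactly two. Thus $\Delta(G') \le 6$, matching the claimed bound.

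Finally I would establish \emph{bounded bandwidth}. The cleanest route is to take a bandwidth-$b$ linear layout $f : V(G) \to \mathbb{N}$ of $G$ and extend it to $G'$ by inserting the $2r-1$ new internal vertices of each path $P_{ij}$ into the ordering in positions between $f(v_i)$ and $f(v_j)$, spreading them evenly. Since $r$ is a fixed constant and each original edge spans at most $b$ positions in $f$, after inserting a bounded number $(2r-1)$ of new vertices per edge and at most $O(n)$ new vertices total, one can produce an injective assignment $f'$ of $V(G')$ in which every edge of $G'$ joins vertices whose labels differ by at most some constant $b'$ depending only on $b$ and $r$. The main obstacle, and the step requiring the most care, is this bandwidth argument: one must check that inserting the subdivision vertices does not blow up the stretch of \emph{other} edges whose endpoints straddle the insertion points, so the insertion must be done carefully (e.g.\ batching all new vertices for edges incident to a given coordinate range together) to guarantee the constant bound $b'$ is independent of $n$. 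I expect the planarity and degree claims to be routine, while the explicit bandwidth construction is where the real work lies.
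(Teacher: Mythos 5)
Your proposal is correct, and on the part that actually requires an argument it is more careful than the paper. For planarity and maximum degree the two proofs coincide: the paper simply states that these ``follow directly from the construction,'' which is exactly your local inspection (each old edge becomes a $(2r+1)$-cycle drawn in the region of the old edge; each original vertex of degree at most three gains one extra incident edge per original edge, giving degree at most six; internal path vertices have degree two). For bandwidth, however, the paper's entire justification is that $|E(G^\prime)| = (2r+1)|E(G)|$, ``which implies'' the bandwidth grows by only a constant multiplicative factor. Taken literally this is a non sequitur -- a constant-factor increase in the number of edges does not, by itself, bound the bandwidth ratio of an arbitrary modification -- and what makes the conclusion true is precisely the layout-extension argument you sketch: scale a bandwidth-$b$ layout of $G$, insert the $2r-1$ subdivision vertices of each path $P_{ij}$ into a gap near one endpoint (batched, at most $3(2r-1)$ new vertices per gap since $\Delta(G) \leq 3$), and observe that every original edge's stretch grows to $O(rb)$ while every new path edge has stretch $O(rb)$ as well, so the bandwidth of $G^\prime$ is bounded by a constant depending only on $b$ and $r$. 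So your proof fills in the step the paper waves at; the paper's version buys brevity, yours buys an actual proof of the bandwidth bound, and the pitfall you flag (other edges straddling insertion points) is handled exactly by the batching-and-counting argument above.
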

	\begin{proof}
		It follows directly from the construction that $G^\prime$ is planar and has maximum degree six.
		Since the number of edges of $G^\prime$ is exactly $(2r+1)$ times the number of edges of $G$, the bandwidth of $G^\prime$ increases (comparing to that of $G$) only by a constant multiplicative factor, which implies that $G^\prime$ is a bounded bandwidth graph.
	\end{proof}

	\begin{claim}
		Any minimum vertex cover of $G$ is also a minimum D$r$DS of $G^\prime$.
		Consequently, both $D_s$ and $D_t$ are minimum D$r$DSs of $G^\prime$.
	\end{claim}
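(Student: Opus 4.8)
The plan is to prove two things: first, that any minimum vertex cover $C$ of $G$ is a D$r$DS of $G'$, and second, that it is in fact a \emph{minimum} one, so that $\gamma_r(G') = \tau(G)$. For the first part, I would fix a minimum vertex cover $C$ of $G$ and show every vertex of $G'$ is $r$-dominated by $C$. The vertices of $G'$ split into two types: original vertices $v_i \in V(G)$, and the internal path vertices $x_{ij}^p$ with $1 \leq p \leq 2r-1$ lying on the cycle $\mathcal{C}_{ij}$ that replaced edge $v_iv_j$. Each original vertex $v_i \in C$ is trivially $0$-dominated; each $v_i \notin C$ is $r$-dominated because it has a graph-neighbor in $G$, and along every incident cycle the covering property of $C$ will place a covering vertex within reach. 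The crux is the internal vertices: for the path $P_{ij} = x_{ij}^0 x_{ij}^1 \cdots x_{ij}^{2r}$ of length $2r$ with endpoints $v_i, v_j$, since $C$ is a vertex cover at least one endpoint, say $v_i$, lies in $C$; then every vertex $x_{ij}^p$ satisfies $\dist_{G'}(v_i, x_{ij}^p) = p \leq 2r$, which is too weak. The key observation to exploit is that the path has length \emph{exactly} $2r$, so the midpoint $x_{ij}^r$ is at distance $r$ from each endpoint, and from the covered endpoint $v_i \in C$ every vertex $x_{ij}^p$ with $p \leq r$ is within distance $r$; for $p > r$ I would instead route through the \emph{other} endpoint if it is covered, or more carefully use that $C$ covering the edge guarantees a covered endpoint within distance $r$ of every internal vertex.

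Let me reconsider the distance bookkeeping, since it is the delicate point. On the cycle $\mathcal{C}_{ij}$ of length $2r+1$, distances within $G'$ may be shorter than along a single path because one can travel either way around the cycle, and one can also exit through $v_i$ or $v_j$ into the rest of $G'$. The cleanest approach is to argue locally on each cycle: since $C$ is a vertex cover, $v_i \in C$ or $v_j \in C$. If $v_i \in C$, then along $P_{ij}$ the vertices $x_{ij}^0, \dots, x_{ij}^r$ are within distance $r$ of $v_i$; the remaining internal vertices $x_{ij}^{r+1}, \dots, x_{ij}^{2r-1}$ are at distance $2r+1-p \leq r$ from $v_i$ \emph{going the other way around the cycle} (through $v_j$ and back), since the cycle has length $2r+1$. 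So \textbf{every} vertex of $\mathcal{C}_{ij}$ is within distance $r$ of $v_i$ by choosing the shorter arc. This shows $C$ $r$-dominates all of $V(G')$, so $C$ is a D$r$DS; in particular $D_s = C_s$ and $D_t = C_t$ are D$r$DSs.

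For minimality, I would prove $\gamma_r(G') \geq \tau(G)$ by showing any D$r$DS $D$ of $G'$ yields a vertex cover of $G$ of size at most $|D|$. The natural map sends each internal path vertex $x_{ij}^p \in D$ to one of the endpoints $v_i$ or $v_j$, and keeps each original vertex as itself; I must check the resulting set $C_D$ is a vertex cover of $G$, i.e.\ hits every edge $v_iv_j$. The obstruction to cover would be an edge $v_iv_j$ whose entire cycle $\mathcal{C}_{ij}$ receives no token mapping to either endpoint — but then the midpoint-type vertices of that cycle (distance $> r$ from everything outside the cycle, by the length-$2r+1$ structure) would fail to be $r$-dominated unless a token sits on the cycle, and a token on the cycle maps to an endpoint, a contradiction. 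Combined with $\gamma_r(G') \leq \tau(G)$ from the first part, this gives $\gamma_r(G') = \tau(G)$ and that $C_s, C_t$, being minimum vertex covers, are minimum D$r$DSs.

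\textbf{Main obstacle.} The hardest and most error-prone step is the exact distance analysis on the replacement cycle, specifically verifying that a single covered endpoint $r$-dominates the \emph{entire} cycle of length $2r+1$ via the shorter arc, and dually that an uncovered cycle leaves some vertex $r$-undominated. I expect this to hinge critically on the cycle length being $2r+1$ (equivalently the path length being exactly $2r$): one more or one fewer vertex would break the clean ``every cycle vertex is within $r$ of a covered endpoint'' property, so I would state and verify the distance bounds $\dist_{G'}(v_i, x_{ij}^p) = \min\{p,\, 2r+1-p\} \leq r$ carefully as the technical heart of the argument.
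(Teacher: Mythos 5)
Your proposal is correct and takes essentially the same route as the paper: you show a covered endpoint $r$-dominates its entire cycle $\mathcal{C}_{ij}$ of length $2r+1$ (giving $\gamma_r(G') \leq \tau(G)$), and you show $\gamma_r(G') \geq \tau(G)$ via the observation that the midpoint $x_{ij}^r$ can only be $r$-dominated from within its own cycle, projecting any D$r$DS to a vertex cover. The only difference is one of detail: the paper states the forward direction as immediate from the construction and leaves the token-counting projection implicit, whereas you spell out the distance bound $\min\{p,\,2r+1-p\} \leq r$ and the explicit map from cycle tokens to edge endpoints.
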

	\begin{proof}
		It follows directly from the construction that both $D_s$ and $D_t$ are D$r$DSs of $G^\prime$ (and so is any minimum vertex cover of $G$).
		To show that all minimum vertex covers of $G$ are also minimum D$r$DSs of $G^\prime$, we prove that $\tau(G) = \gamma_r(G^\prime)$ where $\tau(G)$ and $\gamma_r(G^\prime)$ are respectively the size of a minimum vertex cover of $G$ and a minimum D$r$DS of $G^\prime$.
		Since any minimum vertex cover of $G$ is also a D$r$DS of $G^\prime$, we have $\tau(G) \geq \gamma_r(G^\prime)$.
		On the other hand, from the construction of $G^\prime$, observe that for any pair $i, j \in \{1, \dots, n\}$ with $v_iv_j \in E(G)$, the vertex $x_{ij}^r$ (whose distance from both $v_i$ and $v_j$ is exactly $r$) can only be $r$-dominated by some vertex in $V(\mathcal{C}_{ij})$, which implies that one needs at least $\tau(G)$ tokens to $r$-dominate $V(G^\prime)$.
		Therefore, $\gamma_r(G^\prime) \geq \tau(G)$.
	\end{proof}
	
	We are now ready to show the correctness of our reduction. (Claims~\ref{clm:TS-planar} and~\ref{clm:TJ-planar}.)
	\begin{claim}\label{clm:TS-planar}
		Under $\sfTS$, $(G, C_s, C_t)$  is a yes-instance if and only if $(G^\prime, D_s, D_t)$ is a yes-instance.
	\end{claim}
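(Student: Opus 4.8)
The plan is to establish the two directions separately, with the forward direction being routine and the backward direction requiring a structural understanding of minimum D$r$DSs of $G^\prime$. Throughout, recall from the previous claim that $\tau(G) = \gamma_r(G^\prime)$; since $\sfTS$ preserves the number of tokens and $D_s = C_s$, $D_t = C_t$ are minimum, every D$r$DS appearing in an $\sfTS$-sequence between $D_s$ and $D_t$ is a \emph{minimum} D$r$DS of $G^\prime$. The single structural fact driving everything is that any one vertex lying on the cycle $\mathcal{C}_{ij}$ (of length $2r+1$) $r$-dominates the entire cycle, while $N^r_{G^\prime}[x_{ij}^r] = V(\mathcal{C}_{ij})$, so every cycle must carry at least one token.

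For the forward direction, I would first check that every minimum vertex cover $C$ of $G$, viewed as a vertex subset of $G^\prime$, is a minimum D$r$DS: each cycle $\mathcal{C}_{ij}$ carries an endpoint token (as $C$ covers $v_iv_j$) and is therefore fully $r$-dominated, and each original vertex $v_l \notin C$ is within distance one of some neighbour in $C$ via a retained edge. A single token-slide in $G$ along an edge $v_iv_j \in E(G)$ can then be mimicked verbatim in $G^\prime$, since $v_iv_j$ is retained in $G^\prime$; validity is immediate because the image of each vertex cover is again a D$r$DS. Thus a $\sfTS$-sequence of minimum vertex covers maps step-by-step to a $\sfTS$-sequence of minimum D$r$DSs.

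The main work is the backward direction, where tokens may slide onto interior path vertices. Here I would first prove a structure lemma for any minimum D$r$DS $D$: writing $A = D \cap V(G)$ and $B = D \setminus V(G)$, each interior token lies on a cycle whose underlying edge is uncovered by $A$, each such cycle carries exactly one interior token, and these ``interior-token edges'' are pairwise vertex-disjoint and disjoint from $A$. The proof is a replacement argument: if two interior-token cycles shared a vertex $v$ (necessarily $v \notin A$), replacing both interior tokens by $v$ alone---which $r$-dominates both cycles---would yield a D$r$DS of size $\tau(G) - 1 < \gamma_r(G^\prime)$, a contradiction; the same trick rules out a cycle covered by $A$ that also carries an interior token. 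This structure lets me define a projection $\phi(D)$ sending each original token to itself and each interior token on $x_{ij}^p$ to $v_i$ if $p \leq r$ and to $v_j$ otherwise, and vertex-disjointness then guarantees that $\phi(D)$ is a minimum vertex cover with $|\phi(D)| = |D| = \tau(G)$.

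Finally I would verify move-correspondence under $\phi$ by analysing the four edge types of $G^\prime$: every $\sfTS$-move either leaves $\phi$ unchanged (any slide that keeps the token on the same side of a cycle's midpoint, including a slide between an original vertex $v_i$ and its adjacent interior vertex $x_{ij}^1$) or, precisely when the token crosses the midpoint edge $x_{ij}^r x_{ij}^{r+1}$ or slides along a retained original edge $v_iv_j$, alters $\phi$ by exactly one slide $\phi(D) - v_i + v_j$ with $v_iv_j \in E(G)$. Since $\phi(D_s) = C_s$ and $\phi(D_t) = C_t$, projecting the whole $G^\prime$-sequence and deleting consecutive repetitions yields a $\sfTS$-sequence of minimum vertex covers from $C_s$ to $C_t$, proving $(G, C_s, C_t)$ is a yes-instance. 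I expect the structure lemma to be the main obstacle, as both the well-definedness of $\phi$ and every case of the move-correspondence hinge on the vertex-disjointness of interior-token edges.
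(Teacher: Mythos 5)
Your proposal is correct, but it takes a genuinely different route from the paper in the backward direction (the forward directions coincide). The paper translates the $\sfTS$-sequence in $G^\prime$ move-by-move via a syntactic case analysis on whether the endpoints of each slide lie in $V(G)$: slides inside path interiors are discarded, and a slide $x_{ij}^{2r-1} \reconf[\sfTS]{G^\prime} v_j$ (or $v_i \reconf[\sfTS]{G^\prime} v_j$) triggers the appended move $v_i \reconf[\sfTS]{G} v_j$, whose correctness is argued only through the domination requirement on the midpoints $x_{ik}^r$, i.e., by showing the resulting projected set is still a vertex cover. You instead prove a structure lemma for \emph{arbitrary} minimum D$r$DSs of $G^\prime$ (interior tokens sit on pairwise vertex-disjoint edges uncovered by the original-vertex tokens) and use it to define a history-free, position-based projection $\phi$, under which every slide in $G^\prime$ projects to zero or one slide in $G$. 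Your route is longer but buys something concrete: it verifies that each emitted move is a \emph{legal} token-slide (source occupied, target unoccupied in the projected cover), a point the paper's write-up never addresses. Indeed, the paper's literal Case-3 rule misfires when a token leaves $v_j$ into the interior of $P_{ij}$ and later slides back onto $v_j$ (this can occur, e.g., on a triangle with $r \geq 2$): the rule then emits $v_i \reconf[\sfTS]{G} v_j$ even though $v_i$ may carry no token and $v_j$ is still occupied, whereas your projection correctly emits nothing because the token never crosses the midpoint; your structure lemma is also exactly what justifies the paper's Case~1 (it rules out an interior token already ``charged'' to $v_j$). The only nit is that your rule ``$x_{ij}^p \mapsto v_i$ if $p \leq r$, else $v_j$'' must be applied with a fixed labeling of each edge's endpoints, since $x_{ij}^r = x_{ji}^r$ would otherwise be assigned ambiguously; with that convention fixed, your lemma and move analysis go through.
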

	\begin{proof}
		\begin{itemize}
			\item[($\Rightarrow$)] Let $\calS$ be a $\sfTS$-sequence in $G$ between $C_s$ and $C_t$.
			Since any minimum vertex cover of $G$ is also a minimum D$r$DS of $G^\prime$, the sequence $\calS^\prime$ obtained by replacing each move $u \reconf[\sfTS]{G} v$ in $\calS$ by $u \reconf[\sfTS]{G^\prime} v$ is also a $\sfTS$-sequence in $G^\prime$ between $D_s = C_s$ and $D_t = C_t$.
			
			\item[($\Leftarrow$)] Let $\calS^\prime$ be a $\sfTS$-sequence in $G^\prime$ between $D_s$ and $D_t$.
			We construct a sequence of token-slides $\calS$ in $G$ between $C_s = D_s$ and $C_t = D_t$ as follows.
			Initially, $\calS = \emptyset$.
			For each move $u \reconf[\sfTS]{G^\prime} v$ in $\calS^\prime$, we consider the following cases.
			\begin{enumerate}[{\bf {Case} 1:}]
				\item {\bf $u \in V(G)$ and $v \in V(G)$.} It must happen that $u = v_i$ and $v = v_j$ for some $i, j \in \{1, \dots, n\}$ such that $v_iv_j \in E(G)$. We append $v_i \reconf[\sfTS]{G} v_j$ to $\calS$.
				
				\item {\bf $u \in V(G)$ and $v \in V(G^\prime) - V(G)$.} Do nothing.
				
				\item {\bf $u \in V(G^\prime) - V(G)$ and $v \in V(G)$.} It must happen that $u = x_{ij}^{2r-1}$ and $v = x_{ij}^{2r} = v_j$ for some $i, j \in \{1, \dots, n\}$ such that $v_iv_j \in E(G)$. \ReviewRevise{Since a token is placed on $u = x_{ij}^{2r-1}$, an internal vertex of the path $P_{ij}$, this token must have entered $P_{ij}$ from either $v_i$ or $v_j$. Let $w \in \{v_i, v_j\}$ denote the most recent vertex from which the token entered $P_{ij}$ before moving to $u = x_{ij}^{2r-1}$. We append the move $w \reconf[\sfTS]{G} v_j$ to $\calS$.}
				
				\item {\bf $u \in V(G^\prime) - V(G)$ and $v \in V(G^\prime) - V(G)$.} Do nothing.
			\end{enumerate}
			
			To see that $\calS$ is indeed a $\sfTS$-sequence in $G$, it suffices to show that if $C$ is the minimum vertex cover obtained right before the move $v_i \reconf[\sfTS]{G} v_j$ in $G$ then $C^\prime = C - v_i + v_j$ is also a minimum vertex cover of $G$.
			If {\bf Case 1} happens, this is trivial.
			Thus, it remains to consider the case when {\bf Case 3} happens.
			In this case, suppose to the contrary that $C^\prime$ is not a vertex cover of $G$.
			It follows that there exists $k \in \{1, \dots, n\}$ such that $v_iv_k \in E(G)$, $v_k \neq v_j$, and $v_k \notin C$.
			Intuitively, the edge $v_iv_k \in E(G)$ is not covered by any vertex in $C^\prime$.
			On the other hand, let $D$ be the D$r$DS of $G^\prime$ obtained right before the move $x_{ij}^{2r-1} \reconf[\sfTS]{G^\prime} x_{ij}^{2r} = v_j$.
			Since $D^\prime = D - x_{ij}^{2r-1} + x_{ij}^{2r}$ is also a D$r$DS of $G^\prime$, there must be some vertex in $D^\prime$ that $r$-dominates $x_{ik}^r$, which implies $V(P_{ik}) \cap D^\prime \neq \emptyset$.
			However, from the construction of $\calS$, it follows that $v_k \in C$, which is a contradiction.
			\ReviewRevise{To see that $v_k \in C$, observe that the vertex in $D^\prime$ that $r$-dominates $x_{ik}^r$ must be an internal vertex of $P_{ik}$. As a result, the token $t$ on it must have entered $P_{ik}$ from either $v_i$ or $v_k$. Furthermore, it is impossible that $t$ entered from $v_i$, since the token on $x_{ij}^{2r-1}$ is moved to $x_{ij}^{2r} = v_j$ and forms the corresponding move $v_i \reconf[\sfTS]{G} v_j$ in $\calS$. This means that the token on $x_{ij}^{2r-1}$ must have entered $P_{ij}$ from $v_i$. Note that in order to validly move the token on $v_i$ toward $x_{ij}^{2r-1}$, every vertex $x_{\ell k}^r$ where $v_\ell \in N_G(v_i) \setminus \{v_j\}$ must be $r$-dominated in $G'$. Thus, $t$ must have entered from $v_k$ (at that time, to maintain the $r$-dominating property, a token remains on $v_i$ and has not yet moved to $x_{ij}^{2r-1}$), which implies that $v_k \in C$ by our construction.}
			Thus, $C^\prime$ is a vertex cover of $G$.
			Since $|C^\prime| = |C|$, it is also minimum.
		\end{itemize}
	\end{proof}
	
	\begin{claim}\label{clm:TJ-planar}
		Under $\sfTJ$, $(G, C_s, C_t)$  is a yes-instance if and only if $(G^\prime, D_s, D_t)$ is a yes-instance.
	\end{claim}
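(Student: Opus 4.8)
The plan is to mirror the proof of \cref{clm:TS-planar}, replacing edge‑constrained slides by unrestricted jumps and exploiting the fact that $\sfTJ$ preserves cardinality. For the forward direction ($\Rightarrow$), I would argue that since $\sfTJ$ never changes the number of tokens, every member of a $\sfTJ$-sequence $\calS$ in $G$ between the two minimum vertex covers $C_s$ and $C_t$ is again a vertex cover of size $\tau(G)$, i.e.\ a minimum vertex cover of $G$. By the two claims above, each such set is also a minimum D$r$DS of $G^\prime$. Hence replacing every move $u \reconf[\sfTJ]{G} v$ by the identical move $u \reconf[\sfTJ]{G^\prime} v$ (legal in $G^\prime$ because $V(G) \subseteq V(G^\prime)$ and the resulting set is a minimum D$r$DS of $G^\prime$) yields a $\sfTJ$-sequence in $G^\prime$ between $D_s = C_s$ and $D_t = C_t$.

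For the backward direction ($\Leftarrow$), let $\calS^\prime = \langle D_s = D_0^\prime, \dots, D_m^\prime = D_t \rangle$ be a $\sfTJ$-sequence in $G^\prime$. Since $\sfTJ$ preserves cardinality and $|D_s| = \tau(G) = \gamma_r(G^\prime)$, every $D_i^\prime$ is a \emph{minimum} D$r$DS of $G^\prime$. I would first record two structural facts. (i)~For each edge $v_iv_j \in E(G)$ the midpoint $x_{ij}^r$ satisfies $N^r_{G^\prime}[x_{ij}^r] \subseteq V(\mathcal{C}_{ij})$, so every minimum D$r$DS places at least one token on each cycle $\mathcal{C}_{ij}$; moreover, since the odd cycle $\mathcal{C}_{ij}$ has length $2r+1$ and hence diameter $r$, a single token on $v_i$ (or on $v_j$) already $r$-dominates all of $V(\mathcal{C}_{ij})$. (ii)~A counting argument then shows that for any minimum D$r$DS $D$, writing $U = D \cap V(G)$ and letting $F$ be the set of edges of $G$ with neither endpoint in $U$, the non-$G$ tokens of $D$ are in bijection with $F$ (one internal token per cycle $\mathcal{C}_{ij}$ with $v_iv_j \in F$), and any set obtained from $U$ by adding one endpoint of each edge of $F$ is a minimum vertex cover of $G$. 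Combined with the observation from (i) that an internal token on $\mathcal{C}_{ij}$ can be pushed by a single jump to $v_i$ or to $v_j$ while remaining a minimum D$r$DS, these facts let me attach to $\calS^\prime$ a ``tracked'' minimum vertex cover.

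The construction then processes $\calS^\prime$ move by move, maintaining a minimum vertex cover $C$ of $G$ together with a chosen $G$-representative for each internal token of the current $D_i^\prime$, so that $C$ equals $U$ plus these representatives. When a single token is relocated, at most one element of $C$ changes: a jump between two $G$-vertices $v_i \reconf[\sfTJ]{G^\prime} v_j$ is copied verbatim as $v_i \reconf[\sfTJ]{G} v_j$; a jump that creates, destroys, or relocates an internal token is recorded as the corresponding jump of its representative; and a jump between two internal vertices of the \emph{same} cycle is recorded as no move at all. I would verify in each case that the updated $C$ is still a minimum vertex cover: because only one token moves, removing a $G$-token can uncover at most one midpoint (otherwise $D_{i+1}^\prime$ would fail to $r$-dominate two midpoints using only one replacement token), so the updated $C$ still covers every edge of $G$ and keeps size $\tau(G)$. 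Concatenating the recorded moves gives a $\sfTJ$-sequence in $G$ from $C_s$ to $C_t$.

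The hard part will be the backward direction, and specifically making the tracked cover $C$ well defined: I must choose representatives for internal tokens consistently so that a single jump in $G^\prime$ never forces $C$ to change by more than one vertex and never destroys the vertex-cover property. The cardinality-tightness supplied by fact (ii) is exactly what rules out the dangerous situation in which one jump would have to repair two simultaneously uncovered edges, and this is the crux that makes the one-move-to-at-most-one-move simulation go through.
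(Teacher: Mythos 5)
Your proof is correct and follows the same skeleton as the paper's: the forward direction is the same verbatim re-embedding (every intermediate set has size $\tau(G)$, hence is a minimum vertex cover of $G$, hence a minimum D$r$DS of $G^\prime$), and the backward direction is the same move-by-move projection of $\calS^\prime$ onto $G$, using minimality to show that a jump leaving the interior of a cycle $\mathcal{C}_{ij}$ must land on $v_i$ or $v_j$, and that a projected move cannot leave an edge of $G$ uncovered (your ``one replacement token cannot repair two uncovered midpoints'' argument is exactly the paper's contradiction via the midpoint $x_{ik}^r$). The genuine difference is your counting fact~(ii) and the representative bookkeeping built on it, which the paper does not make explicit: the paper's rule for a jump from an interior vertex $x_{ij}^p$ to an endpoint always records ``other endpoint to destination'', i.e.\ $v_i \reconf[\sfTJ]{G} v_j$, without tracking where the simulated token actually sits. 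As literally stated, that rule is ill-defined in the scenario where a token jumps from $v_i$ into the interior of $\mathcal{C}_{ij}$ (the paper records nothing, so the simulated token remains on $v_i$) and later jumps back to $v_i$: the paper would then record $v_j \reconf[\sfTJ]{G} v_i$, yet the simulated cover need not contain $v_j$ and already contains $v_i$. Your tracked representative records no move in that situation, which is the correct behaviour, and your observation that $U = D \cap V(G)$ plus \emph{any} choice of one endpoint per $U$-uncovered edge is automatically a minimum vertex cover (distinctness of the chosen endpoints being forced by minimality) is precisely what makes the tracking well defined in every case without further analysis. So your extra machinery is not padding; it closes a small but real bookkeeping gap in the paper's own argument.
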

	\begin{proof}
		\begin{itemize}
			\item[($\Rightarrow$)] Let $\calS$ be a $\sfTJ$-sequence in $G$ between $C_s$ and $C_t$.
			Since any minimum vertex cover of $G$ is also a minimum D$r$DS of $G^\prime$, the sequence $\calS^\prime$ obtained by replacing each move $u \reconf[\sfTJ]{G} v$ in $\calS$ by $u \reconf[\sfTJ]{G^\prime} v$ is also a $\sfTS$-sequence in $G^\prime$ between $D_s = C_s$ and $D_t = C_t$.
			
			\item[($\Leftarrow$)] Let $\calS^\prime$ be a $\sfTJ$-sequence in $G^\prime$ between $D_s$ and $D_t$.
			We construct a sequence of token-jumps $\calS$ in $G$ between $C_s = D_s$ and $C_t = D_t$ as follows.
			Initially, $\calS = \emptyset$.
			For each move $u \reconf[\sfTJ]{G^\prime} v$ in $\calS^\prime$, we consider the following cases.
			\begin{enumerate}[{\bf {Case} 1:}]
				\item {\bf $u \in V(G)$ and $v \in V(G)$.} It must happen that $u = v_i$ and $v = v_j$ for some $i, j \in \{1, \dots, n\}$. We append $v_i \reconf[\sfTJ]{G} v_j$ to $\calS$.
				
				\item {\bf $u \in V(G)$ and $v \in V(G^\prime) - V(G)$.} Do nothing.
				
				\item {\bf $u \in V(G^\prime) - V(G)$ and $v \in V(G)$.} From the construction of $G^\prime$, for each pair $i, j \in \{1, \dots, n\}$ such that $v_iv_j \in E(G)$, the vertex $x_{ij}^r$ must be $r$-dominated by at least one vertex of $\mathcal{C}_{ij}$.
				Additionally, note that any token-set resulting from a move in $\calS^\prime$ must be a minimum D$r$DS of $G^\prime$.
				Thus, we must have $u = x_{ij}^p$ and $v = x_{ij}^{2r} = v_j$ for some $i, j \in \{1, \dots, n\}$ such that $v_iv_j \in E(G)$ and $1 \leq p \leq 2r-1$.	
				Now, we append $v_i \reconf[\sfTJ]{G} v_j$ to $\calS$.
				
				\item {\bf $u \in V(G^\prime) - V(G)$ and $v \in V(G^\prime) - V(G)$.} Do nothing.
			\end{enumerate}
			
			To see that $\calS$ is indeed a $\sfTJ$-sequence in $G$, it suffices to show that if $C$ is the minimum vertex cover obtained right before the move $v_i \reconf[\sfTJ]{G} v_j$ in $G$ then $C^\prime = C - v_i + v_j$ is also a minimum vertex cover of $G$.
			If {\bf Case 1} happens, this is trivial.
			Thus, it remains to consider the case {\bf Case 3} happens.
			In this case, suppose to the contrary that $C^\prime$ is not a vertex cover of $G$.
			It follows that there exists $k \in \{1, \dots, n\}$ such that $v_iv_k \in E(G)$, $v_k \neq v_j$, and $v_k \notin C$.
			Intuitively, the edge $v_iv_k \in E(G)$ is not covered by any vertex in $C^\prime$.
			On the other hand, let $D$ be the D$r$DS of $G^\prime$ obtained right before the move $x_{ij}^{p} \reconf[\sfTJ]{G^\prime} x_{ij}^{2r} = v_j$, for $1 \leq p \leq 2r-1$.
			Since $D^\prime = D - x_{ij}^{p} + x_{ij}^{2r}$ is also a D$r$DS of $G^\prime$, there must be some vertex in $D^\prime$ that $r$-dominates $x_{ik}^r$, which implies $V(P_{ik}) \cap D^\prime \neq \emptyset$.
			However, from the construction of $\calS$, it follows that $v_k \in C$, which is a contradiction.
			Thus, $C^\prime$ is a vertex cover of $G$.
			Since $|C^\prime| = |C|$, it is also minimum.
		\end{itemize}
	\end{proof}
	Our proof is complete.
\end{proof}

In the following theorem, we show that we can further improve the above result by reducing from the NCL problem~\cite{HearnD05,Zanden15}.
\begin{theorem}\label{thm:planar-maxdeg3}
	\textsc{D$r$DSR} under $\sfR \in \{\sfTS, \sfTJ\}$ on planar graphs of maximum degree three and bounded bandwidth is $\ttPSPACE$-complete for any $r \geq 1$.
\end{theorem}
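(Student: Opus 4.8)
The plan is to give a polynomial-time reduction from the NCL problem, which is $\ttPSPACE$-complete even on planar, bounded-bandwidth \textsc{And/Or} constraint graphs of maximum degree three~\cite{Zanden15}. Since \textsc{D$r$DSR} is in $\ttPSPACE$, such a reduction suffices to establish $\ttPSPACE$-completeness. Given an NCL graph $H$, I would construct a planar, subcubic, bounded-bandwidth graph $G'$ together with two D$r$DSs $D_s, D_t$, using one gadget per edge of $H$ and one gadget per vertex of $H$, so that configurations of $H$ correspond bijectively to a canonical family of D$r$DSs of $G'$ and single edge reversals correspond to single token moves.

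The central idea is to encode orientations by token positions and in-weights by domination. For each edge $e = uv$ of $H$ I would use an \emph{edge gadget} carrying exactly one token with two designated ``extreme'' positions: one models $e$ oriented toward $u$, the other models $e$ oriented toward $v$. Because the total number of tokens equals $|E(H)|$ and never changes, reversing an edge is realized by moving its token between the two extreme positions. For each vertex $w$ of $H$ I would attach two \emph{guard vertices} $g_w^1, g_w^2$ (never placed in the dominating set) whose $r$-domination encodes the constraint ``the in-weight at $w$ is at least $2$'': a weight-$2$ (blue) edge oriented \emph{into} $w$ places its token where it $r$-dominates \emph{both} guards of $w$, whereas a weight-$1$ (red) edge oriented into $w$ $r$-dominates only \emph{one} prescribed guard of $w$. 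With this convention, both guards of an \textsc{Or} vertex are $r$-dominated exactly when at least one of its three blue edges points inward, and both guards of an \textsc{And} vertex are $r$-dominated exactly when either its blue edge points inward or both its red edges point inward---precisely the \textsc{Or} and \textsc{And} in-weight conditions. Crucially, a token move in $G'$ preserves the D$r$DS property iff, after the corresponding edge reversal, every guard of the formerly-covered endpoint is still $r$-dominated by another incident token, which is exactly the condition for that edge reversal to be legal in $H$.

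To obtain every $r \geq 1$ from a single construction, I would make all connecting segments paths whose lengths scale with $r$ (subdividing so that ``$r$-domination'' in $G'$ plays the role of adjacency in a base gadget), as was already done when replacing each edge by a $(2r+1)$-cycle in \cref{thm:planar-maxdeg6}. Subdivision preserves planarity and maximum degree, and inflates the bandwidth only by a constant factor, so $G'$ remains planar, of maximum degree three, and of bounded bandwidth. I would then set $D_s, D_t$ to be the canonical D$r$DSs encoding the two given NCL configurations $C_s, C_t$, and prove both directions of correctness: an NCL sequence between $C_s$ and $C_t$ lifts move-by-move to a $\sfR$-sequence between $D_s$ and $D_t$, and conversely any $\sfR$-sequence can be \emph{normalized} so that every intermediate D$r$DS keeps each token within its own edge gadget and in an extreme position, whence it projects back to an NCL sequence. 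Both directions should go through for $\sfTS$ and for $\sfTJ$, since in this construction every legal token jump is realizable as a slide along the gadget path and vice versa.

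The main obstacle is the gadget design under the maximum-degree-three restriction: the earlier \textsc{DSR} reductions of Haddadan et al.~\cite{HaddadanIMNOST16} and Bonamy et al.~\cite{BonamyDO21} only achieved maximum degree six, and the difficulty here is to realize the \textsc{And} constraint (a genuine conjunction of two domination requirements, which domination does not express directly) and to wire three edge-ports into two guards per vertex while keeping \emph{every} gadget vertex of degree at most three, preserving planarity, and---most delicately---ruling out ``cheating'' token configurations and moves so that feasible D$r$DSs correspond exactly to NCL configurations and single token moves correspond exactly to single legal edge reversals. Establishing this normalization/locking property is the technical heart of the proof.
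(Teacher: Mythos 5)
Your high-level strategy coincides with the paper's: reduce from NCL on planar, bounded-bandwidth \textsc{And/Or} graphs, represent each NCL edge by a single token that shuttles between two extreme positions (inward/outward), enforce the in-weight constraint through vertices that must remain $r$-dominated, and handle $\sfTS$ and $\sfTJ$ simultaneously by arguing that tokens never leave their edge gadgets. Your two-guard encoding is also exactly the logic the paper's gadgets realize: the \textsc{And} gadget's main component is a bull graph with length-$(r-1)$ paths attached, whose two pendant branches play the role of your guards $g_w^1, g_w^2$ (each covered by the blue inward position or by one prescribed red inward position), and the \textsc{Or} gadget's main component is a claw with length-$(r-1)$ paths, i.e., a single critical region covered by any of the three inward positions.

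The genuine gap is precisely the part you defer as the ``main obstacle'': the locking mechanism, without which the reduction does not go through, especially under $\sfTJ$. You stipulate that guards are ``never placed in the dominating set'' and that any $\sfR$-sequence ``can be normalized'' so that tokens stay in their gadgets at extreme positions, but nothing in your construction forces this; under $\sfTJ$ a token may a priori jump onto a guard or into a different gadget, and you would then have to prove that no such token-set is a D$r$DS, which your sketch does not do. The paper's device is simple: each link (edge) gadget is an edge $a_1a_2$ with a pendant path of length $r-1$ attached to each of $a_1$ and $a_2$. The far endpoint of the path at $a_1$ is at distance $r-1$ from $a_1$, at distance $r$ from $a_2$, and at distance more than $r$ from every other vertex that could carry a token (symmetrically for the path at $a_2$); hence both pendant endpoints are $r$-dominated if and only if the gadget's token sits on $\{a_1, a_2\}$. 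This makes the locking automatic rather than something to be proved by normalization: every reachable D$r$DS keeps every token on its own edge, every $\sfTJ$-move is therefore an $\sfTS$-move between the adjacent vertices $a_1$ and $a_2$, and the projection back to NCL configurations is immediate. The same length-$(r-1)$ calibration of all attached paths ensures that inner positions $r$-dominate the critical main-component vertices while outer positions do not, so a single construction works for every $r \geq 1$ while keeping maximum degree three, planarity, and bandwidth within a constant factor.
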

\begin{proof}
	We reduce from the NCL problem, which is known to be $\ttPSPACE$-complete on planar graphs of bounded bandwidth~\cite{HearnD05,Zanden15}.
	We first consider $\sfTS$ and later explain how the same reduction holds under $\sfTJ$.
	The \textsc{And} and \textsc{Or} gadgets that simulate the behaviors of the NCL \textsc{And} and \textsc{Or} vertices, respectively, are constructed as in \figurename~\ref{fig:NCL-planar}. 
	\begin{figure}[ht]
		\centering
		\includegraphics[width=0.7\textwidth]{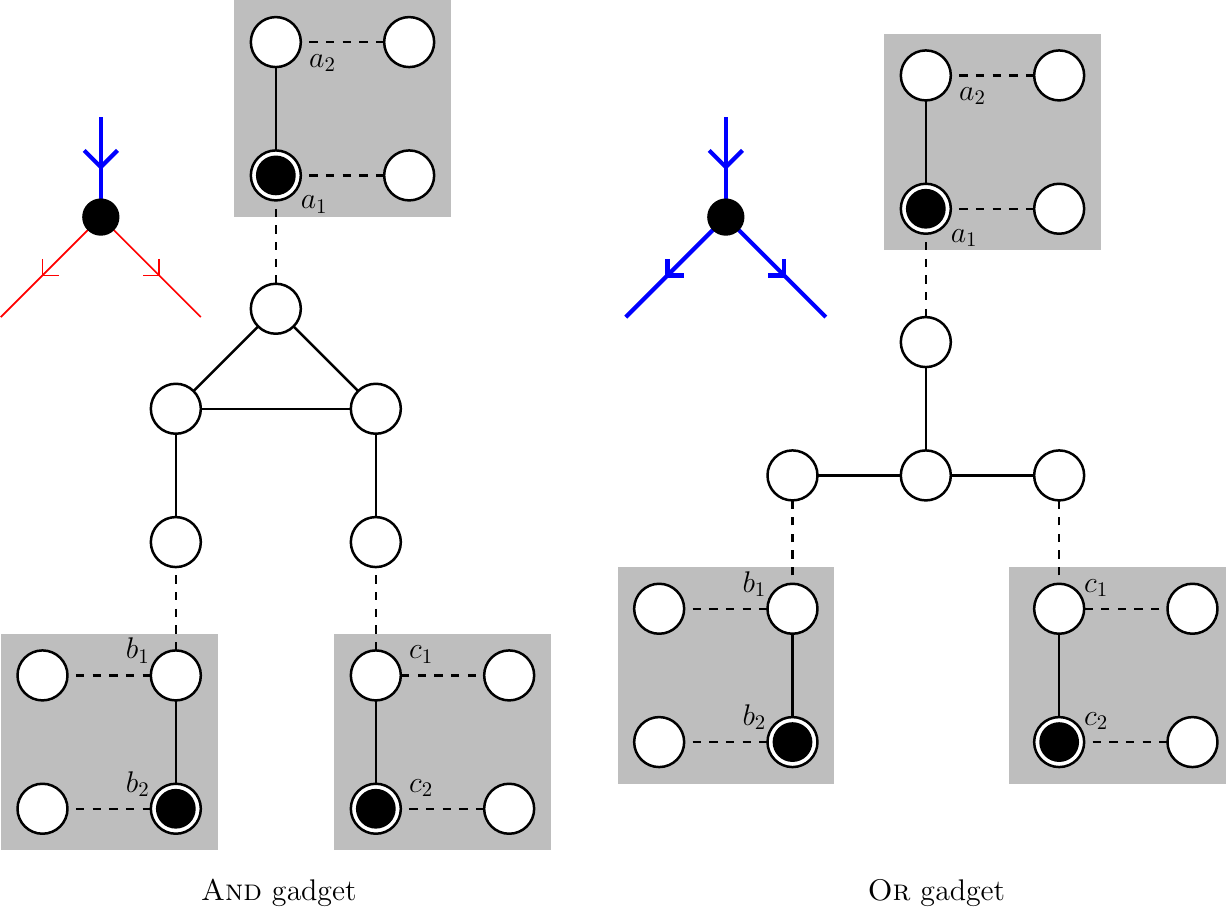}
		\caption{Our gadgets for \textsc{D$r$DSR}. Each dashed edge represents a path of length $r-1$.}
		\label{fig:NCL-planar}
	\end{figure}
	\begin{itemize}
		\item A gadget has exactly one \textit{main component} and three disjoint \textit{link components}. 
		\item The \textit{main component} of the \textsc{And} gadget contains a bull graph (i.e., a triangle with two disjoint pendant edges). 
		There is exactly one path of length $r-1$ attached to each vertex of degree at most two of the graph. 
		The \textit{main component} of the \textsc{Or} gadget contains a claw graph (i.e., a $K_{1,3}$). 
		There is exactly one path of length $r-1$ attached to each leaf of the claw graph.
		(See \figurename~\ref{fig:NCL-planar}.)
		\item Each \textit{link component} in both gadgets is formed by attaching paths of length $r-1$ to the endpoints of each edge: $\{(a_1a_2), (b_1b_2), (c_1c_2)\}$. 
		More precisely, there is exactly one path of length $r-1$ starting from each of the vertices $a_1$, $a_2$, $b_1$, $b_2$, $c_1$, and $c_2$.
		In \figurename~\ref{fig:NCL-planar}, each link component is shown inside a gray box.
  		\item In each gadget, its main component and three link components are joined together respectively at the vertices $a_1$, $b_1$, and $c_1$, as in \figurename~\ref{fig:NCL-planar}.
	\end{itemize}
  
		We want these link components to simulate the directions of NCL edges as follows: when a token is placed on either $a_1$, $b_1$, or $c_1$, the corresponding NCL edge is directed inward, and when a token is placed on either $a_2$, $b_2$, or $c_2$, the corresponding NCL edge is directed outward.
		Naturally, \textit{sliding a token}, say from $a_1$ to $a_2$, corresponds to \textit{reversing the direction of the corresponding NCL edge}, say from inward to outward, and vice versa.
		\figurename~\ref{fig:NCL-planar} illustrates this setting:
		\begin{itemize}
			\item The current token-set of the \textsc{And} gadget corresponds to a \textsc{And} vertex where the weight-$2$ (blue) edge is directed inward and the two weight-$1$ (red) edges are directed outward.
			\item The current token-set of the \textsc{Or} gadget corresponds to a \textsc{Or} vertex where one weight-$2$ (blue, top) edge is directed inward and the other two weight-$2$ (blue, bottom) edges are directed outward.
		\end{itemize}

	Given an NCL graph, which is planar and bounded bandwidth, and one of its configurations, we construct a corresponding graph and token-set by joining together \textsc{And} and \textsc{Or} gadgets at their shared link components and placing the tokens in link components appropriately.
	\figurename~\ref{fig:NCL-TS-graph} describes the constructed graph and token-set corresponds to the NCL graph and its configuration given in \figurename~\ref{fig:NCL-def}(a).
	One can verify that the constructed token-set indeed forms a D$r$DS of the constructed graph. (It suffices to verify for each gadget.)
	For convenience, we call each constructed token-set a \textit{standard token placement} and each token in a link component a \textit{link token}.

	\begin{figure}[ht]
		\centering
		\includegraphics[width=0.7\textwidth]{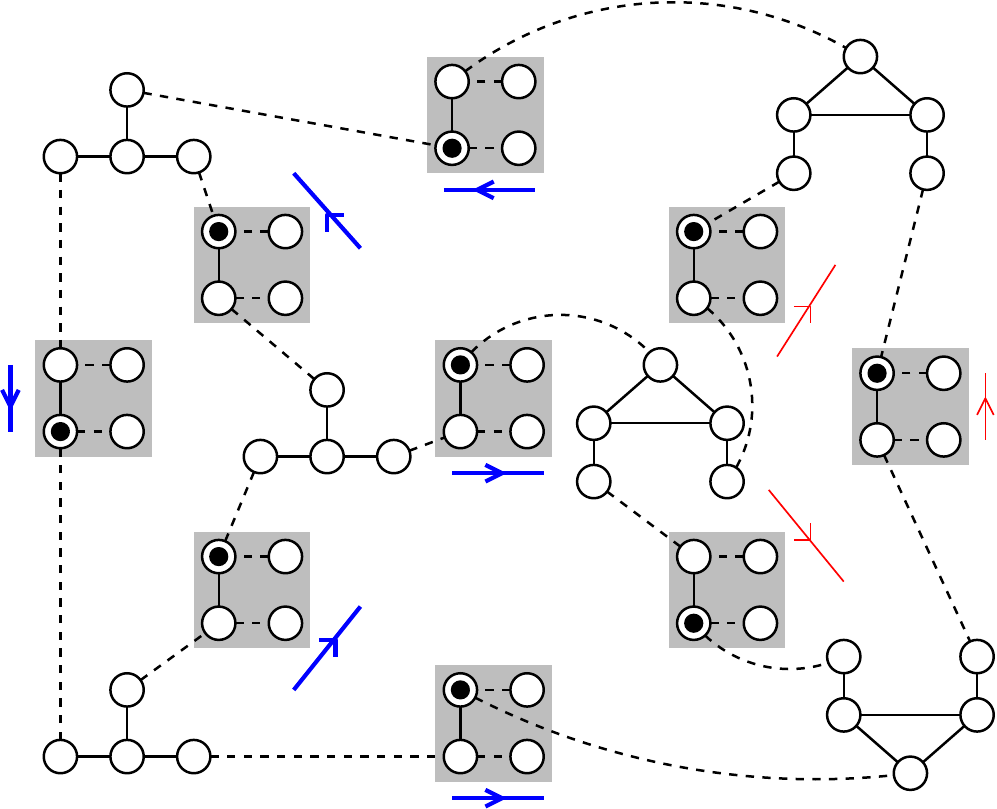}
		\caption{The graph and token-set corresponding to the NCL \textsc{And/or} constraint graph and its configuration in \figurename~\ref{fig:NCL-def}(a). Each dashed edge represents a path of length $r-1$.}
		\label{fig:NCL-TS-graph}
	\end{figure}

	We now show the correctness of our construction. 
	Observe that, starting from a standard token placement, no link token ever leaves its corresponding edge ($(a_1a_2)$, $(b_1b_2)$, or $(c_1c_2)$). 
	Otherwise, one of the two degree-$1$ vertices in the link component would not be $r$-dominated by the resulting token-set.
	Hence, a one-to-one mapping exists between the direction of an NCL edge and the placement of a link token: a link token is placed on either $a_1$, $b_1$, or $c_1$ if and only if the corresponding NCL edge is directed inward, and on either $a_2$, $b_2$, or $c_2$ if and only if the corresponding NCL edge is directed outward.

	The \textsc{And} gadget satisfies the same constraint as an NCL \textsc{And} vertex. 
	The top token can slide from $a_1$ to $a_2$ (which corresponds to the reversal of the top weight-$2$ (blue) edge from inward to outward) only when both bottom tokens slide in respectively from $b_2$ to $b_1$ and $c_2$ to $c_1$ (corresponding to the reversal of the two bottom weight-$1$ (red) edges from outward to inward).

	Similarly, the \textsc{Or} gadget satisfies the same constraint as an NCL \textsc{Or} vertex. The top token can slide from $a_1$ to $a_2$ (corresponding to the reversal of the top weight-$2$ (blue) edge from inward to outward) only when at least one of the two bottom tokens slides either from $b_2$ to $b_1$ or from $c_2$ to $c_1$ (corresponding to the reversal of at least one of the two bottom weight-$2$ (blue) edges from outward to inward).

	Thus, given a sequence $\langle C_s = C_1, C_2, \dots, C_p = C_t \rangle$ of NCL configurations where $C_i$ and $C_{i+1}$ are adjacent, for $1 \leq i \leq p-1$, one can construct (using the above one-to-one mapping) a $\sfTS$-sequence between the corresponding standard token placements, and vice versa.
	Therefore, our reduction is correct.

	Recall that the original NCL graph is planar and bounded bandwidth.
	Since both our gadgets are planar, having only a constant number of edges (as $r$ is a fixed constant), and of maximum degree three, 
	the constructed graph is also planar, bounded bandwidth, and of maximum degree three.
	(For example, see \figurename~\ref{fig:NCL-TS-graph}).
	Finally, as link tokens never leave their corresponding edges, any $\sfTJ$-move is also a $\sfTS$-move, and therefore our proof holds under $\sfTJ$.
\end{proof}

\subsection{Chordal Graphs}
\label{sec:chordal}

\begin{theorem}\label{thm:chordal}
	\textsc{D$r$DSR} under $\sfR \in \{\sfTS, \sfTJ\}$ on chordal graphs is $\ttPSPACE$-complete for any $r \geq 2$.
\end{theorem}
\begin{proof}
	We give a polynomial-time reduction from \textsc{Min-VCR} on general graphs.
	We remark that a similar idea has been used in~\cite{Hoang23}.
	Let $(G, C_s, C_t)$ be an instance of \textsc{Min-VCR} under $\sfR$ where $C_s, C_t$ are two minimum vertex covers of a graph $G$.
	We will construct an instance $(G^\prime, D_s, D_t)$ of \textsc{D$r$DSR} under $\sfR$ where $D_s$ and $D_t$ are two D$r$DSs of a chordal graph $G^\prime$.
	
	Suppose that $V(G) = \{v_1, \dots, v_n\}$.
	We construct $G^\prime$ from $G$ as follows.
	\begin{itemize}
		\item Form a clique in $G^\prime$ having all vertices $v_1, \dots, v_n$ of $G$
		\item For each edge $v_iv_j \in E(G)$ ($i, j \in \{1, \dots, n\}$) and for $p \in \{1, 2\}$, add a corresponding new vertex $x^p_{ij}$ and create an edge from it to both $v_i$ and $v_j$.
		Observe that $x^p_{ij} = x^p_{ji}$.
		Furthermore, create a new path $P^p_{ij}$ of length exactly $r-1$ \ReviewRevise{having $x^p_{ij}$ as an endpoint}.  
	\end{itemize}
	We define $D_s = C_s$ and $D_t = C_t$. 
	Clearly, this construction can be done in polynomial time.
	(See \figurename~\ref{fig:chordal}.)
	
	\begin{figure}[ht]
		\centering
		\includegraphics[width=0.7\textwidth]{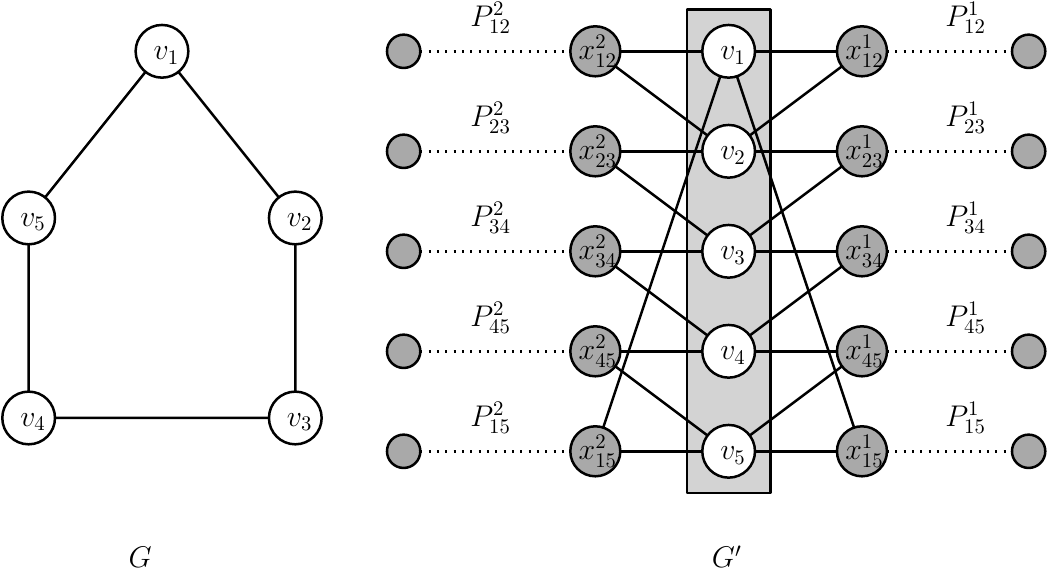}
		\caption{An example of constructing $G^\prime$ from a graph $G$ in the proof of Theorem~\ref{thm:chordal}. Vertices in $V(G^\prime) - V(G)$ are marked with the gray color. Vertices in the light gray box form a clique. Each dotted path is of length exactly $r-1$.}
		\label{fig:chordal}
	\end{figure}
	
	Now, we show that $(G^\prime, D_s, D_t)$ is indeed our desired \textsc{D$r$DSR}'s instance.
	It follows from the construction that $G^\prime$ is indeed a chordal graph.
	More precisely, if we define $H = (K \uplus S, F)$ to be the subgraph of $G^\prime$ induced by $K \cup S$ with $K = \{v_1, \dots, v_n\}$ forming a clique and $S = \bigcup_{p = 1}^2\bigcup_{\{i, j \mid v_iv_j \in E(G)\}}\{x_{ij}^p\}$ forming an independent set, then $H$ is indeed a split graph.
	Moreover, $G^\prime$ is obtained from $H$ by attaching paths to each member of $S$, which clearly results a chordal graph.
	Additionally, one can verify that any minimum vertex cover of $G$ is also a minimum dominating set of $H$ and therefore a minimum D$r$DS of $G^\prime$.
	
	It remains to show that our reduction is correct, i.e., under $\sfR \in \{\sfTS, \sfTJ\}$, $(G, C_s, C_t)$  is a yes-instance if and only if $(G^\prime, D_s, D_t)$ is a yes-instance.
	\begin{itemize}
		\item[($\Rightarrow$)] Let $\calS$ be a $\sfR$-sequence in $G$ between $C_s$ and $C_t$.
		Since any minimum vertex cover of $G$ is also a minimum D$r$DS of $G^\prime$, the sequence $\calS^\prime$ obtained by replacing each move $u \reconf[\sfR]{G} v$ in $\calS$ by $u \reconf[\sfR]{G^\prime} v$ is also a $\sfR$-sequence in $G^\prime$ between $D_s = C_s$ and $D_t = C_t$.
		
		\item[($\Leftarrow$)] Let $\calS^\prime$ be a $\sfR$-sequence in $G^\prime$ between $D_s$ and $D_t$.
		From the construction of $G^\prime$, observe that no token can be moved to a vertex in $V(G^\prime) - V(G)$; otherwise some degree-$1$ endpoint of a $P^p_{ij}$ ($p \in \{1, 2\}$, $i, j \in \{1, \dots, n\}$) would not be $r$-dominated by the resulting token-set.
		Therefore, any move $u \reconf[\sfR]{G^\prime} v$ in $\calS^\prime$ satisfies that both $u$ and $v$ are in $V(G)$, and thus can be replaced by the move $u \reconf[\sfR]{G} v$ to construct $\calS$---our desired $\sfR$-sequence between $C_s = D_s$ and $C_t = D_t$ in $G$.
	\end{itemize}
	Our proof is complete.
\end{proof}

\subsection{Bipartite Graphs}
\label{sec:bipartite}

\begin{theorem}\label{thm:bipartite}
	\textsc{D$r$DSR} under $\sfR \in \{\sfTS, \sfTJ\}$ on bipartite graphs is $\ttPSPACE$-complete for any $r \geq 2$.
\end{theorem}
\begin{proof}
	We give a polynomial-time reduction from \textsc{Min-VCR} on general graphs.
	Our reduction extends the one given by Bonamy et al. \cite{BonamyDO21} for the case $r = 1$.
	Let $(G, C_s, C_t)$ be an instance of \textsc{Min-VCR} under $\sfR$ where $C_s, C_t$ are two minimum vertex covers of a graph $G$.
	We will construct an instance $(G^\prime, D_s, D_t)$ of \textsc{D$r$DSR} under $\sfR$ where $D_s$ and $D_t$ are two D$r$DSs of a bipartite graph $G^\prime$.
	
	Suppose that $V(G) = \{v_1, \dots, v_n\}$.
	We construct $G^\prime$ from $G$ as follows.
	\begin{enumerate}[(a)]
		\item Replace each edge $v_iv_j$ by a path $P_{ij} = x_{ij}^0x_{ij}^1\dots x_{ij}^{2r}$ of length $2r$ ($1 \leq i, j \leq n$) with $x_{ij}^0 = v_i$ and $x_{ij}^{2r} = v_j$. 
		Observe that $x_{ij}^p = x_{ji}^{2r - p}$ for $0 \leq p \leq 2r$.
		\item Add a new vertex $x$ and join it to every vertex in $V(G)$.
		\item Attach a new path $P_x$ of length $r$ to $x$.
	\end{enumerate}
	We define $D_s = C_s + x$ and $D_t = C_t + x$.
	Clearly, this construction can be done in polynomial time.
	(See \figurename~\ref{fig:bipartite}.)
	
	\begin{figure}[ht]
		\centering
		\includegraphics[width=0.8\textwidth]{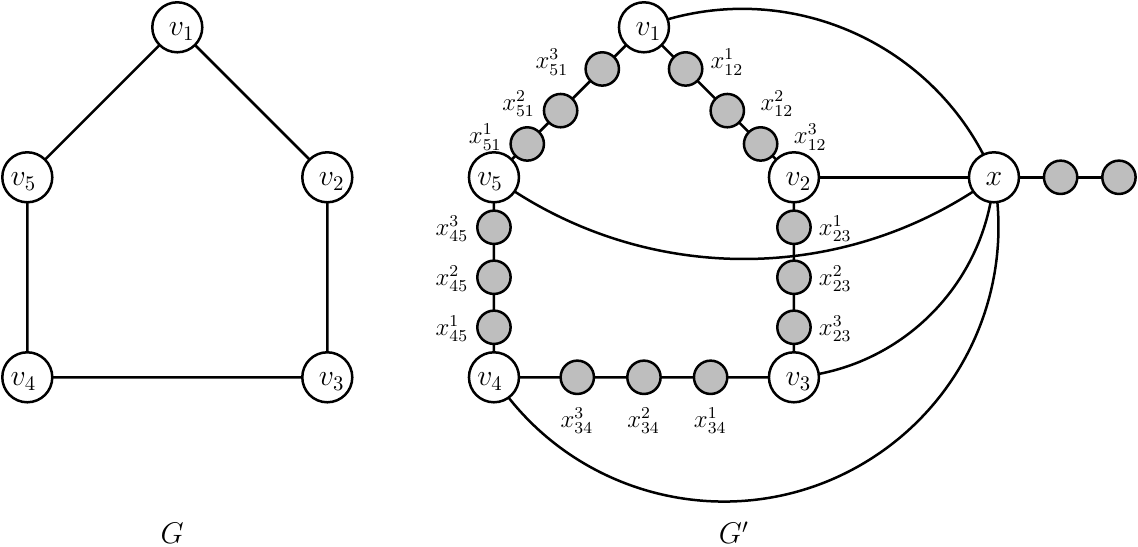}
		\caption{An example of constructing a bipartite graph $G^\prime$ from a graph $G$ when $r = 2$ in the proof of Theorem~\ref{thm:bipartite}.}
		\label{fig:bipartite}
	\end{figure}
	
	In the next two claims, we prove that our construction results in an instance of \textsc{D$r$DSR} on bipartite graphs: Claim~\ref{clm:Gp-bipartite} shows that $G^\prime$ is bipartite and Claim~\ref{clm:Cx-DrDS} implies that both $D_s$ and $D_t$ are minimum D$r$DSs of $G^\prime$.
	\begin{claim}\label{clm:Gp-bipartite}
		$G^\prime$ is a bipartite graph.
	\end{claim}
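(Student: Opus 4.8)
The plan is to exhibit an explicit proper $2$-coloring of $V(G^\prime)$, which immediately certifies bipartiteness. First I would partition the vertices according to a parity rule. For each path $P_{ij} = x_{ij}^0 x_{ij}^1 \dots x_{ij}^{2r}$, I assign $x_{ij}^p$ to class $A$ when $p$ is even and to class $B$ when $p$ is odd; in particular every original vertex $v_i = x_{ij}^0$ (and $v_j = x_{ij}^{2r}$) lands in $A$. The new vertex $x$ is placed in $B$, and the pendant path $P_x = x\, y_1 \dots y_r$ is $2$-colored alternately starting from $x \in B$, so that $y_q \in A$ precisely when $q$ is odd.

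Before checking edges, I would verify that this assignment is well-defined, since the internal path vertices are shared through the identification $x_{ij}^p = x_{ji}^{2r-p}$. Because $2r$ is even, $p$ and $2r - p$ have the same parity, so the rule ``$x_{ij}^p \in A$ iff $p$ is even'' does not depend on which of the two labels one reads; likewise each $v_i$ always carries an even index ($0$ or $2r$) no matter which incident path it is read from, so all original vertices consistently lie in $A$. This is exactly the step where the even length $2r$ of the subdivided paths is essential: it forces the two endpoints of every replaced edge into the \emph{same} class, so the fact that $G$ itself may be non-bipartite creates no conflict.

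It then remains to confirm that every edge of $G^\prime$ joins $A$ to $B$. The consecutive path edges $x_{ij}^p x_{ij}^{p+1}$ cross classes since $p$ and $p+1$ differ in parity; the edges $x v_i$ cross since $x \in B$ while every $v_i \in A$; and the pendant-path edges cross by construction. As no edge stays within a single class, $(A, B)$ is a valid bipartition and $G^\prime$ is bipartite.

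I expect the only genuinely delicate point to be the well-definedness check under the identification $x_{ij}^p = x_{ji}^{2r-p}$, together with the observation that all $v_i$ share an even index and hence can be placed in one class simultaneously; once that is settled, the edge verification is entirely routine.
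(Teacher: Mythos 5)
Your proof is correct, and it takes a genuinely different route from the paper. The paper argues via the odd-cycle characterization of bipartiteness: it observes that no cycle of $G^\prime$ uses a vertex of $V(P_x) - x$, and then splits into two cases — a cycle avoiding $x$ corresponds to a cycle $\mathcal{C}$ of $G$ and has length $2r|E(\mathcal{C})|$, while a cycle through $x$ corresponds to a path $\mathcal{P}$ of $G$ and has length $2 + 2r|E(\mathcal{P})|$ — both even. You instead exhibit an explicit proper $2$-coloring by parity of the path index, placing all original vertices $v_i$ in one class and $x$ in the other, and you correctly identify the one delicate point: well-definedness under the identification $x_{ij}^p = x_{ji}^{2r-p}$, which holds precisely because $2r$ is even. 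Your approach is more elementary and self-contained, since it avoids having to classify the cycles of $G^\prime$ (a step the paper states somewhat informally), and it isolates exactly where the even subdivision length is used — namely, to force both endpoints of every replaced edge into the same color class, so non-bipartiteness of $G$ causes no conflict. The paper's cycle-based argument, in exchange, makes the structural correspondence between cycles of $G^\prime$ and cycles/paths of $G$ explicit, which is conceptually aligned with how the rest of the reduction reasons about $G^\prime$. Both proofs are complete; yours would serve as a valid drop-in replacement.
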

	\begin{proof}
		We show that any cycle $\mathcal{C}^\prime$ in $G^\prime$ has even length and therefore $G^\prime$ is bipartite.
		Observe that no cycle of $G^\prime$ contains a vertex from $V(P_x) - x$.
		From the construction, if $\mathcal{C}^\prime$ does not contain $x$, it follows that $V(\mathcal{C}^\prime) \cap V(G)$ must form a cycle $\mathcal{C}$ of $G$.
		Therefore, $\mathcal{C}^\prime$ is of length $2r|E(\mathcal{C})|$, which is even.
		On the other hand, if $\mathcal{C^\prime}$ contains $x$, it follows that $V(\mathcal{C}^\prime) \cap V(G)$ must form a path $\mathcal{P}$ of $G$.
		Therefore, $\mathcal{C}^\prime$ is of length $2 + 2r|E(\mathcal{P})|$, which again is even.
	\end{proof}
	
	\begin{claim}\label{clm:Cx-DrDS}
		Any set of the form $C + x$, where $C$ is a minimum vertex cover of $G$, is a minimum D$r$DS of $G^\prime$.
	\end{claim}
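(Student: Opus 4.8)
The plan is to prove two things: that $C+x$ is a D$r$DS of $G^\prime$ (which simultaneously yields the upper bound $\gamma_r(G^\prime) \le \tau(G)+1$), and a matching lower bound $\gamma_r(G^\prime) \ge \tau(G)+1$. For the membership part I would argue that the single vertex $x$ already $r$-dominates \emph{almost} all of $G^\prime$. Indeed, $x$ is adjacent to every $v_i \in V(G)$, it lies at distance at most $r$ from every vertex of the attached path $P_x$ (whose length is $r$), and for each edge $v_iv_j$ it $r$-dominates $x_{ij}^p$ for all $p \le r-1$ (through $v_i$) and for all $p \ge r+1$ (through $v_j$, since $\dist_{G^\prime}(x, x_{ij}^p) = 2r-p+1$). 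The only vertices $x$ fails to reach are the \emph{midpoints} $x_{ij}^r$, each at distance exactly $r+1$ from $x$. Since $C$ is a vertex cover, for every edge $v_iv_j$ at least one endpoint, say $v_i$, lies in $C$, and $v_i$ $r$-dominates $x_{ij}^r$ because $\dist_{G^\prime}(v_i, x_{ij}^r)=r$. Hence $C+x$ $r$-dominates all midpoints as well and is a D$r$DS; applied to a minimum $C$ this gives $\gamma_r(G^\prime)\le\tau(G)+1$.

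For the lower bound, the key structural observation I would establish first is that every vertex within distance $r$ of a midpoint $x_{ij}^r$ lies on $P_{ij}$: any path leaving $P_{ij}$ must pass through an endpoint $v_i$ or $v_j$, each already at distance $r$ from $x_{ij}^r$. Consequently an \emph{internal} vertex of $P_{ij}$ can $r$-dominate no midpoint other than $x_{ij}^r$, whereas an endpoint $v_i$ $r$-dominates exactly the midpoints of edges incident to $v_i$ — which is precisely the vertex-cover behaviour. Given any D$r$DS $D$, I would partition it into $A = D \cap V(G)$, the set $B$ of tokens of $D$ placed on internal vertices of the paths $P_{ij}$, and $E' = D \cap V(P_x)$. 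The far endpoint of $P_x$ can only be $r$-dominated from within $P_x$, forcing $E' \ne \emptyset$, so $|E'| \ge 1$. I then extract a vertex cover $C'$ of $G$ by taking $A$ together with, for each edge $v_iv_j$ whose midpoint is $r$-dominated by no vertex of $A$, one arbitrary endpoint; such an edge necessarily has its midpoint $r$-dominated by a vertex of $B$ lying on $P_{ij}$.

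The main obstacle — and the point that must be watertight — is this charging step, where I need the assignment from ``edges not covered by $A$'' to tokens of $B$ to be injective. This holds because distinct edges give paths $P_{ij}$ that share only their $G$-endpoints, so their internal vertices are disjoint and the associated tokens of $B$ are distinct. This yields $|C'| \le |A| + |B|$, and hence $|D| = |A|+|B|+|E'| \ge |C'| + 1 \ge \tau(G)+1$, so $\gamma_r(G^\prime) \ge \tau(G)+1$. Combining with the upper bound gives $\gamma_r(G^\prime) = \tau(G)+1 = |C+x|$, so every set $C+x$ with $C$ a minimum vertex cover of $G$ is a minimum D$r$DS of $G^\prime$, as claimed.
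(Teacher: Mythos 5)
Your proposal is correct and follows essentially the same route as the paper: membership is shown by observing that $x$ $r$-dominates everything except the midpoints $x_{ij}^r$, which the vertex cover handles, and minimality follows from the facts that any D$r$DS must contain a vertex of $P_x$ and at least $\tau(G)$ vertices hitting the paths $P_{ij}$. Your explicit injective charging of uncovered edges to tokens on internal path vertices merely spells out the projection step that the paper leaves implicit in its final inequality $\gamma_r(G^\prime) \geq 1 + \tau(G)$.
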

	\begin{proof}
		To see this, note that, by construction, in $G^\prime$, $x$ $r$-dominates every vertex in $V(G^\prime) - \bigcup_{v_iv_j \in E(G)}\{x_{ij}^r\}$.
		Additionally, since each $x_{ij}^r$ belongs to exactly one path $P_{ij}$ and $C$ is a minimum vertex cover of $G$, it follows that in $G^\prime$, $C$ $r$-dominates $\bigcup_{v_iv_j \in E(G)}\{x_{ij}^r\}$.
		Thus, $C + x$ $r$-dominates $V(G^\prime)$, i.e., $C + x$ is a D$r$DS of $G^\prime$.
		
		It remains to show that $C + x$ is minimum.
		Indeed, it is sufficient to show that $\tau(G) + 1 = \gamma_r(G^\prime)$ where $\tau(G)$ and $\gamma_r(G^\prime)$ are respectively the size of a minimum vertex cover of $G$ and a minimum D$r$DS of $G^\prime$.
		Since $C + x$ is a D$r$DS of $G^\prime$, we have $\tau(G) + 1 = |C| + x \geq  \gamma_r(G^\prime)$.
		On the other hand, note that any minimum D$r$DS $D^\prime$ of $G^\prime$ must $r$-dominate $V(P_x)$ and therefore contains a vertex of $P_x$ (which, by the construction, does not belong to $G$).
		Moreover, from the construction of $G^\prime$, each path $P_{ij}$ ($1 \leq i, j \leq n$ and $v_iv_j \in E(G)$) is of length $2r$.
		Thus, in order to $r$-dominate all $V(P_{ij})$, $D^\prime$ needs to contain at least one vertex from each path $P_{ij}$.
		Therefore, $\gamma_r(G^\prime) = |D^\prime| \geq 1 + \tau(G)$.
		Our proof is complete.
	\end{proof}
	
	Before proving the correctness of our reduction, we prove some useful observations.
	\begin{claim}\label{clm:vertex-x}
		Let $D = C + x$ be a D$r$DS of $G^\prime$ where $C$ is a minimum vertex cover of $G$.
		Then,
		\begin{enumerate}[(a)]
			\item $D - u + y$ is not a D$r$DS of $G^\prime$, for any $u \in C$ and $y \in V(P_x) - x$.
			\item $D - x + v$ is not a D$r$DS of $G^\prime$, for any $v \in V(G^\prime) - x$.
			\item $D - u + z$ is not a D$r$DS of $G^\prime$, for $u = v_i \in C$ and $z \notin \bigcup_{\{j \mid v_j \in N_G(v_i) - C\}}V(P_{ij})$.
		\end{enumerate}
	\end{claim}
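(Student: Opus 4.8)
The plan is to reduce all three parts to two structural facts about $G^\prime$ and two minimality facts about $C$, then read off each statement by exhibiting a single vertex that the proposed move leaves un-$r$-dominated.

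First I would establish the \emph{closed-ball facts}. Writing $w_x$ for the endpoint of $P_x$ at distance $r$ from $x$, a direct distance computation along the pendant path gives $N^r_{G^\prime}[w_x] = V(P_x)$, since any route leaving $P_x$ must pass through $x$ and then into $V(G)$ at a cost of at least $r+1$; hence in $D = C + x$ the only token $r$-dominating $w_x$ is $x$ itself. Similarly, for each edge $v_iv_j \in E(G)$ the center $x_{ij}^r$ of the length-$2r$ path $P_{ij}$ satisfies $N^r_{G^\prime}[x_{ij}^r] = V(P_{ij})$ (any detour through $v_i$ or $v_j$ and then $x$ again costs more than $r$). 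Since $C \subseteq V(G)$ and the internal vertices of $P_{ij}$ are not in $V(G)$, the only tokens of $D$ lying in $V(P_{ij})$ are those among $\{v_i, v_j\}$ that belong to $C$; because $C$ is a vertex cover, at least one of them does. I would also record the \emph{separation fact}: deleting $x$ disconnects the tail $V(P_x) - x$ from the rest of $G^\prime$, and within $G^\prime - x$ any two distinct vertices of $V(G)$ are at distance $2r > r$, because their only short connection — the length-$2$ route through $x$ — is destroyed (this is where $r \geq 2$ is used).

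Next I would record the minimality facts, both consequences of $C$ being a \emph{minimum}, hence minimal, vertex cover: (A) for every $v_i \in C$ there is a neighbor $v_j \in N_G(v_i) \setminus C$, obtained from the uncovered edge witnessing that $C - v_i$ is not a vertex cover; and (B) $V(G) \setminus C \neq \emptyset$, since $V(G) \setminus C$ is a maximum independent set and $\alpha(G) \geq 1$ for any nonempty graph.

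With these in hand each part is short. For (a), given $u = v_i \in C$ use (A) to pick $v_j \in N_G(v_i) \setminus C$; then $x_{ij}^r$ was $r$-dominated in $D$ only by $v_i$, and the added vertex $y \in V(P_x) - x$ lies outside $V(P_{ij}) = N^r_{G^\prime}[x_{ij}^r]$ (the paths $P_x$ and $P_{ij}$ are vertex-disjoint), so $x_{ij}^r$ becomes un-dominated. For (c) the same $v_j$ works: $x_{ij}^r$ was dominated only by $v_i$, and the hypothesis $z \notin \bigcup_{\{j \mid v_j \in N_G(v_i) - C\}} V(P_{ij})$ gives in particular $z \notin V(P_{ij}) = N^r_{G^\prime}[x_{ij}^r]$, so $x_{ij}^r$ again dies. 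For (b) I would split on $v$: if $v \notin V(P_x)$ then $N^r_{G^\prime}[w_x]$ contains no token of $D - x + v$, killing $w_x$; if $v \in V(P_x) - x$ then use (B) to take $v_j \in V(G) \setminus C$, and the separation fact shows that after removing $x$ the vertex $v_j$ sits at distance $> r$ from every token of $C$ while the newly placed token $v$ lies in the disconnected tail, so $v_j$ is un-dominated. I expect the main obstacle to be precisely case (b) with $v \in V(P_x) - x$: here the move does \emph{not} break the obvious endpoint $w_x$, so one must instead exploit that removing $x$ globally inflates distances inside $V(G)$, and pinning down the separation fact — in particular verifying that no two cover vertices remain within distance $r$ once the length-$2$ shortcuts through $x$ vanish, which is exactly where $r \geq 2$ enters — is the delicate step, the remaining cases being routine once the closed-ball facts are fixed.
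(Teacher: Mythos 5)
Parts (a) and (c) of your proposal are sound, but part (b) in the sub-case $v \in V(P_x) - x$ has a genuine gap: you conflate removing the \emph{token} on $x$ with deleting the \emph{vertex} $x$ from the graph. The move $D - x + v$ changes only the token set; distances are still measured in $G^\prime$, where $x$ and all its incident edges remain. Since $x$ is adjacent to every vertex of $V(G)$, any two vertices of $V(G)$ are at distance at most $2$ in $G^\prime$, so for $r \geq 2$ every vertex of $V(G)$ --- in particular your witness $v_j \in V(G) \setminus C$ --- is still $r$-dominated by any token of $C$ (and $C \neq \emptyset$ whenever $G$ has an edge). Your ``separation fact'' is a true statement about the auxiliary graph $G^\prime - x$, but it is irrelevant to domination in $G^\prime$; ironically, the hypothesis $r \geq 2$, which you cite as what makes the separation argument work, is exactly what makes your witness fail.

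The statement itself is still true, and the repair is to take a path-interior witness rather than a vertex of $V(G)$, which is what the paper does. By minimality of $C$, pick an edge $v_iv_j \in E(G)$ with $C \cap \{v_i, v_j\} = \{v_i\}$ and consider $x_{ij}^{r+1}$: it is at distance $r-1$ from $v_j$ (which carries no token), $r+1$ from $v_i$, at least $r+1$ from every other vertex of $C$ (any such route passes through $x$), exactly $r$ from $x$, and at least $r+1$ from every vertex of $V(P_x) - x$. Hence $x$ is the unique token of $D$ dominating $x_{ij}^{r+1}$, and $D - x + v$ leaves it un-dominated for every $v \in V(P_x) - x$; combined with your (correct) first sub-case via $w_x$, this closes (b). For the record, your treatment of (a) and (c) is correct and unifies them nicely: (a) is just your (c)-argument with $z = y \in V(P_x) - x$, and it exhibits an explicit un-dominated vertex $x_{ij}^r$, which is more elementary than the paper's proof of (a) --- the paper instead argues via $\gamma_r(G^\prime) = \tau(G) + 1$ together with the observation that everything $r$-dominated by $y$ is already $r$-dominated by $x$.
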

	\begin{proof}
		\begin{enumerate}[(a)]
			\item Suppose to the contrary there exists $u \in C$ and $y \in V(P_x) - x$ such that $D^\prime = D - u + y$ is a D$r$DS of $G^\prime$.
			Since $|D^\prime| = |D| = \tau(G) + 1$, Claim~\ref{clm:Cx-DrDS} implies that $D^\prime$ is a minimum D$r$DS of $G^\prime$.
			On the other hand, from the construction of $G^\prime$, any vertex $r$-dominated by $y$ must also be $r$-dominated by $x$. 
			Thus, $D^\prime - y$ is also a D$r$DS of $G^\prime$, which is a contradiction.
			
			\item Suppose to the contrary there exists $v \in V(G^\prime) - x$ such that $D^\prime = D - x + v$ is a D$r$DS of $G^\prime$.
			From (a), it follows that $v \in V(P_x) - x$; otherwise some vertex of $P_x$ would not be $r$-dominated by any member of $D^\prime$.
			Since $C$ is a minimum vertex cover of $G$, it follows that there exists a pair $i, j \in \{1, \dots, n\}$ such that $v_iv_j \in E(G)$ and $C \cap \{v_i, v_j\} = \{v_i\}$; otherwise every vertex of $G$ would contain a token in $C$ and therefore $C$ would not be minimum---a contradiction.
			From the construction of $G^\prime$, $x$ is the unique vertex in $V(P_x)$ that $r$-dominates $x_{ij}^{r+1}$.
			Thus, $x_{ij}^{r+1}$ is not $r$-dominated by any vertex in $D^\prime = D - x + v$ for $v \in V(P_x) - x$, which is a contradiction.
			
			\item Let $j \in \{1, \dots, n\}$ be such that $v_j \in N_G(u) - C = N_G(v_i) - C$.
			Since $C$ is a minimum vertex cover of $G$, such a vertex $v_j$ must exist.
			From the construction of $G^\prime$, the vertex $u = v_i$ is the unique vertex in $D$ that $r$-dominates $x_{ij}^r$.
			Thus, in order to keep $x_{ij}^r$ being $r$-dominated in $G^\prime$, a token on $u$ can only be moved to some vertex in $\bigcup_{\{j \mid v_j \in N_G(v_i) - C\}}V(P_{ij})$.
		\end{enumerate}
	\end{proof}
	Intuitively, starting from a token-set of the form $C + x$ for some minimum vertex cover $C$ of $G$, Claim~\ref{clm:vertex-x}(a) means that as long as $x$ has a token, no other token can be moved in $G^\prime$ to a vertex in $P_x - x$, Claim~\ref{clm:vertex-x}(b) implies that if $x$ has a token, it will never be moved in $G^\prime$, and Claim~\ref{clm:vertex-x}(c) says that a token cannot be moved in $G^\prime$ ``too far'' from its original position.
	
	We are now ready to show the correctness of our reduction.
	(Claims~\ref{clm:TS-bipartite} and~\ref{clm:TJ-bipartite}.)
	\begin{claim}\label{clm:TS-bipartite}
		Under $\sfTS$, $(G, C_s, C_t)$  is a yes-instance if and only if $(G^\prime, D_s, D_t)$ is a yes-instance.
	\end{claim}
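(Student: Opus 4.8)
The plan is to prove both directions by translating $\sfTS$-moves between $G$ and $G^\prime$, relying on Claim~\ref{clm:Cx-DrDS} (the relevant minimum D$r$DSs are exactly the sets $C+x$ for $C$ a minimum vertex cover of $G$) together with the domination geometry of the paths $P_{ij}$. The structural fact I would record first is that, since $x$ $r$-dominates every vertex of $G^\prime$ except the midpoints $x_{ij}^r$, and since $N^r_{G^\prime}[x_{ij}^r]=V(P_{ij})$, a token-set $D$ containing $x$ is a D$r$DS of $G^\prime$ if and only if every path $P_{ij}$ carries at least one token of $D$.

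For the forward direction, given a $\sfTS$-sequence $C_s=C_0,\dots,C_m=C_t$ in $G$, I would simulate each elementary slide $v_i\to v_j$ (with $C_{\ell+1}=C_\ell-v_i+v_j$ a vertex cover) by sliding the corresponding token the whole way along $P_{ij}$, one edge at a time. While the moving token sits at some $x_{ij}^p$ it still $r$-dominates $x_{ij}^r$ because $|p-r|\le r$; every other midpoint $x_{i'k}^r$ stays $r$-dominated since the remaining tokens occupy $C_\ell-v_i$, and the vertex-cover property of $C_{\ell+1}$ forces the second endpoint of each edge incident to $v_i$ other than $v_iv_j$ to lie in $C_\ell-v_i$ (edges not incident to $v_i$ being unaffected). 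Hence every intermediate token-set is a D$r$DS, and the concatenation is a $\sfTS$-sequence between $D_s=C_s+x$ and $D_t=C_t+x$.

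The backward direction carries the real work, and I would handle it by a configuration-only projection $\pi$. Fix once and for all, for each edge $v_iv_j$ with $i<j$, the orientation $P_{ij}=x_{ij}^0(=v_i),\dots,x_{ij}^{2r}(=v_j)$, and assign a token at $x_{ij}^p$ to $v_i$ if $p\le r$ and to $v_j$ if $p\ge r+1$; let $\pi(D)$ be the set of assigned endpoints, ignoring the token on $x$ (legitimate, since Claim~\ref{clm:vertex-x}(b) shows $x$ never moves). Because every path carries a token that is assigned to one of its endpoints, $\pi(D)$ covers every edge and is thus a vertex cover; and since the $\tau(G)$ non-$x$ tokens map into $V(G)$ we get $|\pi(D)|\le\tau(G)$, which forces $\pi(D)$ to be a \emph{minimum} vertex cover and the assignment to be injective. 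Clearly $\pi(D_s)=C_s$ and $\pi(D_t)=C_t$.

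Finally I would verify that one $\sfTS$-move in $G^\prime$ changes $\pi$ by at most one slide: such a move shifts a single token along a single path edge, so it alters the image of at most one token, and only when that token crosses the central edge $x_{ij}^r x_{ij}^{r+1}$, switching its image from $v_i$ to $v_j$ (or back). Injectivity and the fixed size $\tau(G)$ then guarantee $v_j\notin\pi(D)$ beforehand, so $\pi$ changes exactly by $\pi(D)-v_i+v_j$, a legal slide in $G$ since $v_iv_j\in E(G)$, while every other move leaves $\pi$ fixed; deleting the stutters from $\pi(\calS^\prime)$ yields the required $\sfTS$-sequence between $C_s$ and $C_t$. The main obstacle, and the step I would spend the most care on, is precisely this last accounting: making the midpoint assignment well defined (hence the fixed orientation and the $p\le r$ versus $p\ge r+1$ split) and exploiting the exact equality $|\pi(D)|=\tau(G)$ to upgrade ``$\pi(D)$ is a vertex cover'' into ``consecutive projections differ by a single clean swap.'' Claims~\ref{clm:vertex-x}(a) and~\ref{clm:vertex-x}(c) can be cited as a sanity check that no token escapes the paths or strays too far, but the projection argument already subsumes them.
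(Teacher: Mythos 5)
Your proposal is correct, and on the forward direction it is essentially the paper's argument (simulate each slide $v_i \to v_j$ by $2r$ consecutive slides along $P_{ij}$); you in fact justify more explicitly than the paper why the \emph{other} midpoints $x_{ik}^r$ stay dominated, via the vertex-cover property of $C_{\ell+1}$. The backward direction, however, is a genuinely different route. The paper builds the sequence in $G$ by recording a move $v_i \to v_j$ exactly when a token in $G'$ slides from $x_{ij}^{2r-1}$ onto the endpoint $v_j$, and proves validity by contradiction: if $C - v_i + v_j$ missed an edge $v_iv_k$, domination of $x_{ik}^r$ forces a token on $P_{ik}$, which the paper then asserts (``from the construction of $\calS$'') implies $v_k \in C$. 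That assertion hides an unstated invariant --- essentially a history-based assignment of tokens to the last $V(G)$-vertex they visited --- and in particular never verifies that two tokens cannot be assigned to the same vertex, i.e.\ that each recorded slide goes to an \emph{unoccupied} vertex. Your position-based projection $\pi$ (assign a token at $x_{ij}^p$ to $v_i$ if $p \le r$, to $v_j$ if $p \ge r+1$, record a move only at middle-edge crossings) replaces all of this with a counting argument: $\pi(D)$ is a vertex cover of size at most $\tau(G)$, hence exactly $\tau(G)$ with injective assignment, and injectivity plus the size bound forces every crossing to be a clean legal slide $\pi(D) - v_i + v_j$ with $v_j \notin \pi(D)$. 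This is a real gain in rigor over the paper's version of the same direction.

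One shared weak point deserves mention: like the paper, you dismiss all moves touching $P_x$ by invoking Claim~\ref{clm:vertex-x}, but that claim is proved only for configurations of the form $C + x$ with $C \subseteq V(G)$, not for intermediate configurations having tokens on path interiors --- and there it genuinely fails. For example, with $G = K_2$ and $r = 2$, the D$2$DS $\{x_{12}^1, x\}$ is reachable from $\{v_1, x\}$, and from it the slide of the $x$-token to the first vertex of $P_x - x$ yields another D$2$DS; so the token on $x$ \emph{can} move. This does not sink your proof, because your own counting framework repairs it where the paper's Case~1 cannot: the far endpoint of $P_x$ has $N^r_{G'}[\cdot] = V(P_x)$ and each midpoint has $N^r_{G'}[x_{ij}^r] = V(P_{ij})$, so every reachable configuration has exactly one token on $V(P_x)$ and exactly $\tau(G)$ tokens hitting the paths $P_{ij}$ (a hitting set of the paths needs $\tau(G)$ vertices); moreover no path token can ever enter $V(P_x)$, since the remaining $\tau(G)-1$ tokens could not hit all paths, and the $V(P_x)$-token can never leave, since $P_x$'s far endpoint would become undominated. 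Defining $\pi$ to ignore the unique $V(P_x)$-token wherever it sits (rather than assuming it is parked on $x$), every move in $\calS'$ is either a $P_x$-internal stutter or a path-edge slide, and your argument goes through verbatim --- indeed it closes a gap that the paper's own proof leaves open.
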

	\begin{proof}
		\begin{itemize}
			\item[($\Rightarrow$)] Suppose that $\calS$ is a $\sfTS$-sequence in $G$ between $C_s$ and $C_t$.
			We construct a sequence $\calS^\prime$ of token-slides in $G^\prime$ between $D_s$ and $D_t$ by replacing each $v_i \reconf[\sfTS]{G} v_j$ in $\calS$ with the sequence $\calS_{ij} = \langle v_i \reconf[\sfTS]{G^\prime} x_{ij}^1, x_{ij}^1 \reconf[\sfTS]{G^\prime} x_{ij}^2, \dots, x_{ij}^{2r-1} \reconf[\sfTS]{G^\prime} x_{ij}^{2r} \rangle$, where $i, j \in \{1, \dots, n\}$ and $v_iv_j \in E(G)$.
			Intuitively, sliding a token $t_{ij}$ from $v_i$ to its neighbor $v_j$ in $G$ corresponds to sliding $t_{ij}$ from $v_i$ to $v_j$ in $G^\prime$ along the path $P_{ij}$.
			Since $x$ always $r$-dominates $V(G^\prime) - \bigcup_{v_iv_j \in E(G)}\{x_{ij}^r\}$ and after each move in $\calS_{ij}$ the token $t_{ij}$ always $r$-dominates $x_{ij}^r$, it follows that $\calS_{ij}$ is indeed a $\sfTS$-sequence in $G^\prime$.
			Thus, $\calS^\prime$ is our desired $\sfTS$-sequence.
			
			\item[($\Leftarrow$)] Let $\calS^\prime$ be a $\sfTS$-sequence in $G^\prime$ between $D_s$ and $D_t$.
			We describe how to construct a $\sfTS$-sequence $\calS$ in $G$ between $C_s$ and $C_t$.
			Initially, $\calS = \emptyset$.
			For each move $u \reconf[\sfTS]{G^\prime} v$ ($u \neq v$) in $\calS^\prime$, we consider the following cases:
			\begin{enumerate}[{\bf {Case} 1:}]
				\item {\bf Either $u \in V(P_x)$ or $v \in V(P_x)$.} 
				Since $\calS^\prime$ starts with $D_s = C_s + x$, Claim~\ref{clm:vertex-x} ensures that no token can be placed on some vertex in $P_x - x$ and $x$ always contains a token.
				As a result, this case does not happen.
				
				\item {\bf $u = x_{ij}^p$ and $v = x_{ij}^{p+1}$ for $0 \leq p \leq 2r$.} 
				Recall that in $G^\prime$, we have $x_{ij}^0 = v_i$ and $x_{ij}^{2r} = v_j$. 
				Additionally, from our construction, since $x_{ij}^p \in V(G^\prime)$ has a token, so does $v_i \in V(G)$.
				\ReviewRevise{If $p = 2r-1$, we append the move $w \reconf[\sfTS]{G} v_j$ to $\calS$, where $w \in \{v_i, v_j\}$ is the most recent vertex from which the token on $x^{2r - 1}_{ij}$ entered $P_{ij}$.}
				Otherwise, we do nothing.
			\end{enumerate}
			To see that $\calS$ is indeed a $\sfTS$-sequence in $G$, it suffices to show that if $C$ is the minimum vertex cover obtained right before the move $v_i \reconf[\sfTS]{G} v_j$ then $C^\prime = C - v_i + v_j$ is also a minimum vertex cover of $G$.
			Suppose to the contrary that $C^\prime$ is not a vertex cover of $G$.
			It follows that there exists $k \in \{1, \dots, n\}$ such that $v_iv_k \in E(G)$, $v_k \neq v_j$, and $v_k \notin C$.
			Intuitively, the edge $v_iv_k$ is not covered by any vertex in $C^\prime$.
			On the other hand, let $D$ be the D$r$DS of $G^\prime$ obtained right before the move $x_{ij}^{2r-1} \reconf[\sfTS]{G^\prime} x_{ij}^{2r} = v_j$.
			Since $D^\prime = D - x_{ij}^{2r-1} + x_{ij}^{2r}$ is also a D$r$DS of $G^\prime$, there must be some vertex in $D^\prime$ that $r$-dominates $x_{ik}^r$, which implies $V(P_{ik}) \cap D^\prime \neq \emptyset$.
			However, from the construction of $\calS$, it follows that $v_k \in C$, which is a contradiction.
			\ReviewRevise{To see that $v_k \in C$, observe that the vertex in $D^\prime$ that $r$-dominates $x_{ik}^r$ must be an internal vertex of $P_{ik}$. As a result, the token $t$ on it must have entered $P_{ik}$ from either $v_i$ or $v_k$. Furthermore, it is impossible that $t$ entered from $v_i$, since the token on $x_{ij}^{2r-1}$ is moved to $x_{ij}^{2r} = v_j$ and forms the corresponding move $v_i \reconf[\sfTS]{G} v_j$ in $\calS$. This means that the token on $x_{ij}^{2r-1}$ must have entered $P_{ij}$ from $v_i$. Note that in order to validly move the token on $v_i$ toward $x_{ij}^{2r-1}$, every vertex $x_{\ell k}^r$ where $v_\ell \in N_G(v_i) \setminus \{v_j\}$ must be $r$-dominated in $G'$. Thus, $t$ must have entered from $v_k$ (at that time, to maintain the $r$-dominating property, a token remains on $v_i$ and has not yet moved to $x_{ij}^{2r-1}$), which implies that $v_k \in C$ by our construction.}
			Thus, $C^\prime$ is a vertex cover of $G$.
			Since $|C^\prime| = |C|$, it is also minimum.
		\end{itemize}
	\end{proof}
	
	\begin{claim}\label{clm:TJ-bipartite} %
		Under $\sfTJ$, $(G, C_s, C_t)$  is a yes-instance if and only if $(G^\prime, D_s, D_t)$ is a yes-instance.
	\end{claim}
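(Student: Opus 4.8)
The plan is to prove both directions of the equivalence, reusing the structural facts already established in Claim~\ref{clm:Cx-DrDS} and Claim~\ref{clm:vertex-x}. For the forward direction ($\Rightarrow$), the simulation under $\sfTJ$ is even more direct than in the $\sfTS$ case of Claim~\ref{clm:TS-bipartite}: given a $\sfTJ$-sequence $\calS$ in $G$ between the minimum vertex covers $C_s$ and $C_t$, every intermediate set $C$ is a minimum vertex cover, so by Claim~\ref{clm:Cx-DrDS} the set $C + x$ is a minimum D$r$DS of $G^\prime$. A single jump $v_i \reconf[\sfTJ]{G} v_j$ between consecutive covers therefore lifts verbatim to $v_i \reconf[\sfTJ]{G^\prime} v_j$ (the token on $x$ never moves), and the resulting sequence $\calS^\prime$ is a valid $\sfTJ$-sequence in $G^\prime$ between $D_s = C_s + x$ and $D_t = C_t + x$.

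For the reverse direction ($\Leftarrow$), I would mirror the proof of Claim~\ref{clm:TS-bipartite}, adapting it to jumps. Let $\calS^\prime = \langle D_s = D_0, \dots, D_q = D_t\rangle$. First I would pin down the global structure using Claim~\ref{clm:vertex-x}: part~(b) forces the token on $x$ to stay fixed throughout, and a short counting argument (the $\tau(G)$ non-$x$ tokens must $r$-dominate the $|E(G)|$ ``midpoints'' $x_{ij}^r$, and the minimum number of tokens able to do so equals $\tau(G)$) shows that no token is ever wasted on $V(P_x) - x$, which $r$-dominates none of the midpoints, and that each midpoint $x_{ij}^r$ is covered by \emph{exactly} one token, since a double cover would let $\tau(G)-1$ tokens $r$-dominate all midpoints. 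Because the only vertices within distance $r$ of $x_{ij}^r$ are those of $V(P_{ij})$, this ``no double-coverage'' fact forces every non-$x$ token either to sit on a clique vertex of $V(G)$ or to transit the interior of a single path $P_{ij}$, its sole duty being to keep $x_{ij}^r$ covered; in particular a token entering $P_{ij}$ from the endpoint $v_i$ can only re-emerge at $v_i$ or at $v_j$.

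With this structure in hand, I would build $\calS$ in $G$ by scanning $\calS^\prime$ and maintaining a current minimum vertex cover $C$ together with a ``home'' for each token (the clique vertex $v_\ell$ it occupies, or the entry endpoint $v_i$ of the path it is currently transiting). I record a move $v_i \reconf[\sfTJ]{G} v_j$ exactly when a token completes a transit of $P_{ij}$ by landing on the far endpoint $v_j$, or when it jumps directly between two clique vertices, and I do nothing for jumps that merely push a token into a path, move it within a path, or return it to its origin. The validity of each recorded move is then argued by contradiction precisely as in Claim~\ref{clm:TS-bipartite}: if $C - v_i + v_j$ failed to be a vertex cover, some edge $v_iv_k$ with $v_k \notin C$ would be left uncovered, yet the corresponding D$r$DS in $G^\prime$ must still $r$-dominate $x_{ik}^r$, forcing $V(P_{ik})$ to contain a token and hence $v_k \in C$ --- a contradiction. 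Since each recorded step preserves the cover size, the constructed $\calS$ is a $\sfTJ$-sequence in $G$ between $C_s$ and $C_t$.

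I expect the main obstacle to be the extra freedom afforded by $\sfTJ$ relative to $\sfTS$: a token may jump non-adjacently, so a priori it could leap into an unrelated path or onto a distant clique vertex. The crux is therefore the structural normalization in the second paragraph --- establishing that the token on $x$ is frozen, that no token can hide on $V(P_x) - x$, and that the exactly-one-coverage of each midpoint makes each transiting token's ``home'' edge well-defined. Once these invariants are secured, the bookkeeping of recorded moves and the contradiction argument for their validity follow the $\sfTS$ template of Claim~\ref{clm:TS-bipartite} with only cosmetic changes.
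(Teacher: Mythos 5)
Your overall skeleton matches the paper's proof: the forward direction lifts each jump verbatim (justified by Claim~\ref{clm:Cx-DrDS}), and the reverse direction projects each jump in $G^\prime$ onto a move in $G$ via a ``home'' bookkeeping, with validity of recorded moves argued by the same midpoint-domination contradiction the paper uses. However, your structural normalization---which you correctly identify as the crux---rests on a claim that is false as stated: it is \emph{not} true that each midpoint $x_{ij}^r$ is covered by exactly one token. A minimum vertex cover of $G$ may contain both endpoints of an edge (e.g., $G = P_4$ with cover $\{v_2, v_3\}$, or a triangle with cover $\{v_1,v_2\}$), so already the initial configuration $D_s = C_s + x$ can cover the midpoint $x_{ij}^r$ twice, by tokens on $v_i$ and on $v_j$. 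Your justification (``a double cover would let $\tau(G)-1$ tokens $r$-dominate all midpoints'') breaks precisely in this case: each of the two endpoint tokens may be the unique coverer of \emph{other} midpoints, so neither can be deleted. Consequently the exactly-once invariant cannot be maintained, and the confinement property you derive from it (a token inside $P_{ij}$ can only re-emerge at $v_i$ or $v_j$) is left without proof.

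The good news is that the gap is repairable with your own counting idea, because the statement you actually need is weaker: \emph{if a token occupies an interior vertex of $P_{ij}$, it is the only token on $V(P_{ij})$}. An interior token $r$-dominates no midpoint other than $x_{ij}^r$; hence if a second token sat on $V(P_{ij})$, the interior token would be redundant, the remaining $\tau(G)-1$ non-$x$ tokens would dominate every midpoint, and projecting each such token to an endpoint of its path would yield a vertex cover of size $\tau(G)-1$, a contradiction. This restricted form (together with a check that tokens can only \emph{enter} a path through one of its endpoints, by the same redundancy argument) suffices for your home bookkeeping and for the final contradiction argument. For comparison, the paper avoids the global coverage invariant altogether: it uses Claim~\ref{clm:vertex-x}(c) to control where a token leaving a clique vertex can go (only into paths toward uncovered neighbors), and rules out cross-path jumps by analyzing the first such move in the sequence, concluding that the jumping token would be the unique token on its path and the move would leave $x_{ij}^r$ undominated. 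So your route can be made to work and is close in spirit, but as written the key invariant is false and must be weakened before the rest of your argument goes through.
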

	\begin{proof}
		\begin{itemize}
			\item[($\Rightarrow$)] Suppose that $\calS$ is a $\sfTJ$-sequence in $G$ between $C_s$ and $C_t$.
			It follows from Claim~\ref{clm:Cx-DrDS} that the sequence $\calS^\prime$ of token-jumps obtained from $\calS$ by replacing each move $u \reconf[\sfTJ]{G} v$ in $\calS$ by $u \reconf[\sfTJ]{G^\prime} v$ is a $\sfTJ$-sequence between $D_s = C_s + x$ and $D_t = C_t + x$.
			
			\item[($\Leftarrow$)] On the other hand, let $\calS^\prime$ be a $\sfTJ$-sequence in $G^\prime$ between $D_s$ and $D_t$.
			We describe how to construct a $\sfTJ$-sequence $\calS$ in $G$ between $C_s$ and $C_t$.
			Initially, $\calS = \emptyset$.
			For each move $u \reconf[\sfTJ]{G^\prime} v$ ($u \neq v$) in $\calS^\prime$, we consider the following cases:
			\begin{enumerate}[{\bf {Case} 1:}]
				\item {\bf Either $u \in V(P_x)$ or $v \in V(P_x)$.} As before, it follows from Claim~\ref{clm:vertex-x} that this case does not happen.
				
				\item {\bf $u = x_{ij}^p$ and $v = x_{ij}^{q}$ for $0 \leq p, q \leq 2r$.} 
				Recall that in $G^\prime$ we have $x_{ij}^0 = v_i$ and $x_{ij}^{2r} = v_j$. 
				If $q = 2r$, we append the move $v_i \reconf[\sfTJ]{G} v_j$ to $\calS$.
				Otherwise, we do nothing.
				
				\item {\bf $u = x_{ij}^p$ and $v = x_{k\ell}^{q}$ for two edges $v_iv_j$ and $v_kv_\ell$ in $G$ and $0 \leq p, q \leq 2r$.}
				Note that if ($\star$)~$v_iv_j$ and $v_kv_\ell$ are adjacent edges in $G$ and either $u$ or $v$ is their common endpoint, we are back to {\bf Case~2}.
				Thus, let's assume ($\star$) does not happen and consider the \textit{first} move of this type.
				From Claim~\ref{clm:vertex-x}, it must happen that before the move $x_{ij}^p \reconf[\sfTJ]{G^\prime} x_{k\ell}^{q}$, some move in $\calS^\prime$ places a token on $u = x_{ij}^p$ and moreover such a token must come from either $v_i$ or $v_j$.
				Additionally, again by Claim~\ref{clm:vertex-x}, if the token comes from $v_i$ then $v_j$ contains no other token, and vice versa.
				Let $D$ be the D$r$DS obtained right before the move $x_{ij}^p \reconf[\sfTJ]{G^\prime} x_{k\ell}^{q}$.
				Now, since $x_{ij}^p \reconf[\sfTJ]{G^\prime} x_{k\ell}^{q}$ is the first move of its type, it follows that the path $P_{ij}$ contains exactly one token from $D$ which is placed on $x_{ij}^p$.
				However, this means we cannot perform $x_{ij}^p \reconf[\sfTJ]{G^\prime} x_{k\ell}^{q}$; otherwise $x_{ij}^r$ would not be $r$-dominated by the resulting token-set.
				Thus, this case does not happen unless ($\star$) is satisfied.
			\end{enumerate}
			To see that $\calS$ is indeed a $\sfTJ$-sequence in $G$, it suffices to show that if $C$ is the minimum vertex cover obtained right before the move $v_i \reconf[\sfTJ]{G} v_j$ then $C^\prime = C - v_i + v_j$ is also a minimum vertex cover of $G$.
			Suppose to the contrary that $C^\prime$ is not a vertex cover of $G$.
			It follows that there exists $k \in \{1, \dots, n\}$ such that $v_iv_k \in E(G)$, $v_k \neq v_j$, and $v_k \notin C$.
			Intuitively, the edge $v_iv_k \in E(G)$ is not covered by any vertex in $C^\prime$.
			On the other hand, let $D$ be the D$r$DS of $G^\prime$ obtained right before the move $x_{ij}^p \reconf[\sfTJ]{G^\prime} x_{ij}^{2r} = v_j$.
			Since $D^\prime = D - x_{ij}^{p} + x_{ij}^{2r}$ is also a D$r$DS of $G^\prime$, there must be some vertex in $D^\prime$ that $r$-dominates $x_{ik}^r$, which implies $V(P_{ik}) \cap D^\prime \neq \emptyset$.
			However, from the construction of $\calS$, it follows that $v_k \in C$, which is a contradiction.
			Thus, $C^\prime$ is a vertex cover of $G$.
			Since $|C^\prime| = |C|$, it is also minimum.
		\end{itemize}
	\end{proof}
	Our proof is complete.
\end{proof}

\section{Concluding Remarks}
\label{sec:conclude}

In this paper, we provide an initial picture of the computational complexity of \textsc{D$r$DSR} ($r \geq 2$) under $\sfTS$ and $\sfTJ$ on different graph classes.
We extended several known results for $r = 1$ and provided a complexity dichotomy of \textsc{D$r$DSR} on split graphs: the problem is $\ttPSPACE$-complete for $r = 1$ but can be solved in polynomial time for $r \geq 2$.
The following questions remain unsolved:
\begin{enumerate}[{\bf {Question} 1:}]
	\item What is the complexity of \textsc{D$r$DSR} ($r \geq 2$) under $\sfTS$ on trees?
	\item What is the complexity of \textsc{D$r$DSR} ($r \geq 2$) under $\sfTS$ on interval graphs?
\end{enumerate}

\section*{Acknowledgments}
\ExtraMod{We thank the anonymous reviewers for their valuable comments and suggestions that helped improve eealier versions of this paper.
This work was partially completed while Niranka Banerjee was a member at the Research Institute of Mathematical Sciences, Kyoto University and Duc A. Hoang was working as a postdoctoral researcher at the Vietnam Institute for Advanced Study in Mathematics (VIASM).
Niranka Banerjee and Duc A. Hoang gratefully acknowledge the support and hospitality of RIMS and VIASM.}

\section*{Funding}
Niranka Banerjee was funded by Japan Society for the Promotion of Science (JSPS) KAKENHI Grant Number JP20H05967. 
Duc A. Hoang was partially funded by University of Science, Vietnam National University, Hanoi under project number TN.23.04 \ExtraMod{and by the Vietnam National University, Hanoi under the project QG.25.07 ``A study on reconfiguration problems from algorithmic and graph-theoretic perspectives''}.

\printbibliography

\end{document}